\newtheorem{theorem}{Theorem}[section]
\newtheorem{lemma}[theorem]{Lemma}
\newtheorem{proposition}[theorem]{Proposition}
\newtheorem*{CPL}{Dujardin-Favre Classification of Passivity Locus \cite[Theorem 4]{dujardin}}
\newcommand{\beq}{\begin{equation}}
\newcommand{\eeq}{\end{equation}}
\newcommand{\beqs}{\begin{eqnarray}}
\newcommand{\eeqs}{\end{eqnarray}}
\begin{document}

\title{$q$-Plane Zeros of the Potts Partition Function  \\ on Diamond 
Hierarchical Graphs}

\begin{author}{Shu-Chiuan Chang}
\email{scchang@mail.ncku.edu.tw}
\address{Physics Department,
National Cheng Kung University, Tainan 70101, Taiwan}
\end{author}

\begin{author}{Roland K. W. Roeder}
\email{rroeder@math.iupui.edu}
\address{ %
IUPUI Department of Mathematical Sciences\\
LD Building, Room 224Q\\
402 North Blackford Street\\
Indianapolis, Indiana 46202-3267\\
 United States }
\end{author}

\begin{author}{Robert Shrock}
\email{robert.shrock@stonybrook.edu}
\address{C. N. Yang Institute for Theoretical Physics and \\
Department of Physics and Astronomy \\
Stony Brook University, Stony Brook, NY 11794 \\
United States }
\end{author}

\begin{abstract}

  We report exact results concerning the zeros of the partition function of the
  Potts model in the complex $q$ plane, as a function of a temperature-like
  Boltzmann variable $v$, for the $m$'th iterate graphs $D_m$ of the Diamond
  Hierarchical Lattice (DHL), including the limit $m \to \infty$. In this limit
  we denote the continuous accumulation locus of zeros in the $q$ planes at
  fixed $v = v_0$ as ${\mathcal B}_q(v_0)$.  We apply theorems from complex
  dynamics to establish properties of ${\mathcal B}_q(v_0)$.  For $v=-1$ (the
  zero-temperature Potts antiferromagnet, or equivalently, chromatic
  polynomial), we prove that ${\mathcal B}_q(-1)$ crosses the real-$q$ axis at
  (i) a minimal point $q=0$, (ii) a maximal point $q=3$ (iii) $q=32/27$, (iv) a
   cubic root that we give, with the value $q = q_1 = 1.6388969..$, and 
  (v) an infinite number of points smaller than $q_1$, converging to $32/27$ 
  from above.  Similar results hold for ${\mathcal B}_q(v_0)$ for any 
  $-1 < v < 0$ (Potts antiferromagnet at nonzero
  temperature).  The locus ${\mathcal B}_q(v_0)$ crosses the real-$q$ axis at
  only two points for any $v > 0$ (Potts ferromagnet).  We also provide
  computer-generated plots of ${\mathcal B}_q(v_0)$ at various values of $v_0$
  in both the antiferromagnetic and ferromagnetic regimes and compare them to
  numerically computed zeros of $Z(D_4,q,v_0)$.

\vspace{0.1in}
\noindent
PACS numbers: 02.10.Ox,05.45.Df,64.60.De,64.60.al

\end{abstract}

\maketitle

\section{Introduction} 
\label{intro}

We derive some exact results concerning the zeros in the complex $q$ plane of
the partition function, $Z(D_m,q,v)$, for the $q$-state Potts model on Diamond
Hierarchical Graphs $D_m$ at various fixed values of a temperature-like
Boltzmann variable $v$.  We also derive exact results concerning the continuous
accumulation set ${\mathcal B}_q(v_0)$ of these zeros on the limiting Diamond
Hierarchical Lattice $D_\infty$, again at various fixed values of $v = v_0$,
and we present computer-generated images of these loci.

The Diamond Hierarchical Graphs $D_m$ are defined by starting with a graph
$D_0$ consisting of two vertices (sites) and an edge (bond) joining them. The
iterative graphical transformation replaces this single edge by four edges and
two additional vertices, as shown in Fig. \ref{dhlfigure}, yielding the next
iterate, $D_1$ (which has the appearance of a diamond, whence the name).  Fig.
\ref{dhlfigure} also shows the next iterate, $D_2$. We shall use the term
Diamond Hierarchical Lattice (DHL) to refer to the
(formal) limit $\lim_{m \to \infty} D_m \equiv D_\infty$.
It is a self-similar, fractal object.

% Fig. 1 
\begin{figure}
\begin{center}
\scalebox{1.2}{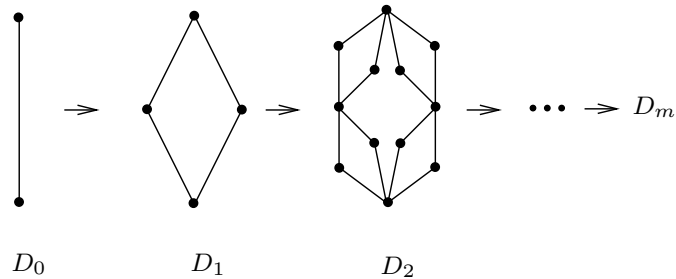}
\end{center}
\caption{Diamond hierarchical graphs $D_m$ for $m=0, \ 1, \ 2$.}
\label{dhlfigure}
\end{figure}

We recall the procedure for calculating the Hausdorff dimension $d_H$ of a
hierarchical lattice $G_\infty$.  If the renormalization-group (RG)
transformation reduces the length of each edge by a blocking factor of $b$ and
gives rise to $N$ copies of the original graph, then $N = b^{d_H}$, so
\cite{gk,gam} $d_H = \ln(N)/\ln(b)$.  In the case of the iteration procedure
for DHL, one has $b=2$ and $N=4$, yielding the well-known result that
\beq
d_{H}(D_\infty)=2 \ .
\label{dhldim}
\eeq For this reason, we interpret the Diamond Hierarchical Lattice $D_\infty$
as being two-dimensional.  (See also \cite[Appendix E.3]{blr} for an
interpretation of the Diamond Hierarchical Lattice as an anisotropic version of
the $\mathbb{Z}^2$ lattice.)

The $q$-state Potts model has been of longstanding interest in the
area of phase transitions and critical phenomena.  On a graph $G$, the
partition function of this model, denoted $Z(G,q,v)$, is a polynomial
in two variables, $q$ and 
\beq
v \equiv y-1 \ .
\label{vy}
\eeq
In the original statistical physics formulation, $q$ is a positive integer
specifying the number of possible values of a classical spin defined at a given
site of a lattice, $\sigma(i) \in \{1,...,q\}$, and $y$ is a non-negative
temperature-like Boltzmann variable.  (Throughout this paper we will primarily
use the variable $v$ because certain expressions are simpler in $v$ rather than
$y$.)  As is evident from the expression (\ref{cluster}) given below for this
partition function, it is a polynomial in both $q$ and $v$, and one can
generalize both of these variables to complex values. Indeed, this
generalization is necessary when analyzing the zeros of the partition function
in the $q$ plane for a given value of $v$ and in the $v$ plane for a given
value of $q$.

Part of the interest in the Potts model partition function stems from the fact
that it is equivalent to a function of central importance in mathematical graph
theory, namely the Tutte polynomial, $T(G,x,y)$ (see Eq.\ (\ref{zt})
below). For some basic background on graph theory and the Tutte and chromatic
polynomials, see, e.g., \cite{biggs}-\cite{dong}.

On a family of $n$-vertex lattice graphs, as $n \to \infty$, an infinite subset
of the zeros of $Z(G,q,v)$ merge to form certain continuous loci.  In this $n
\to \infty$ limit, we denote the continuous accumulation locus of zeros of
$Z(G,q,v)$
\begin{itemize}
\item[(i)] in the complex $q$ plane, for a given $v = v_0$, as ${\mathcal B}_q(v_0)$, and
\item[(ii)] in the complex $v$ 
plane, for a given $q = q_0$, as ${\mathcal B}_v(q_0)$.
\end{itemize}
In this paper we will primarily be interested in the $q$-plane loci ${\mathcal
  B}_q(v_0)$, however it will occasionally useful to relate them to the
$v$-plane loci ${\mathcal B}_v(q_0)$ and to discuss similarities and
differences between these loci.

Although no exact closed-form expression for $Z(G,q,v)$ with general $q$ and
$v$, or for the corresponding dimensionless reduced free energy has been
obtained on (the thermodynamic limit of) any regular lattice graph $G$ of
spatial dimension $d \ge 2$, it has been possible to characterize the
renormalization group (RG) action for the model exactly on certain hierarchical
lattice graphs, including the Diamond Hierarchical Graphs.
By performing a sum over spins at each
iterative step, one can construct an exact RG transformation
relating $Z(D_{m+1},q,v)$ to $Z(D_m,q,v')$, where $v'$ is related to
$v$ according to a function $v'=F_q(v)$ or equivalently, $y'=r_q(y)$,
(see Eqs.\ (\ref{vp}) and (\ref{yp}) below).  This result
follows because of the self-similarity of the Diamond Hierarchical Lattice. The properties
of this model in the $m \to \infty$ limit are then determined by the
properties of the iterated function $F_q$ or equivalently $r_q$.  The
properties of iterated analytic functions have been of considerable importance
in mathematics (e.g., \cite{beardon}-\cite{milnor} and physics
\cite{stanley_taylor}-\cite{bunde_havlin}). As will be clear, there are also
interesting connections with complex analysis (see, e.g., \cite{remmert}).

There have been many studies of spin models on hierarchical lattices, including
\cite{berker}-\cite{yangzeng}, which primarily analyze the zeros of
$Z(D_m,q,v)$ in the complex plane of the temperature-like Boltzmann variable
$v$.  (A notable exception is \cite{blr} where the Lee-Yang (complex magnetic
field) and Lee-Yang-Fisher (complex magnetic field and complex $v$
simultaneously) zeros are studied for the Diamond Hierarchical Lattice.)  It
was natural for these previous works to focus on the $v$-plane zeros and their
continuous accumulation set as $m \to \infty$, ${\mathcal B}_v(q_0)$, for a
given $q=q_0$, because it is directly related to the iteration of the RG
transformation $F_{q_0}(v)$ at fixed value of the parameter $q_0$.  Indeed, (in
most settings) ${\mathcal B}_v(q_0)$ is the Julia set in the $v$ plane for the
mapping $F_{q_0}(v)$.

Considerably less attention has been paid to the zeros of $Z(D_m,q,v)$ in the
$q$ plane and their continuous accumulation set as $m \to \infty$, ${\mathcal
B}_q(v_0)$, at fixed values of $v=v_0$. Rather than being the Julia set of a
rational mapping, ${\mathcal B}_q(v_0)$ is related to the parameter dependence
of the iterates of $F_q(v)$ for the fixed choice of initial condition $v=v_0$.
(This will be elaborated in Section \ref{complex_dynamics}.) We have noted
above the study of the zeros in the $q$ plane for the Sierpinski gasket
\cite{sg}.   In the case of the Diamond Hierarchical Lattice
the locus ${\mathcal B}_q(-1)$ has been recently studied in \cite{WQYQG},
\cite{yangzeng}, and~\cite{rr}.  Wang, Qui, Yin, Qiao, and Gao  \cite{WQYQG}
and Yang and Zeng \cite{yangzeng} proved that the bifurcation locus
${\mathcal M}$ for the renormalization mapping $F_q(v)$ given in Eqn.\
(\ref{zdhl}) is connected.  In Ref. \cite{rr}, Chio and Roeder use techniques
from complex dynamics to show that ${\mathcal M} \subset {\mathcal B}_q(-1)$
for the DHL and, in particular to prove for the DHL that the Hausdorff
dimension of ${\mathcal B}_q(-1)$ is 2.  The paper \cite{rr} also provides a
quantitative description of the limiting behavior of the chromatic zeros in
terms of measure theory.

In this paper we will study properties of the loci ${\mathcal B}_q(v_0)$ at
various choices of $v_0$, beyond the case of the chromatic zeros $v_0 = -1$.
Using techniques from complex dynamics similar to those in \cite{rr}, we will
make computer images (see Figures \ref{bq_vm1_figure} - \ref{bq_v99_figure}) of
these loci, which we relate to numerical computations of the 172 zeros of
$Z(D_4,q,v_0)$.  We will also rigorously determine properties of the
intersection between ${\mathcal B}_q(v_0)$ and the real $q$-axis.  The latter
results are new even in the case of the chromatic zeros $v=-1$. We will also
make use of results from statistical mechanics to gain further insight into
the properties of ${\mathcal B}_q(v_0)$. 

This paper is organized as follows. In Sections
\ref{background}-\ref{transformation} we review some relevant background on the
Potts model, the family of Diamond Hierarchical Graphs, $D_m$, and the
iterative RG transformation $F_q(v)$ that relates $Z(D_{m+1},q,v)$ to
$Z(D_m,q,v')$.  In Section \ref{complex_dynamics} we present some necessary
background in complex dynamics and use it to relate the locus ${\mathcal
  B}_q(v_0)$ to the ``active parameters'' $q$ for the RG transformation.  In
Sections \ref{chromaticzeros} and \ref{qzerosafm} we present our results on
zeros of the partition function in the $q$ plane for the Potts antiferromagnet
at zero and finite temperature, respectively.  We also state Theorem
\ref{THM:AFM} describing the intersections of ${\mathcal B}_q(v_0)$ with the
real $q$-axis in this regime ($-1 \leq v_0 < 0$).  In Section \ref{qzerosfm}
presents results on the zeros in the $q$ plane for the Potts ferromagnet,
including the statement of Theorem \ref{THM:FM} describing the intersections of
${\mathcal B}_q(v_0)$ with the real $q$-axis in this regime ($v_0 >
0$). Section \ref{SEC:PROOFS} is devoted to proofs of Theorems \ref{THM:AFM}
and \ref{THM:FM}.  Section \ref{yzeros} contains our results on partition
function zeros in the $v$ plane for various values of $q$. Our conclusions are
summarized in Section~\ref{conclusions}, and some auxiliary information is
given in Appendix \ref{q0_appendix}.

% =======================================================================

\section{Background from Graph Theory and Statistical Physics} 
\label{background}

In this section we discuss some relevant background from graph theory
and statistical physics.  A graph $G=(V,E)$ is defined by its set
$V$ of vertices (= sites) and its set $E$ of edges (= bonds).  We denote
$n=n(G)=|V|$ and $e(G)=|E|$ as the number of vertices and edges of
$G$. At temperature $T$, the partition function of the $q$-state Potts
model is given by 
\beq Z = \sum_{ \sigma } e^{-\beta {\mathcal H}(\sigma)} \ ,
\eeq
with the Hamiltonian
\beq
{\mathcal H}(\sigma) = -J \sum_{e_{ij}} \delta_{\sigma(i), \sigma(j)} \ .
\label{ham}
\eeq
Here, the sum is taken over all edges $e_{ij}$ of $G$, with $i$ and $j$
labeling vertices of $G$; $\sigma: V \rightarrow \{1,...,q\}$ is an assignment
of classical spins to the vertices; $\beta = (k_BT)^{-1}$; $J$ is the spin-spin
interaction constant; and $k_B$ is the Boltzmann constant \cite{wurev}. 
Further, $\delta_{r,s}$ is the Kronecker delta function.  We define the 
notation
\beq
K = \beta J \ , \quad y = e^K \ , \quad \mbox{and} \quad v = y-1.
\label{kdef}
\eeq
The signs of $J$ favoring ferromagnetic (FM) and antiferromagnetic (AFM) spin
configurations are $J > 0$ and $J < 0$, respectively.  Hence, the physical
ranges of $v$ are $v \ge 0$ for the Potts ferromagnet (FM) and $-1 \le v \le 0$
for the Potts antiferromagnet (AFM).  The partition function for the $q$-state
Potts model can equivalently be written as
\beq
Z = \sum_{\sigma} \prod_{e_{ij}} (1 + v\delta_{\sigma(i) \sigma(j)}) \ . 
\label{zv}
\eeq
Thus $Z$ is invariant under a global symmetry that acts on the spins,
namely for any permutation $\pi_q$ of $\{1,\ldots,q\}$ we can apply the mapping $\sigma(i) \to \pi_q(\sigma(i))$ to
the spin at each site $i$, leaving $Z$ unchanged.
At high temperatures, this symmetry is realized explicitly in the
physical states, while in the $n \to \infty$ (thermodynamic) limit on
a lattice graph with dimensionality greater than a lower critical
dimensionality, it can be broken spontaneously with the presence of a
nonzero long-range ordering of the spins.  This ordering is
ferromagnetic or antiferromagnetic, depending on the sign of $J$.

A spanning subgraph of $G$ is $G' = (V,E')$ with $E' \subseteq E$. The number
of connected components of $G'$ is denoted $k(G')$.  The partition function of
the Potts model can equivalently be expressed in a purely graph-theoretic manner 
as the sum over spanning subgraphs \cite{fk}
\beq
Z(G,q,v) = \sum_{G' \subseteq G} q^{k(G')} \, v^{e(G')}  \ .
\label{cluster}
\eeq
Eq.\ (\ref{cluster}) shows that the partition function $Z(G,q,v)$ is a
polynomial in $q$ and $v$ with positive integer coefficients for each
nonzero term. As is evident from Eq.\ (\ref{cluster}),
$Z(G,q,v)$ has degree $n(G)$ in $q$ and $e(G)$ in $v$, or equivalently, in $y$.
Furthermore, Eq.\ (\ref{cluster})
allows one to generalize the parameter $q$ beyond the positive
integers, ${\mathbb Z}_+$. In particular, for the ferromagnetic case
$v > 0$, Eq.\  (\ref{cluster}) allows one to generalize $q$ from the
positive integers to the positive real numbers while keeping a Gibbs
measure, i.e., keeping $Z(G,q,v) > 0$. (This is not, in general,
possible for the antiferromagnetic case, except when $q$ is an
integer, so one can revert to the Hamiltonian formulation in
Eq.\ (\ref{ham}) and (\ref{zv}), since $v$ is negative, so $Z(G,q,v)$
contains terms of both signs.)  More generally, Eq.\ (\ref{cluster})
allows one to generalize both $q$ and $v$ from their physical ranges
to complex values, as is necessary in order to analyze the zeros of
$Z(G,q,v)$ in $q$ for fixed $v$ and the zeros of $Z(G,q,v)$ in $v$ for
fixed $q$.  Since the coefficients in $Z(G,q,v)$ are real (actually in
${\mathbb Z}_+$, but all we use here is the reality), it follows that
for real $v$, the zeros of $Z(G,q,v)$ in the $q$ plane are invariant
under complex conjugation and for real $q$, the zeros of $Z(G,q,v)$ in
the $v$ plane plane are invariant under complex conjugation.
Since $k(G') \ge 1$ for all $G'$, it follows that $Z(G,q,v)$ always
contains an overall factor of $q$.  We can thus define a reduced
partition function
\beq
Z_r(G,q,v) = q^{-1} Z(G,q,v) \ , 
\label{zr}
\eeq
which is also a polynomial in $q$ and $v$

Let us denote $G_\infty$ as the formal limit, as $n \to \infty$ on a
family of graphs $G_n$ (here, $G_n = D_m$). In this limit, the dimensionless,
reduced free energy, per vertex, is defined as
\beq
f(G_\infty,q,v) = \lim_{n \to \infty} n^{-1} \ln [Z(G_n,q,v)] \ .
\label{fz}
\eeq
(The actual free energy is equal to $ -k_B T f$.) 
For the Potts antiferromagnet, $T \to 0$ means $K \to -\infty$ and
thus $v \to -1$. As is clear from Eq.\ (\ref{zv}), the only spin
configurations that contribute to $Z(G,q,v)$ in this limit are those
for which the spins on adjacent vertices are different.  Hence,
\beq
P(G,q) = Z(G,q,-1) \ , 
\label{pz}
\eeq
where $P(G,q)$ is the chromatic polynomial, which, for $q \in {\mathbb
  Z}_+$, counts the number of ways of assigning $q$ colors to the
vertices of $G$ subject to the condition that no two adjacent vertices
have the same color (called proper $q$-colorings of $G$).
The minimum integer $q$ that allows a proper
$q$-coloring of $G$ is the chromatic number, $\chi(G)$.
Since $D_m$ is bipartite, $\chi(D_m)=2$. Since $P(G,q)$
always contains a factor of $q$, we also define
\beq
P_r(G,q) = q^{-1}P(G,q) = Z_r(G,q,-1) \ .
\label{pr}
\eeq

Besides its intrinsic interest in mathematical graph theory, the
chromatic polynomial is important for physics because of its
connection with ground-state entropy.  For a family $G_m$, in the
limit $m~\to~\infty$ (and hence $n \to \infty$), the configurational
degeneracy per site (vertex) of the Potts antiferromagnet is
\beq
W(G_\infty,q) = \lim_{n \to \infty} P(G_m,q)^{1/n} \ . 
\label{w}
\eeq
For real $q < \chi(G_m)$, $P(G_m,q)$ can be negative; in this
case, since there is no obvious choice for which of the $n$ roots
of $(-1)$ to pick, one can only determine $|W(G_m,q)|$ \cite{w}. 
For a set of special $q$ values, $\{ q_s \}$, the limits $n \to
\infty$ and $q \to q_s$ do not, in general, commute for
$P(G,q)^{1/n}$, so one must specify the order of limits in defining
$W(G_\infty,q)$ \cite{w,a}.  The set $\{ q_s \}$ depends on the family
$G_m$ but usually includes $q=0$ and $q=1$. Here we define the order
of limits as $q \to q_s$ first and then $n \to \infty$.  For a wide
class of $G_m$ families, if $q$ is sufficiently large, then the number
of proper $q$-colorings of $G_m$ grows exponentially with $m$ and $n$,
so that $W(G_\infty,q) > 1$ and hence the Potts AFM has nonzero
ground-state entropy per vertex on
$G_\infty$, $S(G_\infty,q) = k_B \ln[W(G_\infty,q)] \ne 0$.

The Tutte polynomial, denoted $T(G,x,y)$, of a graph $G$ is defined by
\beq
T(G,x,y) = \sum_{G' \subseteq G} (x-1)^{k(G')-k(G)} (y-1)^{c(G')} \ ,
\label{t}
\eeq
where, as above, $k(G')$ denotes the number of connected components of the
spanning subgraph $G'$, and $c(G')$ denotes the number of linearly independent
circuits on $G'$, given by $c(G')=e(G')+k(G')-n(G')$.  With $y=e^K$, as defined in
Eq.\ (\ref{kdef}) and
\beq
x = 1 + \frac{q}{v} \ ,
\label{x}
\eeq
it follows that
\beq
Z(G,q,v) = (x-1)^{k(G)}(y-1)^{n(G)}T(G,x,y) \ .
\label{zt}
\eeq
Thus, the partition function of the Potts model is equivalent, up to the
indicated prefactor, to the Tutte polynomial on a given graph $G$.

Zeros of $Z(G,q,v)$ in $q$ for a given $v$ and in $v$ for a given $q$ are of
interest partly because for many families of graphs, such as strips of regular
lattices, in the $m \to \infty$ limit, an infinite subset of these respective
zeros typically merge to form certain continuous loci.  As stated above, for a
one-parameter family of graphs $G_m$, we define the locus ${\mathcal B}_q(v_0)$
as the continuous accumulation set of zeros of $Z(G_m,q,v_0)$ in the complex
$q$ plane as $m \to \infty$ for a fixed complex value $v=v_0$. (There may also
be discrete zeros that do not lie on this locus.)  Similarly, we define the
locus ${\mathcal B}_v(q_0)$, or equivalently, ${\mathcal B}_y(q_0)$, as the
continuous accumulation set of zeros of $Z(G_m,q_0,v)$ in the complex plane of
the temperature-dependent variable $v$, or equivalently, $y$, as $m \to \infty$
for a fixed complex value $q = q_0$ \cite{a}.

For infinite-length, finite-width strips of regular lattices, and also chain
graphs, ${\mathcal B}_q(v_0)$ is generically comprised of real algebraic
curves, including possible line segments \cite{w}-\cite{SS2} (for a review and
further references, see \cite{rev}). The underlying
reason for this is that $P(G,q)$, and more generally, $Z(G,q,v)$, for these
classes of graphs consist of a sum of $m$'th powers of certain algebraic
functions, denoted generically as $\lambda_j$, where $m$ is the length of the
strip, and the continuous loci ${\mathcal B}_q(v_0)$ occur at values of $q$
where there are two or more $\lambda_j$ functions that are largest in magnitude
and degenerate in magnitude. An early mathematical analysis of this sort of
behavior was given in \cite{bkw2,bkw1}. The Tutte polynomials for such strip
graphs satisfy certain recursion relations \cite{bds}. The loci ${\mathcal
  B}_q(v_0)$ may be connected, as, e.g., for the $m \to \infty$ limit of the
circuit graph and strips of the square, triangular, and honeycomb lattices with
periodic or twisted periodic (M\"obius) longitudinal boundary conditions, and
corresponding strips with toroidal or Klein-bottle boundary conditions
\cite{baxter87},\cite{w}-\cite{sdg}.  These generically separate the $q$ plane
into different regions. For infinite-length limits of strip graphs and chain
graphs with periodic longitudinal boundary conditions, ${\mathcal B}_q(v_0)$
crosses the real $q$ axis at a maximal point denoted $q_c(G_\infty)$, as well
as at one or more other points \cite{w}. If $q_c(G_\infty) \in {\mathbb Z}_+$,
this corresponds to the property that for this value of $q$, ${\mathcal
  B}_v(q)$ passes through $v=-1$, signifying that the Potts antiferromagnet
with $q=q_c(G_\infty)$ has a $T=0$ critical point on $G_\infty$.  For a wide
class of families of graphs, ${\mathcal B}_q(v)$ also crosses the real $q$ axis
on the left at $q=0$, but there are self-dual families where this left crossing
is shifted to $q=1$ \cite{sdg}. In contrast, for infinite-length limits of
lattice strip graphs with free longitudinal and transverse boundary conditions,
${\mathcal B}_q(v_0)$ consists generically as (complex-conjugate pairs of) arcs
and possible real line segments that do not separate the $q$ plane into
different regions \cite{strip}.  In all of these cases, the
Hausdorff dimension of ${\mathcal B}_q(v_0)$ is 1.  The result proved in
\cite{rr} that the Hausdorff dimension of ${\mathcal B}_q(-1)$ is 2 for the
Diamond Hierarchical Lattice, thus provides an interesting contrast with the
behavior of ${\mathcal B}_q(-1)$ for these other infinite-$n$ limits of
families of graphs.

The locus ${\mathcal B}_v(q_0)$ is commonly also comprised of curves and
possible line segments in the $v$ plane.  Although this is common
behavior, it is not always true; a counterexample to this was found
for the $(4 \cdot 8 \cdot 8)$ Archimedean lattice, where these zeros
form areas rather than curves even in the case of isotropic spin-spin
interaction constants considered here \cite{cmo}.  These areas reduce
to points where ${\mathcal B}_v(q_0)$ crosses the real axis.

Note that the property that ${\mathcal B}_q(v_0)$ crosses the real $q$ axis at
a point $q_0$ does not imply that $Z(G,q,v_0)$ vanishes at the point $q_0$, and
similarly, the property that ${\mathcal B}_v(q_0)$ crosses the real $v$ axis at
a point $v_0$ does not imply that $Z(G,q_0,v)$ vanishes at that point
$v_0$. Indeed, from Eq.\ (\ref{cluster}), it is evident that $Z(G,q,v)$ is a
polynomial in $q$ and $v$ with positive coefficients, and hence for fixed
positive $v$, $Z(G,q,v)$ has no zeros for any real positive $q$, and for fixed
positive $q$, $Z(G,q,v)$ has no zeros for any real positive $v$.  The property
that the continuous accumulation locus of the chromatic zeros ${\mathcal
  B}_q(-1)$ crosses the real $q$ axis at a point $q_0$ also does not imply that
$P(G,q_0)=0$, although here the argument is more subtle, since $P(G,q)$ has
terms that alternate in sign with descending powers of $q$.  The precise
meaning of the statement that, for a given $v_0$, the continuous locus
${\mathcal B}_q(v_0)$ crosses the real $q$ axis at a point $q_0$ is that in the
limit $n(G) \to \infty$, the zeros of $Z(G_m,q,v_0)$ approach arbitrarily close
to $q_0$.  This type of behavior is familiar from statistical physics.  For
example, for the $q$-state Potts model on the square lattice with integral $q
\ge 2$ one has that ${\mathcal B}_v(q)$ crosses the real $v$ axis at
$v_c=\sqrt{q}$.  This critical point separates the paramagnetic phase with $0
\le v \le v_c$ with explicit $S_q$ symmetry from the ferromagnetically ordered
phase with $v > v_c$, in which the $S_q$ symmetry is spontaneously broken 
(e.g., \cite{wurev}).

% =======================================================================

\section{Diamond Hierarchical Graphs $D_m$ and Diamond Hierarchical Lattice $D_\infty$} 
\label{dh}

In this section we discuss further details of Diamond Hierarchical
Lattice graphs $D_m$ and the limit $m \to \infty$.  We have discussed
above how one defines $D_m$ iteratively, starting with $D_0=T_2$, the
tree graph with two vertices. The numbers of
vertices and edges on $D_m$ are
\beq
n(D_m) = \frac{2(4^m + 2)}{3} \qquad \mbox{and} \qquad e(D_m) = 4^m \ .
\eeq

The degree, $\Delta$, of a vertex on a graph $G$ is defined as the
number of edges that connect to it.  (The word ``degree'' is thus used
in two different ways, here, but this should not cause any confusion.)
A $\Delta$-regular graph is a graph all of whose vertices have the
same degree. Although $D_0$ and $D_1$ are $\Delta$-regular graphs with
$\Delta=1$ and $\Delta=2$, respectively, the $D_m$ graphs with $m \ge
2$ are not $\Delta$-regular, but instead have vertices with degrees
ranging from 2 to $2^m$. For an arbitrary graph $G$, the average
(effective) vertex degree is
\beq
\Delta_{eff}(G) = \frac{2e(G)}{n(G)} \ . 
\label{delta_eff}
\eeq
For the Diamond Hierarchical Graphs
\beq
\Delta_{eff}(D_m) = \frac{3}{1+2^{1-2m}}  \qquad \mbox{and hence} \qquad \lim_{m \to \infty} \Delta_{eff}(D_m) = \Delta_{eff}(D_\infty) = 3 \ .
%\label{delta_eff_dm}
\label{delta_dhl}
\eeq
This limit is approached exponentially rapidly as $m$ gets large.

% ===========================================================================

\section{RG Transformations $F_q(v)$ and $r_q(y)$ and their fixed points}
\label{transformation}

\subsection{RG Transformations}

By carrying out the summation over the spins at intermediate vertices at each
stage, one finds the following iterative transformation for the partition
function of the Potts model on the Diamond Hierarchical Graphs $D_m$ 
\cite{gk,ddi}
\beq
Z(D_{m+1},q,v) = Z(D_m,q,v')(q+2v)^{2^{2m+1}}
\label{zdhl}
\eeq
where
\beq
v' = F_q(v) = \frac{v^2(2q+4v+v^2)}{(q+2v)^2} 
\label{vp}
\eeq
or equivalently, in terms of $y'=v'+1$, 
\beq
y' = r_q(y) = \bigg [ \frac{q+y^2-1}{q+2(y-1)} \bigg ]^2  \ . 
\label{yp}
\eeq
The two mappings are conjugate under the change of variables $y = v+1$.
The iterative transformation (\ref{vp}) (or equivalently, (\ref{yp})),
embodies the action of the real-space renormalization group action here.
Although we do not append a subscript to $v'$ or $y'$, it is
understood that these quantities are transformed at each iteration. We
denote $F_q^2(v)$ and $r_q^2(y)$ as the functional composition, i.e.,
$F_q^2(y) \equiv F_q(F_q(v))$ and $r_q^2(y) \equiv r_q(r_q(y))$,
respectively, and similarly for $F_q^m(v)$ and $r_q^m(y)$.  Note that
the transformation (\ref{vp}) is singular at $v=-q/2$ (and
equivalently, (\ref{yp}) is singular at $y=1-(q/2)$), which is a
physical antiferromagnetic value of $v$ if $q \in (0,2]$. 

For illustrative purposes, we record the expressions for $Z(D_m,q,v)$
for the first two values of $m$. Note that $D_1=C_4$, the circuit
graph with four vertices.  Elementary calculations yield
\beq
Z(D_0,q,v)=Z(T_2,q,v) = q(q+v)
\label{zdm0}
\eeq
and
\beqs
Z(D_1,q,v) &=& Z(C_4,q,v) = (q+v)^4 + (q-1)v^4  = q(q^3+4q^2v+6qv^2+4v^3+v^4) \ . 
\label{zdm1}
\eeqs
In passing, it may be remarked that many expressions are simpler when
written in terms of $v$ instead of $y$. This is the case for the basic
Eq.\ (\ref{cluster}), and, for example, $Z(D_1,q,v)$ consists of 5
terms when written as a polynomial in $q$ and $v$, but 10 terms when
written in terms of $q$ and $y$.  For this reason, we shall generally
express our results in terms of $q$ and $v$. However, some formulas
show an interesting structure when written in terms of $y$; for
example, $r_q(y)$ is a perfect square.  To keep matters simple, we will focus on
the $v$ variable throughout this paper and mention the $y$ variable only when it is helpful in an explanation.
In the special case $v=-1$, Eq.\ (\ref{zdm0}) and (\ref{zdm1}) yield
the chromatic polynomials
\beq
P(D_0,q) = q(q-1) \qquad \mbox{and} \qquad P(D_1,q)=q(q-1)(q^2-3q+3) \ .
\label{pm0}
\eeq
The chromatic polynomial $P(G,q)$ of any graph $G$ has a zero
at $q=0$, and the chromatic polynomial of any graph $G$ with at least
one edge has a zero at $q=1$.  As we shall show, for $D_m$ and
the limit $D_\infty$, the zero at $q=1$ is isolated, while the zero at
$q=0$ occurs at one of the points where the continuous locus ${\mathcal
  B}_q(-1)$ crosses the real $q$ axis.

The explicit expressions for $Z(D_m,q,v)$ become cumbersome to work with by
hand for $m \geq 2$, but computer algebra systems can work with them and solve
for the zeros of $Z(D_m,q,v_0)$ (fixed $v_0$) and of $Z(D_m,q_0,v)$ (fixed
$q_0$) for $m \leq 4$, at which point they become polynomials of degree $172$
and $256$, respectively.  (Plots of such zeros are shown on the right-hand
sides of Figures \ref{bq_vm1_figure}-\ref{dhlyplotq1000}).

\subsection{Preservation or reversal of the sign of $J$}

It will be convenient to write $F_q(v)$ as the product of one factor,
$F_{q,1}(v)$, that is positive-semidefinite for real values of $q$ and $v$,
times another, $F_{q,2}(v)$ that can have either sign:
\beq
F_q(v) = F_{q,1}(v) \, F_{q,2}(v) \ ,
\label{fproduct}
\eeq
where
\beq
F_{q,1}(v) = \bigg ( \frac{v}{q+2v} \bigg )^2 \qquad \mbox{and} \qquad F_{q,2}(v) = 2q+4v+v^2 \ .
\label{fq1}
\eeq
%
%The derivative of the transformation (\ref{vp}) is
%
%\beq
%\frac{dF_q(v)}{dv} = \frac{4v(q+v)(q+2v+v^2)}{(q+2v)^3} \ .
%\label{dvpdv}
%\eeq
%

A basic question that one can ask about the RG transformation $F_q(v)$ is
whether it keeps the sign of $J$ invariant or reverses it. Recall that the
physical ranges of $v$ are $v > 0$ for the ferromagnetic sign, $J > 0$, and $-1
\le v < 0$ for the antiferromagnetic sign, $J < 0$, with $v=0$ corresponding to
$J=0$ or infinite temperature, $\beta=0$. Clearly, this transformation $F_q(v)$
maps $v=0$ to $v'=0$.  For nonzero $v$, as is evident from Eq.\ (\ref{vp}), the
sign of $v'$ is determined by the sign of the factor $F_{q,2}(v)$. Now, with
$q$ and $v$ real,
\beq
F_{q,2}(v) > 0 \ \Longleftrightarrow \ q > -\frac{v(4+v)}{2} \ .
\label{f2pos}
\eeq
For physical $q$ values, the inequality (\ref{f2pos}) holds for all
ferromagnetic couplings, i.e. for all $v > 0$.  However, the situation is
different for antiferromagnetic couplings, i.e., $v \in [-1,0]$.  As $v$
decreases from 0 to $-1$, the right-hand side of the inequality in
(\ref{f2pos}) increases from 0 to 3/2. So for
$q > 3/2$, which includes the usual integer values $q \ge 2$, $F_{q,2}(v)$ is
positive, so the RG transformation $F$ maps an antiferromagnetic $v$ to a
ferromagnetic $v' > 0$.  Further iterations of this RG transformation keep
the coupling ferromagnetic. It should also be noted that if one considers $q$
values down to, and including, $q=1$, then one must take account of the fact
that as $v$ decreases through AFM values to $-q/2$, $F_q(v)$ diverges.  This
divergence occurs in the physical AFM interval $v \in [-1,0]$ if
$0 < q \le 2$, which includes the  integer values 1 and 2.  In particular, for
the $q=2$ (Ising) case, $F_2(v)$ maps the limit $T \to 0$ for the
AFM (i.e., $v \searrow -1$) to the $T=0$ FM ($v'=\infty$).
For $q \ge 2$, a small real positive value of $v$ is mapped by $F_q(v)$ to a value
$v'$ that is smaller than $v$ (and positive).  Thus, for small positive $v$,
each iteration yields a smaller $v'$ and hence a higher temperature, so that
the fixed point is $v' \to 0^+$.

\subsection{RG Fixed Points}
\label{SUBSEC:RG_FP}

A particularly important set of values of $v$ is the set left invariant by the
transformation, i.e., the values that are RG fixed points.   Here we will focus on those fixed points
occurring for real values of $q$ and $v$.
They are obtained by solving
\begin{align}\label{EQN:FIXEDPTS}
F_q(v) = \frac{v^2(2q+4v+v^2)}{(q+2v)^2} = v
\end{align}
for the variable $v$ in terms of the parameter $q$.  Two fixed points, namely 
\begin{align*}
v=0 \qquad \mbox{and} \qquad v = \infty \ , 
\end{align*}
exist for every choice of $q$.  The fixed point at $v=0$ corresponds to
infinite-temperature (equivalently, zero-coupling), where the spin-spin
interaction has no effect. The fixed point $v = \infty$ can be interpreted as
meaning $\lim_{v \rightarrow \infty} F_q(v) = \infty$, and it corresponds to
zero temperature (equivalently infinite-coupling) in the ferromagnetic case.

Besides the fixed points at $v=0$ and $v=\infty$, solutions to the fixed
point equation (\ref{EQN:FIXEDPTS}) correspond to values of $v$ satisfying the
following cubic equation
\beq q^2+2qv-v^3=0 \ .
\label{dvpdvzero}
\eeq
The nature of the real roots of a cubic equation depend on the sign (or
vanishing) of its discriminant~$\Delta_3$ (see, e.g., \cite{disc}): (i) if
$\Delta_3 > 0$, then all of the roots of Eq. (\ref{dvpdvzero}) are real;
(ii) if $\Delta_3 < 0$, then Eq.  (\ref{dvpdvzero}) has one real root and a
complex-conjugate pair of roots; (iii) if $\Delta_3=0$, then at least two of
the roots of Eq. (\ref{dvpdvzero}) coincide. For Eq. (\ref{dvpdvzero}), 
the discriminant is 
 \beq \Delta_3 = -q^3(27q-32) \ .
\label{disc3}
\eeq
Let us consider the possibilities as $q$ increases from negative to positive
values:
\begin{itemize}
\item[(i)] For $q < 0$ we have $\Delta_3 < 0$ and (\ref{dvpdvzero}) has one real
  root and a complex-conjugate pair of non-real roots.  In particular, for
  these values of $q$, the RG transformation $F_q(v)$ has only one additional
  fixed point (other than $v=0,\infty$).

\item[(ii)] For $q=0$, Eq.\ (\ref{dvpdvzero}) has a triple root at $v=0$, which
  therefore coincides with the fixed point of $F_q(v)$ already discussed at the
  beginning of this subsection.  (Note that the degree of $F_q(v)$ drops from
  $4$ to $2$ at this parameter value, producing a dramatic change in the
  mapping.)

\item[(iii)] For $0 < q < \frac{32}{27}$, we have $\Delta_3 > 0$, and
  (\ref{dvpdvzero}) has three real solutions, corresponding to three additional
  real fixed points of $F_q(v)$.

\item[(iv)] For $q = \frac{32}{27}$, Eq.\ (\ref{dvpdvzero}) factorizes as 
 $-[v-(16/9)][v+(8/9)]^2=0$ and thus has $v=\{
  \frac{16}{9}, \ -\frac{8}{9}, \ -\frac{8}{9} \}$ as solutions. Hence, 
  $F_q(v)$ has these values as fixed points.

\item[(v)] For $q > \frac{32}{27}$ we again have $\Delta_3 < 0$, and so the same
  description as Case (i) applies.

\end{itemize}
We summarize this discussion with Figure \ref{FIG_FIXED_POINTS}.

% Fig. 2
\begin{figure}
\includegraphics[scale=0.4]{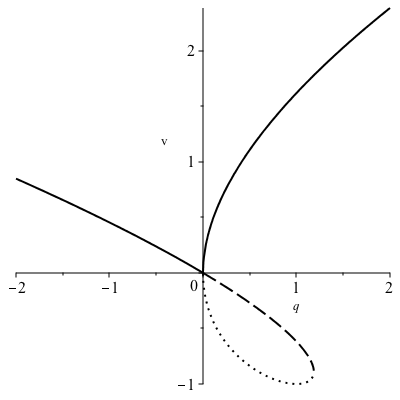}
\caption{Plot of the real fixed points of $F_q(v)$ other than $v=0$ and
  $v=\infty$.  The $v_{c,PM-FM}(q)$ fixed point exists for all $q \in
  \mathbb{R}$ and is denoted by a solid line.  The $v_+(q)$ fixed point exists
  for $0 \leq q \leq \frac{32}{27}$ and is depicted by a dashed line.  The
  $v_-(q)$ fixed point for $0 \leq q \leq \frac{32}{27}$ is depicted by a
  dotted line.
\label{FIG_FIXED_POINTS}}
\end{figure}

The additional fixed points of $F_q(v)$ have a special interpretation when 
$q = 32/27$. The fixed point at $v=16/9=1.77778$ is the critical
value of $v$ for a Potts ferromagnet with this value of $q=32/27$, and the
double root at $v=-8/9$ corresponds formally to a finite-temperature
antiferromagnet.  We use the word ``formally'' here, because for non-integral
$q$ the partition function of the Potts antiferromagnet does not, in general,
define a Gibbs measure and hence a normal statistical physics system.

Let us first discuss the case where $q > 32/27$ and focus on the real
root of Eq.\ (\ref{dvpdvzero}). This root occurs as a positive value
that we denote as $v_{c,PM-FM} = y_{c,PM-FM}-1$. If $q$ is an integer
$q \ge 2$, then, in the $m \to \infty$ limit,
there is a phase transition at $K_{c,PM-FM}=\ln(y_{c,PM-FM})$, i.e.,
at the temperature
\beq
T_{c,PM-FM} = \frac{J}{k_B \, \ln(y_{c,PM-FM})} \ ,
\label{tc_pmfm}
\eeq
from a paramagnetic (PM) phase with manifest $S_q$ symmetry at
high-temperatures $T > T_{c,PM-FM}$, i.e., $0 \le v \le v_{c,PM-FM}$,
to a low-temperature phase with ferromagnetic (FM) long-range order
(magnetization) and associated spontaneous symmetry breaking of the
$S_q$ symmetry to $S_{q-1}$ for $T < T_{c,PM-FM}$, i.e., $v >
v_{c,PM-FM}$.  We can generalize this to a phase transition from a
paramagnetic to ferromagnetic phase for $q$ not restricted to integers
$\ge 2$ but instead taking on any real value $q > 32/27$, since, as
discussed above, $Z(G,q,v)$ defines a Gibbs measure for real positive
$q$ and $v \ge 0$.

We recall that, on a regular lattice (in the thermodynamic limit), a standard
Peierls argument can be used to prove that a discrete spin model without
frustration, competing interactions, disorder, or dilution has a
finite-temperature phase transition if the spatial dimension of the lattice is
greater than the lower critical dimensionality, $d_\ell=1$.  Although a fractal
lattice is not a regular lattice in the conventional sense, arguments have been
given \cite{gk,gam} that if the Hausdorff dimensionality of the fractal lattice
(in the $n \to \infty$ limit) is greater than $d_\ell$, then a discrete model
(without frustration, competing interactions, disorder, or dilution) will also
have a finite-temperature phase transition on the fractal lattice.  This
conclusion applies to our present case, since the Hausdorff dimension
$d_{H}(D_\infty) = 2$ given in Eq.\ (\ref{dhldim}) is greater than 1. On a
regular lattice, the PM-FM phase transition in the $q$-state Potts model is
second order, with a divergent correlation length, if $q \in (0,4]$ and
first-order, with a nonzero latent heat, if $q > 4$ (e.g., \cite{wurev} and
references therein).  As discussed above, in this FM case, one can generalize
$q$ here from positive integers to positive real numbers for the ferromagnetic
case while retaining a Gibbs measure, and one may assume this generalization
here.  In the limit $q \to 0$, one must take account of a relevant
noncommutativity \cite{a}; one can define a nonvanishing free energy if one
takes $n \to \infty$ first, and then $q \to 0$, and with this order of limits,
there is again a second-order phase transition.  An alternate approach to the
$q=0$ case is to deal with the reduced partition function, $Z_r(D_m,q,v)$.

For $q \ge 32/27$, the physical root of Eq.\ (\ref{dvpdvzero}) is given by
(the real, positive number)
\beq
v_{c,PM-FM}(q) = y_{c,PM,FM}(q)-1 = 2^{-1/3}S + \frac{2^{4/3} \, q} {3 S} \ ,
\label{vc_pmfm}
\eeq
with
\beq
S \equiv ( q^2 + \sqrt{R_c} \ )^{1/3}  \qquad \mbox{and} \qquad R_c \equiv q^3\Big ( q- \frac{32}{27} \Big ) = -\frac{\Delta_3}{27} \ ,
\label{qcap}
\eeq
where the discriminant $\Delta_3$ was given in Eq.\ (\ref{disc3}).  The
corresponding physical temperature is given by Eq.\ (\ref{tc_pmfm}).  The
expression for $v_{c,PM-FM}$ in Eq.\ (\ref{vc_pmfm}) is a monotonically
increasing function of $q$ for $q \ge 32/27$. This is understandable
physically, since the larger $q$ is, the more statistical fluctuations there
are, so one must cool the system to a lower temperature, i.e., a larger value
of $K$ and hence $v$, for it to undergo the phase transition to a phase with
ferromagnetic order.  In addition to the special values $q=0$ and $q=32/27$ for
which Eq.\ (\ref{dvpdvzero}) factorizes into three linear factors, there are
also values of $q$ for which it factorizes into a linear factor times a
quadratic factor in $v$, so that the expressions for all of the roots simplify
considerably. For example, for $q=1$, Eq.\ (\ref{dvpdvzero}) factorizes as
$-(v+1)(v^2-v-1)=0$, with solutions $v=-1, \ (1 \pm \sqrt{5} \ )/2$; for $q=3$,
Eq.\ (\ref{dvpdvzero}) factorizes as $-(v-3)(v^2+3v+3)=0$, with solutions $v=3,
\ (-3 \pm i\sqrt{3} \ )/2$; and so forth for certain larger values of $q$.  We
list some illustrative values of $v_{c,PM-FM}(q)$ in Table
\ref{vc_pmfm_values}.
\begin{table}
\caption{\footnotesize{Illustrative values of $v_{c,PM-FM}$ as a function of
    $q$ for the $q$-state Potts model on the Diamond Hierarchical Lattice.}}
\begin{center}
\begin{tabular}{|c|c|}
\hline\hline
$q$    & $v_{c,PM-FM}$    \\ \hline
1      & $(1+\sqrt{5} \, )/2 = 1.6180$ \\
32/27  & 16/9 = 1.7777...  \\
2      & 2.3830   \\
3      & 3        \\
4      & 3.5386   \\
5      & 4.0261   \\
16     & 8        \\
\hline\hline
\end{tabular}
\end{center}
\label{vc_pmfm_values}
\end{table}

For our analysis, we also display the other two solutions in $v$ of the cubic
equation (\ref{dvpdvzero}),
\beqs
v_\pm(q) &=&-\frac{v_{c,PM-FM}}{2} \pm i \frac{\sqrt{3}}{2} \,
      \Big [ 2^{-1/3}S - \frac{2^{4/3} q}{3 S } \Big ] \cr\cr
        &=&-2^{-4/3}S -\frac{2^{1/3}q}{3S} \pm i \frac{\sqrt{3}}{2} \,
      \Big [ 2^{-1/3}S - \frac{2^{4/3} q}{3 S } \Big ]  \ , 
\label{v23}
\eeqs
where the subscripts $\pm$ correspond to the $\pm$ signs in
front of the factor of $i$.  Although we initially chose $v_{c,PM-FM}$ to be
the unique root of Eq.\ (\ref{dvpdvzero}) for $q \ge 32/27$, the formula
(\ref{vc_pmfm}) uniquely determines a fixed point of $F_q(v)$ for all $q \in
\mathbb{R}$, which we will continue to refer to as $v_{c,PM-FM}$.  We remark
that, extending to complex values, $|v_{c,PM-FM}| \sim |q|^{2/3}$ as $|q| \to
\infty$.

We refer the reader to Figure \ref{FIG_FIXED_POINTS} where the three fixed
points $v_-, v_+,$ and $v_{c,PM-FM}$ are labeled for varying $q$.  Remark that
they are ordered by
\begin{align*}
v_-(q) < v_+(q) < v_{c,PM-FM}(q)
\end{align*}
for those $q$ where all three exist and are distinct, i.e. for $0 < q <
\frac{32}{27}$.

The effect of the RG transformation Eq.\ (\ref{vp}) on $v$ can thus be
explained physically. For the high-temperature region of the
ferromagnet, $T > T_{c,PM-FM}$, i.e., the interval range $0 < v <
v_{c,PM-FM}$, the RG transformation Eq.\ (\ref{vp}) maps the initial value
of $v$ to a smaller value of $v'$, with the RG fixed point being
at $T=\infty$, i.e., $v=0$.  This is the standard
attractive infinite-temperature fixed point of the real-space
renormalization group in statistical physics and reflects the fact
that a critical point is a repulsive fixed point of the RG in the
temperature direction.  In turn, this is a consequence of the fact
that the real-space RG blocking transformation with a blocking factor
of $b$ reduces the correlation length as $\xi \to \xi/b$, which
is reduced finally to $\xi=0$ at $T=\infty$. If, in
contrast, the initial (real) value of the temperature is less than
$T_{c,PM-FM}$, i.e., $v$ is greater than
$v_{c,PM-FM}$, then the RG transformation Eq.\ (\ref{vp}) maps $v$ to a
larger value of $v'$, with the RG fixed point in this phase being the
zero-temperature fixed point where again the correlation length
(defined by the connected spin-spin correlation function) vanishes.

We remark that since $D_m$ is bipartite, the PM-AFM critical point is given
by $K_{c,PM-AFM}=-K_{c,PM-FM}$, i.e., $y_{c,PM-AFM} = (y_{c,PM-FM})^{-1}$.  
%For $v >> q$, $F_q(v)$ maps $v$ to $v' \simeq (v/2)^2$.
As $q$ increases above 32/27, the other two roots of Eq.\ (\ref{dvpdvzero}),
which formed a double root at $v=-8/9$, bifurcate into a complex-conjugate pair
of roots, which move to the upper and lower left away from the real axis.
Asymptotically,
\beq
v_{c,PM-FM}(q) \sim q^{2/3} \quad {\rm as} \quad q \to \infty \ .
\label{vcasymptotic}
\eeq

As noted above, for values of $q$ in the interval $0 \le q \le 32/27$,
Eq.\ (\ref{dvpdvzero}) has three real roots. As $q$ decreases below the
value 32/27, the real root $v_{c,PM-FM}$ decreases below 16/9, and the
double root at $v=-8/9$ bifurcates into two real roots, with $v_-$ 
decreasing and $v_+$ increasing as $q$ decreases toward $1$.
When $q$ reaches 1, Eq. (29) factorizes as $(v+1)(v^2-v-1)=0$ with
solutions $v=-1$ and $v_\pm = (1/2)(1 \pm \sqrt{5})$. Of these solutions, the
first, $v=-1$, corresponds formally to the $T=0$ Potts antiferromagnet; the
second, $v_-=-0.6180..$, corresponds to a finite-temperature Potts
antiferromagnet, and the third, $v_+ = v_{c,PM-FM} = 1.6180..$, corresponds to
a finite-temperature Potts ferromagnet.

%When $q$ reaches $1$ Eq.\ (\ref{dvpdvzero})
%factorizes as $(v+1)(v^2-v-1)=0$, with solutions $v_-=-1$ and
%$v_+=(1/2)(1\pm \sqrt{5} \ )$. Of these solutions, the first, $v_-=-1$,
%corresponds formally to the $T=0$ Potts antiferromagnet; the second,
%$v_+=-0.6180..$, corresponds formally to a finite-temperature Potts
%antiferromagnet; and the third, $v_{c,PM-FM}=1.6180..$, corresponds to a
%finite-temperature Potts ferromagnet.  

% ======================================================================

\section{Rigorous Results from Complex Dynamics}
\label{complex_dynamics}

In order to prove rigorous results about the continuous accumulation loci
${\mathcal B}_q(v_0)$ for various values of the temperature-like Boltzmann
variable $v_0$ and to make computer pictures of it, we will need some results
from complex dynamics.  We will first describe them in a general context and
then specialize to the case of the renormalization mapping $F_q(v)$.

\subsection{Generalities on marked points and the passive/active dichotomy.}

Let $f_\lambda(z)$ denote a family of rational mappings $f_\lambda:
\widehat{\mathbb{C}} \rightarrow \widehat{\mathbb{C}}$ of the Riemann Sphere
$\widehat{\mathbb{C}}$ depending holomorphically on a complex parameter
$\lambda \in \Lambda$.  For our purposes, $\Lambda$ is an open subset of the
complex plane~$\mathbb{C}$.  Let $a(\lambda)$ be a choice of initial condition
for the iterates of $f_\lambda$ that depends holomorphically on the
parameter~$\lambda$.  It is called a ``marked point'', and historically
\cite{LY2,MSS} this theory was used for marked points that are critical points
of the rational mapping, in order to understand bifurcations of the mapping
itself. We remark that none of the results presented in this subsection are
new, and the proofs presented below are adaptations of those from the classical
papers to our current context and notations.

Following the terminology of McMullen \cite{muniv}, the marked point
$a(\lambda)$ is ``passive'' at parameter $\lambda_0$ if there is an open
neighborhood $U$ of $\lambda_0$ on which the the sequence of functions $\lambda
\to f_\lambda^m (a(\lambda))$ forms a normal family, in the sense of Montel's
Theorem (see, e.g., \cite{remmert} and references therein). As before, the
notation $f_\lambda(a(\lambda))$ denotes functional composition,
$f_\lambda^2(a(\lambda)) = f_\lambda(f_\lambda(a(\lambda)))$, and so forth for
higher~$m$.

The set of all passive parameters is open and is called the passive locus for
the marked point~$a(\lambda)$.  If the marked point $a(\lambda)$ is passive at
the parameter $\lambda=\lambda_0$, then the behavior of the initial condition
$a(\lambda)$ does not change much as $\lambda$ varies in a small neighborhood
of $\lambda_0$.  A parameter value $\lambda_0$ is defined as ``active'' if it
is not passive.  At these active parameter values, the initial condition
$a(\lambda)$ undergoes quite different dynamical behavior as $\lambda$ is
varied. Roughly speaking, this is analogous to the notion of bifurcation in the
theory of dynamical systems, but the global behavior of the mapping $f_\lambda$
need not change at $\lambda_0$, even if the marked point $a(\lambda)$ is active
at $\lambda=\lambda_0$.  One can also think of the passive locus as a
``parameter space analog of the Fatou set'' that is associated to the marked
point $a(\lambda)$ and the active locus as a ``parameter space analog of the
Julia set'' that is associated to the marked point $a(\lambda)$ 
\cite{juliaset}.

Let us describe a simple way for a parameter $\lambda_0$ to be active.  Suppose
that $z_\bullet$ is a repelling fixed point for $f_{\lambda_0}$.  Then
$z_\bullet$ can be holomorphically continued to be a repelling fixed point
$z_\bullet(\lambda)$ of $f_\lambda$ for all $\lambda$ in some neighborhood $U$
of $\lambda_0$.  One says that ``$f^{n_0}$ maps the marked point $a(\lambda)$
non-persistently onto the repelling fixed point  $z_\bullet(\lambda)$ at
parameter $\lambda_0$'' if
\begin{align*}
f^{n_0}_{\lambda_0}(a(\lambda_0)) = z_\bullet(\lambda_0) \qquad \mbox{and} \qquad 
f_{\lambda}(a(\lambda)) \not \equiv z_\bullet(\lambda) \quad \mbox{on} \quad  U.
\end{align*}
In this case, it is easy to show that $\lambda_0$ is an active parameter for
the marked point $a(\lambda)$ under~$f_\lambda$.  (The same holds if $f^{n_0}$
maps the marked point $a(\lambda)$ ``non-persistently'' onto a point from a
repelling periodic cycle at parameter $\lambda_0$.)

\begin{lemma}\label{LEM:NO_ISOLATED_ACTIVE_PARAM}
The set of active parameters of the marked point $a(\lambda)$ under the mapping $f_\lambda$ contains no isolated points.
\end{lemma}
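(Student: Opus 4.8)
The plan is to argue by contradiction using Montel's theorem. Suppose $\lambda_0$ is an active parameter for the marked point $a(\lambda)$ that is isolated in the active locus; then there is a punctured neighborhood $U^* = U \setminus \{\lambda_0\}$ of $\lambda_0$ on which $a(\lambda)$ is passive, so the family $\{\lambda \mapsto f_\lambda^m(a(\lambda))\}_{m \ge 1}$ is normal on $U^*$. First I would observe that, because each $f_\lambda^m(a(\lambda))$ is holomorphic across the puncture $\lambda_0$ (the maps and the marked point depend holomorphically on $\lambda$ on all of $U$, with values in $\widehat{\mathbb{C}}$), normality on the punctured disc forces normality on the full disc $U$: a locally bounded (in the spherical metric) family of holomorphic functions on $U^*$ extends, and one can invoke the fact that an isolated singularity cannot obstruct normality when the individual functions are already holomorphic there. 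More carefully, take any sequence $m_k$; on $U^*$ a subsequence converges locally uniformly to some holomorphic $g : U^* \to \widehat{\mathbb{C}}$, and since each $f_{\lambda}^{m_{k}}(a(\lambda))$ is holomorphic at $\lambda_0$ the limit has at worst a removable singularity or a pole there, hence extends to $U$; the convergence is then locally uniform on $U$ by the maximum principle applied on small circles around $\lambda_0$. Thus the family is normal on all of $U$, so $a(\lambda)$ is passive at $\lambda_0$, contradicting that $\lambda_0$ is active.

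The one subtlety to handle carefully is the case where the limit function $g$ is the constant $\infty$, or more generally when the spherical-metric estimates degenerate; here I would phrase everything in terms of the chordal metric on $\widehat{\mathbb{C}}$ so that ``locally bounded'' is automatic (the sphere is compact) and the relevant statement is simply that a family of holomorphic maps $U^* \to \widehat{\mathbb{C}}$ that is normal on $U^*$, each member of which extends holomorphically across $\lambda_0$, is normal on $U$. This is a standard removable-singularity-type fact for normal families, and I would cite it from a complex analysis reference such as \cite{remmert}, or prove it in one line: normality on $U^*$ gives equicontinuity on compact subsets of $U^*$, in particular on a circle $|\lambda - \lambda_0| = r$; equicontinuity on that circle together with holomorphy inside propagates (via the maximum principle for the chordal distance between two members, or between a member and a subsequential limit) to equicontinuity on the closed disc, which is normality on $U$.

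I do not expect a serious obstacle here — the lemma is a routine consequence of Montel normality plus removability of isolated singularities — so the ``hard part'' is really just bookkeeping: making sure the removable-singularity argument is stated correctly for $\widehat{\mathbb{C}}$-valued maps (allowing poles) rather than $\mathbb{C}$-valued ones, and being explicit that holomorphic dependence of $f_\lambda$ and $a(\lambda)$ on $\lambda$ (including at $\lambda_0$) is what guarantees each iterate $f_\lambda^m(a(\lambda))$ is genuinely holomorphic at $\lambda_0$, with no essential singularity possible. Once that is in place, the contradiction is immediate and the lemma follows.
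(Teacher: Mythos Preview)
Your approach has a genuine gap: the ``removable-singularity-type fact for normal families'' you invoke is false. Normality on a punctured disc does \emph{not} propagate to the full disc, even when every function in the family is holomorphic across the puncture. A simple counterexample is $g_n(\lambda) = n\lambda$ on the unit disc: on any compact $K \subset U \setminus \{0\}$ one has $|g_n| \to \infty$ uniformly, so the family is normal on $U^*$, yet $g_n(0)=0$ for all $n$, so no subsequence converges uniformly near $0$. A meromorphic example closer to the situation you describe is $g_n(\lambda) = \lambda/(\lambda^2 + 1/n^2)$, which converges to $1/\lambda$ locally uniformly on $U^*$; the limit extends across $0$ (as a pole), but $g_n(0)=0$ while the limit takes the value $\infty$ there, so convergence fails on any neighborhood of $0$. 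The step where you write ``the limit has at worst a removable singularity or a pole there, hence extends to $U$; the convergence is then locally uniform on $U$ by the maximum principle'' is exactly where this breaks: the maximum principle for $|f-g|$ requires the functions to be holomorphic (finite-valued) on the closed disc, and once poles are allowed inside, or once the limit is $\infty$, the inequality no longer holds.

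The paper's proof avoids this entirely by working constructively rather than by contradiction. Given an active parameter $\lambda_0$ and a neighborhood $U$, it picks a repelling periodic cycle $z_1(\lambda),\ldots,z_k(\lambda)$ of period $k\ge 3$ for $f_{\lambda_0}$, holomorphically continued over $U$ and chosen disjoint from $a(\lambda_0)$. Non-normality at $\lambda_0$ plus Montel's theorem forces some iterate $f_{\lambda_1}^{n_0}(a(\lambda_1))$ to hit one of the $z_j(\lambda_1)$ for some $\lambda_1\in U$; since $a(\lambda_0)$ misses the cycle, this collision is non-persistent, and landing non-persistently on a repelling cycle is itself an active behavior. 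Thus $\lambda_1$ is another active parameter in $U$. The dynamical input---density of repelling cycles in the Julia set, and the fact that a non-persistent collision with a repelling cycle forces activity---is essential here and has no analogue in your purely function-theoretic argument.
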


\begin{proof}

  Suppose that $\lambda_0$ is an active parameter and $U$ is any neighborhood
  of $\lambda_0$.  Since repelling periodic points are dense in the Julia set
  $J(f_{\lambda_0})$ we can choose a repelling periodic cycle
  $z_1,z_2,\ldots,z_k$ of period $k \geq 3$ for $f_{\lambda_0}$ that is
  disjoint from $a(\lambda_0)$.  Restricting $U$ to a smaller neighborhood of
  $z_0$, if necessary, we can suppose that this periodic cycle varies
  holomorphically as
\begin{align}\label{HOL_VARYING_REP_CYCLE}
z_1(\lambda),z_2(\lambda),\ldots,z_k(\lambda),
\end{align}
forming a repelling cycle of period $k$ of $f_\lambda$ for all $\lambda \in U$.
Since $\lambda_0$ is active, Montel's Theorem implies that there is some
parameter $\lambda_1 \in U$ and some iterate $n_0$ such that
$f_{\lambda_1}^{n_0}(a(\lambda_1)) = z_j(\lambda_1)$ for some $1 \leq j \leq
k$.  Since $a(\lambda_0) \neq z_\ell(\lambda_0)$ for all $1 \leq \ell \leq k$
we see that $f^{n_0}$ maps the marked point $a(\lambda)$ non-persistently onto
the repelling periodic cycle from Eq.\ (\ref{HOL_VARYING_REP_CYCLE}) at
parameter $\lambda_1$.  Therefore, $\lambda_1 \in U$ is another active
parameter.
\end{proof}

Recall that a point $b \in \widehat{\mathbb{C}}$ is ``exceptional'' for a
rational map $f: \widehat{\mathbb{C}} \rightarrow \widehat{\mathbb{C}}$ if the
cardinality of the set $\{z \in \mathbb{C} \, : \, f^m(z) = b \mbox{ for some
  $m \geq 0$}\}$ is one or two.  A marked point $b(\lambda)$ is ``persistently
exceptional'' for a holomorphic family of rational maps $f_\lambda$ if for
every parameter $\lambda$ the point $b(\lambda)$ is exceptional
for~$f_\lambda$.

The key statement we need from holomorphic dynamics is the following simple
lemma:

\begin{lemma}\label{KEY_LEMMA}
Suppose $f_\lambda$ is a holomorphically varying family of rational maps, that $a(\lambda)$ and $b(\lambda)$ are
marked points, and that $b(\lambda)$ is not persistently exceptional for $f_\lambda$.

Then, if $\lambda_0$ is an active parameter for $a(\lambda)$ under $f_\lambda$ we have
\begin{align*}
\lambda_0 \in \overline{\{\lambda \in \Lambda \, : \, f^m(a(\lambda)) = b(\lambda) \, \mbox{ for some $m \geq 0$.}\}}.
\end{align*}
\end{lemma}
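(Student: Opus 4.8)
The plan is to use Montel's Theorem directly, exploiting the fact that a non-persistently-exceptional marked point $b(\lambda)$ eventually has at least three distinct preimages under sufficiently high iterates, which furnishes the omitted values needed to contradict normality. First I would argue that one may assume $b(\lambda)$ is, after passing to a bounded iterate, a non-persistently-exceptional \emph{point} whose full backward orbit is infinite; more precisely, since $b(\lambda)$ is not persistently exceptional, there is a parameter (and, shrinking, an open set of parameters) at which $b(\lambda)$ has infinite backward orbit, so for some fixed $N$ the set of $N$-th preimages of $b(\lambda)$ contains three points $c_1(\lambda), c_2(\lambda), c_3(\lambda)$ that vary holomorphically and remain pairwise distinct on a neighborhood $U' \subseteq \Lambda$ of a chosen parameter. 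A subtlety here is that the preimages need not be globally single-valued, so one works locally on $U'$ and possibly shrinks it so that the three branches are well separated; the exceptional-point hypothesis is exactly what guarantees that such a triple of distinct branches exists.

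Next, suppose toward a contradiction that $\lambda_0$ is an active parameter for $a(\lambda)$ but that $\lambda_0 \notin \overline{\{\lambda : f_\lambda^m(a(\lambda)) = b(\lambda) \text{ for some } m \geq 0\}}$. Then there is a neighborhood $U$ of $\lambda_0$ on which $f_\lambda^m(a(\lambda)) \neq b(\lambda)$ for all $m \geq 0$ and all $\lambda \in U$; I would intersect $U$ with $U'$ (moving the base parameter for the preimage construction to $\lambda_0$ itself is fine, since nothing above required $\lambda_0$ to be special). The claim is then that on this smaller neighborhood the family $\{\lambda \mapsto f_\lambda^m(a(\lambda))\}_{m \geq 0}$ omits the three holomorphic functions $c_1(\lambda), c_2(\lambda), c_3(\lambda)$: indeed, if $f_\lambda^m(a(\lambda)) = c_j(\lambda)$ for some $m$, $j$, and some $\lambda \in U$, then applying $f_\lambda^N$ gives $f_\lambda^{m+N}(a(\lambda)) = b(\lambda)$, contradicting the choice of $U$. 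By the version of Montel's Theorem for families omitting three holomorphically-varying functions (equivalently, conjugating by a Möbius family to send $c_1, c_2, c_3$ to $0, 1, \infty$ and applying the classical three-point Montel theorem), the family $\{\lambda \mapsto f_\lambda^m(a(\lambda))\}$ is normal on this neighborhood of $\lambda_0$, i.e.\ $a(\lambda)$ is passive at $\lambda_0$ — contradicting activity.

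The main obstacle, and the step deserving the most care, is the first one: producing three \emph{distinct, holomorphically varying} preimage branches of $b(\lambda)$ valid on a common neighborhood. The non-persistently-exceptional hypothesis only gives a single parameter at which the backward orbit is infinite, and preimages of a moving target generally have monodromy. I would handle this by choosing $N$ large enough that $f_{\lambda_0}^{-N}(b(\lambda_0))$ has at least three points (possible once the backward orbit is infinite, since an exceptional point would have $|f^{-m}(b)| \leq 2$ for all $m$), selecting three of them that are not critical values issues — i.e.\ at which $f_{\lambda_0}^N$ is a local biholomorphism or at least can be followed by the implicit function theorem — and then continuing each locally in $\lambda$; shrinking $U$ keeps the three branches disjoint by continuity. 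If some preimage point at $\lambda_0$ happens to be "bad" (a critical point of $f_{\lambda_0}^N$), increasing $N$ or perturbing the base parameter slightly within the set where the backward orbit stays infinite removes the problem. Once these three branches are in hand, the rest is a clean application of Montel exactly as above.
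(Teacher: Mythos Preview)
Your approach is essentially the same as the paper's: assume for contradiction that some neighborhood $U$ of $\lambda_0$ contains no solutions of $f_\lambda^m(a(\lambda)) = b(\lambda)$, produce three holomorphically varying preimage branches $c_1,c_2,c_3$ of $b(\lambda)$ under some iterate, observe that the iterates of $a(\lambda)$ omit these branches on $U$, and invoke Montel to contradict activity.

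The one genuine gap is your treatment of the degenerate case you yourself flag as ``the main obstacle.'' The hypothesis that $b(\lambda)$ is not \emph{persistently} exceptional does not rule out that $b(\lambda_0)$ is exceptional for $f_{\lambda_0}$; at such a $\lambda_0$ the set $f_{\lambda_0}^{-N}(b(\lambda_0))$ has at most two points for every $N$, so increasing $N$ cannot produce three distinct preimages there, and your three branches $c_j$ (constructed at a perturbed base point) must collide at $\lambda_0$. Montel's three-target theorem then fails precisely at $\lambda_0$, and normality on a punctured neighborhood does not in general extend across the puncture for families of maps to $\widehat{\mathbb{C}}$. Your suggestion to ``perturb the base parameter slightly'' is the right instinct, but for the contradiction to survive you need the perturbed parameter to still be \emph{active} and to still lie in $U$. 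That is exactly what the paper secures by first proving Lemma~\ref{LEM:NO_ISOLATED_ACTIVE_PARAM} (active parameters are never isolated): since only finitely many parameters can be ``bad'' for a given $\ell$, one replaces $\lambda_0$ by a nearby active $\lambda_1 \in U$ at which $f_{\lambda_1}^{-\ell}(b(\lambda_1))$ has the full count of preimages, and runs the Montel argument there. With that one additional ingredient your argument is complete and matches the paper's.
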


\noindent
The proof is classical, but we include it here for the convenience of the reader, closely following \cite[Proposition 3.5]{LYUBICH}.

\begin{proof}
Suppose for contradiction that $a(\lambda)$ is active under $f_\lambda$ at parameter $\lambda_0$
and that there is some neighborhood $U$ of $\lambda_0$ such that 
\begin{align}\label{ASSUMPTION_FOR_CONTRA}
f_\lambda^m(a(\lambda)) \neq b(\lambda) \quad \mbox{for all $\lambda \in U$ and all $n \geq 0$.}
\end{align}
Since $b(\lambda)$ is not persistently exceptional, there is some $\ell \geq 1$
such that $f_\lambda^{-\ell}(b(\lambda))$ contains $k \geq 3$ points for 
all but finitely many possible values of $\lambda$, for which it has fewer than $k$ points.
If $\lambda_0$ is not one of those values, we can work in sufficiently small
neighborhood $U' \subset U$ of $\lambda_0$ over which $f_\lambda^{-\ell}(b(\lambda))$ 
consists of $\ell$ disjoint graphs of holomorphic functions of $\lambda$.  Together
with Assumption~(\ref{ASSUMPTION_FOR_CONTRA}) and Montel's Theorem, this implies
that $\lambda \mapsto f_\lambda^m(a(\lambda))$ forms a normal family on $U'$, contrary to
the hypothesis that $\lambda_0 \in U'$ is active.

If it happened that $\lambda_0$ is one of the finitely many parameters for which  $f_{\lambda}^{-\ell}(b(\lambda))$ has fewer than $k$ points,
then we use the fact that active parameters are not isolated (Lemma \ref{LEM:NO_ISOLATED_ACTIVE_PARAM}) 
to replace $\lambda_0$ with another active parameter $\lambda_1$ for which $f_{\lambda_1}^{-\ell}(b(\lambda_1))$ has the maximal
number of preimages $k$.  We then apply the reasoning from the previous paragraph to the new active parameter $\lambda_1$.
\end{proof}

The following general classification of the types of behavior of a marked point
$a(\lambda)$ for the case in which $\lambda$ is in the passive locus will be
helpful for our discussion:
\begin{CPL}
  Let $f: \Lambda \times \mathbb{P}^1 \rightarrow \mathbb{P}^1$ be a
  holomorphic family and let $a(\lambda)$ be a marked point. Assume $U \subset
  \Lambda$ is a connected open subset where $a(\lambda)$ is passive. Then
  exactly one of the following cases holds:

\begin{enumerate}

\item[\rm (i)] $a(\lambda)$ is never preperiodic in $U$. In this case the
  closure of the orbit of $a(\lambda)$ can be followed by a holomorphic motion.

\item[\rm (ii)] $a(\lambda)$ is persistently preperiodic in $U$.

\item[\rm (iii)] There exists a persistently attracting (possibly
  superattracting) cycle attracting $a(\lambda)$ throughout $U$ and there is a
  closed subvariety $U' \subsetneq U$ such that the set of parameters $\lambda
  \in U \setminus U'$ for which $a(\lambda)$ is preperiodic is a proper closed
  subvariety in $U \setminus U'$.

\item[\rm (iv)] There exists a persistently irrationally neutral periodic point
  such that $a(\lambda)$ lies in the interior of its linearization domain
  throughout $U$ and the set of parameters $\lambda \in U$ for which
  $a(\lambda)$ is preperiodic is a proper closed subvariety in $U$.

        \end{enumerate}
\end{CPL}
\noindent
We do not include a proof of this rather difficult theorem.

\subsection{Application to $F_q(v)$ and the $q$-plane zeros ${\mathcal B}_q(v_0)$.}
\label{SUBSEC_USING_ACTIVE_PARAMS}
We will now explain how to use the techniques from the previous subsection to study the $q$-plane zeros for the DHL.
(The reader may wish to compare this discussion with that from 
\cite{rr}, but note that in that paper the variable $y=v+1$ is used
instead of the equivalent variable $v$.)

Recall that $q=0$ is always a zero for $Z(D_m,q,v)$.  The renormalization
procedure from Eqs.\ (\ref{zdhl}) and (\ref{vp}) implies that for any $v_0 \in
\mathbb{C}$ and any $q \neq 0$ we have
\begin{align*}
Z(D_m,q,v_0) = 0 \qquad {\rm if \ and \ only \ if} \qquad F_q^m(v_0) = -q,
\end{align*}
where
\begin{equation*}
F_q(v) = \frac{v^2(2q+4v+v^2)}{(q+2v)^2}
\end{equation*}
is the renormalization mapping (\ref{vp}).  Therefore, for our purposes, the
marked point $a(q)$ will correspond to the desired choice of $q$-plane and will
be constant: $a(q) \equiv v_0$.  (For example $a(q)~\equiv~-1$ will correspond
to the case of chromatic zeros.) Meanwhile, the other marked point will be
$b(q)~=~-q$.

We remark that the degree of $F_q(y)$ drops from $4$ to $2$ when $q=0$ due to
the appearance of a common factor of $v$ in the numerator and denominator.
This is the only parameter where such a drop in degree occurs, and therefore,
$F_q(v)$ is a holomorphic family of rational maps with parameter space $\Lambda
= \mathbb{C} \setminus \{0\}$.  In fact, the entire discussion in the remainder
of this section will only pertain to $q \in \mathbb{C} \setminus \{0\}$.

For any $v_0 \in \mathbb{C}$ let us denote the active locus of the marked point $a(q) \equiv v_0$ by ${\mathcal A}_q(v_0)$.

\begin{lemma}\label{LEM:ACTIVE_SUBSET_BQ}
Within $\mathbb{C} \setminus \{0\}$ we have ${\mathcal A}_q(v_0) \subset {\mathcal B}_q(v_0)$.
\end{lemma}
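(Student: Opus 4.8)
The plan is to show that every point $q_\star \in {\mathcal A}_q(v_0) \setminus \{0\}$ is an accumulation point of zeros of $Z(D_m, q, v_0)$ as $m \to \infty$. The bridge between the dynamics and the partition function is the identity stated just before the lemma: for $q \neq 0$,
\begin{align*}
Z(D_m, q, v_0) = 0 \iff F_q^m(v_0) = -q,
\end{align*}
i.e. $Z(D_m, q, v_0) = 0$ iff $f^m(a(q)) = b(q)$ with the marked points $a(q) \equiv v_0$ and $b(q) = -q$ in the notation of the previous subsection. So the zero set of $Z(D_m, \cdot, v_0)$ in $\mathbb{C} \setminus \{0\}$ is exactly $\{q : F_q^m(v_0) = -q\}$, and the union over all $m$ of these sets is precisely the set appearing in Lemma \ref{KEY_LEMMA}.

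**First I would** verify the hypotheses of Lemma \ref{KEY_LEMMA} for the family $F_q(v)$ on $\Lambda = \mathbb{C} \setminus \{0\}$ (which is a holomorphic family of rational maps of degree $4$, as noted in the text) with marked points $a(q) \equiv v_0$ and $b(q) = -q$. The only nontrivial hypothesis is that $b(q) = -q$ is \emph{not persistently exceptional} for $F_q$. This is easy: for a fixed generic $q$, the map $F_q$ has degree $4$ and only finitely many exceptional points (at most two), so $b(q) = -q$ fails to be exceptional for all but possibly a discrete set of $q$; hence it is not persistently exceptional. (Alternatively one can exhibit three distinct preimages of $-q$ under some iterate of $F_q$ for generic $q$.) With this in hand, Lemma \ref{KEY_LEMMA} gives: if $q_\star$ is an active parameter for $a(q) \equiv v_0$, then
\begin{align*}
q_\star \in \overline{\{q \in \mathbb{C}\setminus\{0\} : F_q^m(v_0) = -q \text{ for some } m \geq 0\}}.
\end{align*}

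**Next I would** translate this back: by the equivalence above, the set on the right is exactly the closure of $\bigcup_{m \geq 0} \{q \neq 0 : Z(D_m, q, v_0) = 0\}$, the closure of the set of all partition-function zeros (over all levels $m$). So every point of ${\mathcal A}_q(v_0)$ lies in this closure. **The one real subtlety** is the distinction between "accumulation locus ${\mathcal B}_q(v_0)$" — the set of points to which infinitely many zeros (from arbitrarily large $m$) converge — and the mere closure of the union of all zero sets, which could a priori pick up isolated zeros appearing at a single finite level $m$. To upgrade the conclusion, I would invoke Lemma \ref{LEM:NO_ISOLATED_ACTIVE_PARAM}: the active locus ${\mathcal A}_q(v_0)$ contains no isolated points, so any $q_\star \in {\mathcal A}_q(v_0)$ is a limit of nearby active parameters $q_n \to q_\star$, each of which by Lemma \ref{KEY_LEMMA} is itself a limit of zeros $F_{q}^{m}(v_0) = -q$. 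Since the zero sets at each fixed level $m$ are finite (zeros of a polynomial) while ${\mathcal A}_q(v_0)$ is a perfect set accumulating at $q_\star$, a diagonal argument produces zeros from arbitrarily large levels $m$ converging to $q_\star$; hence $q_\star \in {\mathcal B}_q(v_0)$. **I expect this last step — pinning down that the zeros realizing the accumulation come from unbounded levels $m$ rather than a fixed finite level — to be the main point requiring care**, though it is a soft argument using only that each $Z(D_m, \cdot, v_0)$ has finitely many roots and that ${\mathcal A}_q(v_0)$ is uncountable (being perfect and nonempty) wherever it is nonempty. This establishes ${\mathcal A}_q(v_0) \subset {\mathcal B}_q(v_0)$ within $\mathbb{C} \setminus \{0\}$, as claimed.
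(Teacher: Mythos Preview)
Your proposal is correct and follows essentially the same route as the paper: use the equivalence $Z(D_m,q,v_0)=0 \iff F_q^m(v_0)=-q$, check that $b(q)=-q$ is not persistently exceptional, apply Lemma~\ref{KEY_LEMMA}, and then invoke Lemma~\ref{LEM:NO_ISOLATED_ACTIVE_PARAM} to pass from the closure of all zeros to the continuous accumulation locus. The only cosmetic difference is that the paper verifies non-persistent-exceptionality by an explicit computation at a single parameter (showing $F_2^{-1}(-2)$ has four distinct points), whereas you argue abstractly via the degree of $F_q$; either works.
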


\begin{proof}
When $q=2$ we have $F^{-1}_q(b(q)) = F^{-1}_2(-2)=\{-1 \pm \sqrt{2}i \pm 1 i\}$ (four different values), so
that $b(q) = -q$ is not persistently exceptional for $F_q(v)$.  Therefore,
Lemma \ref{KEY_LEMMA} implies that any point of ${\mathcal A}_q(v_0)$ is in the
accumulation set of solutions to $F_q^m(v_0) = -q$ and hence the accumulation locus of zeros for
$Z(D_m,q,v_0)$, both considered as $m \rightarrow \infty$.  By Lemma
\ref{LEM:NO_ISOLATED_ACTIVE_PARAM}, ${\mathcal A}_q(v_0)$ contains no isolated
points, so we conclude that ${\mathcal A}_q(v_0) \subset {\mathcal B}_q(v_0)$.
\end{proof}

We will now make some observations that will help when drawing computer
pictures of ${\mathcal A}_q(v_0)$ and that will also help us to further relate
${\mathcal A}_q(v_0)$ to ${\mathcal B}_q(v_0)$.

For every complex parameter $q \neq 0$ we have that:
\begin{itemize}
\item[(i)] The point $v=0$ is a superattracting fixed point for
$F_q$  (i.e., $F_q(0) = 0$ and $F_q'(0) = 0$).
\item[(ii)] The point $v=\infty$ is a superattracting fixed point for
$F_q$  (i.e., $F_q(\infty) = \infty$ and $F_q'(\infty) = 0$, with the derivative computed in suitable local coordinates centered at $\infty$.)
\end{itemize}
Therefore, for every $q \neq 0$ there is an open neighborhood
$\mathcal{W}_q^s(0)$ consisting of initial conditions $v_0$ whose orbit under
$F_q^m$ converges to $0$ and similarly an open neighborhood
$\mathcal{W}_q^s(\infty)$ consisting of initial conditions $v_0$ whose orbit
under $F_q^m$ converges to $\infty$.  These neighborhoods depend continuously
on the parameter $q$.

Let
\begin{align*}
\mathcal{P}_q^0(v_0) := \{q \in \mathbb{C} \setminus \{0\} \, : \, F_{q}^m(v_0) \rightarrow 0\} \qquad \mbox{and} \qquad  \mathcal{P}_q^\infty(v_0) := \{q \in \mathbb{C} \setminus 0 \, : \, F_{q}^m(v_0) \rightarrow \infty\}.
\end{align*}
Each is a subset of the passive parameters for the marked point $a(q) \equiv v_0$ and each is an open set.

\begin{lemma}\label{LEM:BASIN_BOUNDARY}
Within $\mathbb{C} \setminus \{0\}$ we have
\begin{align*}
{\mathcal A}_q(v_0) = \partial \mathcal{P}_q^0(v_0) = \partial \mathcal{P}_q^\infty(v_0).
\end{align*}
(Here, $\partial$ denotes the topological boundary.)
\end{lemma}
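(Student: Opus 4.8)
The plan is to prove the two containments $\partial\mathcal{P}_q^0(v_0)\subset\mathcal{A}_q(v_0)$ and $\mathcal{A}_q(v_0)\subset\partial\mathcal{P}_q^0(v_0)$, together with the analogous pair for $\mathcal{P}_q^\infty(v_0)$; the triple equality follows immediately. Throughout I will write $\mathcal{X}$ for the passive locus of the (constant) marked point $a(q)\equiv v_0$, so that $\mathcal{A}_q(v_0)=(\mathbb{C}\setminus\{0\})\setminus\mathcal{X}$ while $\mathcal{P}_q^0(v_0)$ and $\mathcal{P}_q^\infty(v_0)$ are open subsets of $\mathcal{X}$.

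For $\partial\mathcal{P}_q^0(v_0)\subset\mathcal{A}_q(v_0)$ I would show that $\mathcal{P}_q^0(v_0)$ is relatively closed in $\mathcal{X}$; being also open, it is then a union of connected components of $\mathcal{X}$, hence has no boundary point inside $\mathcal{X}$, which is exactly $\partial\mathcal{P}_q^0(v_0)\subset\mathcal{A}_q(v_0)$. To prove relative closedness, take $q_0\in\mathcal{X}\cap\overline{\mathcal{P}_q^0(v_0)}$ and a connected open neighbourhood $U\subset\mathcal{X}$ of $q_0$ on which $q\mapsto F_q^m(v_0)$ is a normal family. Since $U\cap\mathcal{P}_q^0(v_0)$ is open, nonempty, and the iterates converge to $0$ there, any locally uniform subsequential limit of $F_q^m(v_0)$ on $U$ is $0$ on a nonempty open set, hence $\equiv 0$ on the connected set $U$ by the identity principle; thus the full sequence $F_q^m(v_0)\to 0$ locally uniformly on $U$. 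Using that $v=0$ is a superattracting fixed point of $F_q$ whose basin $\mathcal{W}_q^s(0)$ varies continuously with $q$, fix a round neighbourhood $B$ of $0$ and a smaller parameter neighbourhood $U'\ni q_0$ with $B\subset\mathcal{W}_q^s(0)$ for all $q\in U'$; then $F_q^m(v_0)\in B$ for all $q\in U'$ once $m$ is large, so $U'\subset\mathcal{P}_q^0(v_0)$ and in particular $q_0\in\mathcal{P}_q^0(v_0)$. The argument for $\mathcal{P}_q^\infty(v_0)$ is identical, using the superattracting fixed point at $v=\infty$.

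For $\mathcal{A}_q(v_0)\subset\partial\mathcal{P}_q^0(v_0)$ I would apply Lemma \ref{KEY_LEMMA} with the constant marked point $b(q)\equiv 0$. First one checks $b(q)\equiv 0$ is not persistently exceptional: solving $F_q(v)=0$ gives $v^2(v^2+4v+2q)=0$, so, e.g., for $q=1$ the set $F_1^{-1}(0)=\{0,\,-2+\sqrt{2},\,-2-\sqrt{2}\}$ has three points, whence $0$ is not exceptional for $F_1$. Lemma \ref{KEY_LEMMA} then shows that every $q_0\in\mathcal{A}_q(v_0)$ lies in the closure of $\{q\in\mathbb{C}\setminus\{0\}:F_q^m(v_0)=0\text{ for some }m\ge 0\}$. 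Since $v=0$ is fixed, $F_q^{m_0}(v_0)=0$ forces $F_q^m(v_0)=0$ for all $m\ge m_0$, so this set is contained in $\mathcal{P}_q^0(v_0)$ and therefore $q_0\in\overline{\mathcal{P}_q^0(v_0)}$. On the other hand, $q_0$ is active, so $q_0\notin\mathcal{X}\supset\mathcal{P}_q^0(v_0)$, and since $\mathcal{P}_q^0(v_0)$ is open this gives $q_0\in\overline{\mathcal{P}_q^0(v_0)}\setminus\mathcal{P}_q^0(v_0)=\partial\mathcal{P}_q^0(v_0)$. For $\mathcal{P}_q^\infty(v_0)$ one runs the same argument with $b(q)\equiv\infty$: here $F_q^{-1}(\infty)=\{\infty,-q/2\}$ and $F_q^{-1}(-q/2)$ is the zero set of a quartic polynomial in $v$, which for generic $q$ has four roots distinct from $\infty$ and $-q/2$, so $\infty$ is not persistently exceptional; and $F_q^{m_0}(v_0)=\infty$ forces $F_q^m(v_0)=\infty$ for all $m\ge m_0$ since $\infty$ is fixed.

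Combining the containments yields $\mathcal{A}_q(v_0)=\partial\mathcal{P}_q^0(v_0)=\partial\mathcal{P}_q^\infty(v_0)$ within $\mathbb{C}\setminus\{0\}$. The only genuinely delicate step is the normality argument in the first containment — passing from a subsequential normal limit to convergence of the whole orbit via the identity principle, and then using continuity of the superattracting basins $\mathcal{W}_q^s(0)$, $\mathcal{W}_q^s(\infty)$ to trap an entire parameter neighbourhood inside $\mathcal{P}_q^0(v_0)$ (resp. $\mathcal{P}_q^\infty(v_0)$); the non-exceptionality verifications and the ``once it hits the fixed point it stays'' observations are routine.
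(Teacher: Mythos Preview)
Your proof is correct. For the containment $\partial\mathcal{P}_q^0(v_0)\subset\mathcal{A}_q(v_0)$ your argument is essentially the paper's: both use normality on a neighbourhood together with the identity principle to force convergence to $0$ throughout, landing $q_0$ in the interior of $\mathcal{P}_q^0(v_0)$. (Your extra step invoking continuity of the basins $\mathcal{W}_q^s(0)$ is harmless but not needed: once $F_q^m(v_0)\to 0$ on $U$, the point $q_0$ already lies in the open set $\mathcal{P}_q^0(v_0)$.)

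For the reverse containment $\mathcal{A}_q(v_0)\subset\partial\mathcal{P}_q^0(v_0)$ you take a genuinely different route. The paper argues directly via Montel: if a disc about an active $q_0$ missed $\mathcal{P}_q^0(v_0)$, then on that disc the iterates $F_q^m(v_0)$ would omit three values $0,v_\bullet,v_*$ chosen in $\mathcal{W}_q^s(0)$, forcing normality and contradicting activity. You instead invoke Lemma~\ref{KEY_LEMMA} with the constant marked point $b(q)\equiv 0$ (and $b(q)\equiv\infty$), check non-exceptionality, and use that $0$ and $\infty$ are fixed to embed the solution set $\{F_q^m(v_0)=0\}$ into $\mathcal{P}_q^0(v_0)$. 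Both arguments ultimately rest on Montel, but the paper's is self-contained and slightly shorter, while yours illustrates that Lemma~\ref{KEY_LEMMA} already encodes exactly the density statement one needs here; your approach would also generalise immediately to any persistently attracting fixed point playing the role of $0$ or $\infty$.
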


\begin{proof}
The proof is similar to that of Lemma \ref{LEM:NO_ISOLATED_ACTIVE_PARAM}.  Suppose for contradiction that $q_0 \in {\mathcal A}_q(v_0)$ and suppose there
were an $\epsilon > 0$ such that the disc $\mathbb{D}_\epsilon(q_0)$ were disjoint from $\mathcal{P}_q^0(v_0)$.
Then, there exist values $v = v_\bullet$ and $v = v_*$ close to $v = 0$ such that $v_\bullet, v_* \in {\mathcal W}^s_q(0)$ for all $q \in \mathbb{D}_\epsilon(q_0)$.
In this case the family of holomorphic functions
\begin{align}\label{EQN_SEQ}
q \mapsto F_q^m(v_0)
\end{align}
defined on $\mathbb{D}_\epsilon(q_0)$ would omit the three values $v=0,
v=v_\bullet$ and $v=v_*$.  Montel's theorem would then imply that this is a normal
family on $\mathbb{D}_\epsilon(q_0)$, contradicting that $q_0$ is active.
Therefore ${\mathcal A}_q(v_0) \subset \partial \mathcal{P}_q^0(v_0)$.  

Now suppose that $q_0 \in \partial \mathcal{P}_q^0(v_0)$.  If $q_0$ were a passive parameter, then there is an $\epsilon > 0$ such
that the sequence of functions given in Eq.\ (\ref{EQN_SEQ}) forms a normal family on $\mathbb{D}_\epsilon(q_0)$.
However, $\mathbb{D}_\epsilon(q_0)$ contains points of $\mathcal{P}_q^0(v_0)$, so the identity theorem for holomorphic functions
implies that the sequence of functions given in Eq.\ (\ref{EQN_SEQ}) converges to $0$ on all of $\mathbb{D}_\epsilon(q_0)$.
Therefore, $q_0$ is in the interior of $\mathcal{P}_q^0(v_0)$ contradicting the hypothesis that $q_0 \in  \partial \mathcal{P}_q^0(v_0)$.  We conclude that $q_0 \in {\mathcal A}_q(v_0)$ and hence
that $\partial \mathcal{P}_q^0(v_0) \subset {\mathcal A}_q(v_0)$. 

Proof that ${\mathcal A}_q(v_0) = \partial \mathcal{P}_q^\infty(v_0)$ is the same, so we omit it.
\end{proof}

\begin{lemma}\label{LEM:BASINS_NON_EMPTY}
For any $v_0 \neq 0,-4$ we have that 
\begin{align*}
\mathcal{P}_q^0(v_0) \neq \emptyset \qquad \mbox{and} \qquad \mathcal{P}_q^\infty(v_0) \neq \emptyset.
\end{align*}
In particular $\mathcal{A}_q(v_0) \neq \emptyset$ and the marked point $a(q) \equiv v_0$ is not persistently preperiodic. (I.e.\ Case~(ii) of the Dujardin-Favre criterion cannot hold on any connected component of the passive locus.)
\end{lemma}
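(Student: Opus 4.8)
The plan is to produce, for each of the two basins $\mathcal{P}_q^0(v_0)$ and $\mathcal{P}_q^\infty(v_0)$, one explicit parameter $q \in \mathbb{C} \setminus \{0\}$ lying in it, and then to deduce the remaining assertions from Lemma~\ref{LEM:BASIN_BOUNDARY} together with a short identity-theorem argument. Since $0$ and $\infty$ are (super)attracting fixed points of $F_q$ for every $q \neq 0$, it is enough to find one $q$ with $F_q(v_0) = 0$ and one $q$ with $F_q(v_0) = \infty$: in either case the whole forward orbit of $v_0$ is constant after the first step and therefore converges to the corresponding fixed point.

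Reading directly off $F_q(v) = v^2(2q+4v+v^2)/(q+2v)^2$ with $v_0$ fixed, the value $F_q(v_0)$ equals $\infty$ exactly when $q + 2v_0 = 0$ and the numerator does not also vanish, and equals $0$ exactly when $2q + 4v_0 + v_0^2 = 0$ and the denominator does not vanish (we use $v_0 \neq 0$ throughout). So I would set $q^{**} := -2v_0$, for which $2q^{**} + 4v_0 + v_0^2 = v_0^2 \neq 0$, giving $F_{q^{**}}(v_0) = \infty$; and $q^{**} \neq 0$ precisely because $v_0 \neq 0$. Likewise I would set $q^{*} := -v_0(4+v_0)/2$, for which $2q^{*} + 4v_0 + v_0^2 = 0$ and $q^{*} + 2v_0 = -v_0^2/2 \neq 0$, giving $F_{q^{*}}(v_0) = 0$; here $q^{*} \neq 0$ precisely because $v_0 \neq 0$ and $v_0 \neq -4$ — this is the only place those two hypotheses are used. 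Thus $q^{*} \in \mathcal{P}_q^0(v_0)$ and $q^{**} \in \mathcal{P}_q^\infty(v_0)$, so both basins are nonempty. Because the two basins are disjoint (a sequence on $\widehat{\mathbb{C}}$ cannot converge to both $0$ and $\infty$), $\mathcal{P}_q^0(v_0)$ is a nonempty open subset of the connected set $\mathbb{C} \setminus \{0\}$ that is not all of it (it omits $q^{**}$), so its topological boundary is nonempty, and Lemma~\ref{LEM:BASIN_BOUNDARY} gives $\mathcal{A}_q(v_0) = \partial \mathcal{P}_q^0(v_0) \neq \emptyset$.

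It then remains to exclude Case~(ii) of the Dujardin-Favre classification on every connected component $U$ of the passive locus. If $a(q) \equiv v_0$ were persistently preperiodic on such a $U$, then $U$ would be the countable union, over pairs $0 \le n < m$, of the relatively closed sets $\{q \in U : F_q^m(v_0) = F_q^n(v_0)\}$; by the Baire category theorem one of these has nonempty interior, and since $q \mapsto F_q^k(v_0)$ is rational in $q$ for each $k$, the identity theorem would force $F_q^m(v_0) \equiv F_q^n(v_0)$ on all of $\mathbb{C} \setminus \{0\}$ for some fixed $m > n$. To contradict this I would analyze $F_q$ for $q$ real and large: the elementary estimate $|F_q(v)| = O(|v|^2/|q|)$, uniform for $|v| \le 1$, shows that once an iterate enters the unit disc it converges to $0$ with strictly decreasing modulus; the only zero of $v \mapsto F_q(v)$ inside the unit disc is $v = 0$ (the other zeros are the roots of $v^2 + 4v + 2q$, whose modulus $\to \infty$ as $|q| \to \infty$); and $F_q(v_0) \neq 0$ for large $q$ since $q \neq q^{*}$. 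Hence for $q$ real and sufficiently large the forward orbit of $v_0$ under $F_q$ consists of infinitely many distinct points, so $F_q^m(v_0) \neq F_q^n(v_0)$ whenever $m \neq n$ — contradicting the displayed identity.

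The routine parts here are the verifications $F_{q^{*}}(v_0) = 0$ and $F_{q^{**}}(v_0) = \infty$ and the topological remark about boundaries. The one step requiring care — and the main obstacle — is the last one: one must genuinely confirm that for large real $q$ the orbit of $v_0$ neither lands on $0$ nor becomes periodic, which is why the estimate locating the remaining zeros of $F_q$ far from the origin, combined with the fact that $q^{*}$ is a single exceptional parameter, is needed.
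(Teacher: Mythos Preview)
Your argument is correct. The first two paragraphs are essentially identical to the paper's proof: both exhibit $q^{*} = -\tfrac{1}{2}v_0(4+v_0)$ with $F_{q^{*}}(v_0)=0$ and $q^{**} = -2v_0$ with $F_{q^{**}}(v_0)=\infty$, verify that each lies in $\mathbb{C}\setminus\{0\}$ precisely under the hypotheses $v_0\neq 0,-4$, and conclude $\mathcal{A}_q(v_0)\neq\emptyset$ from the nonemptiness of a boundary in the connected space $\mathbb{C}\setminus\{0\}$.

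The difference is in how persistent preperiodicity is excluded. The paper reuses the two parameters just found: once the identity theorem promotes a preperiodicity relation on a passive component to a global one, it observes that at $q^{*}$ the orbit of $v_0$ lands on the superattracting fixed point $0$ while at $q^{**}$ it lands on the superattracting fixed point $\infty$; these being two distinct persistent fixed points is incompatible with the orbit landing on ``the same periodic cycle'' for all $q$. You instead prove, by direct estimates, that for all sufficiently large real $q$ the orbit of $v_0$ is injective (it enters the unit disc at the first step, the only zero of $F_q$ there is $v=0$, and moduli strictly decrease), which immediately contradicts any global identity $F_q^m(v_0)\equiv F_q^n(v_0)$. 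The paper's route is shorter and recycles work already done; yours is more computational but has the virtue of being completely explicit, avoiding the slightly implicit irreducibility reasoning behind the phrase ``the same periodic cycle.'' Your Baire step is harmless but in fact unnecessary: in the Dujardin--Favre classification, ``persistently preperiodic on $U$'' already means a single relation $F_q^m(a(q))=F_q^n(a(q))$ holds throughout~$U$.
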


\begin{proof}
  If $v_0 \neq 0,-4$ then the point $q_0 = -\frac{1}{2} v_0^2-2v_0 \in
  \mathbb{C} \setminus \{0\}$ and it satisfies $F_{q_0}(v_0) = 0$, implying
  that $q_0 \in \mathcal{P}_q^0(v_0)$.  If $v_0 \neq 0$ then
  $q_\infty = -2 v_0 \in \mathbb{C} \setminus \{0\}$ and it satisfies
  $F_{q_\infty}(v_0) = \infty$, implying that $q_\infty \in
  \mathcal{P}_q^\infty(v_0)$. Since both $\mathcal{P}_q^0(v_0)$ and
  $\mathcal{P}_q^\infty(v_0)$ are non-empty open subsets of the connected space
  $ \mathbb{C} \setminus \{0\}$, their boundaries are non-empty, implying that
  $\mathcal{A}_q(v_0) \neq \emptyset$.

If the marked point $a(q) \equiv v_0$ were to be preperiodic on some connected component of the passive locus, then 
it would be preperiodic to the same periodic cycle for all $q \in \mathbb{C} \setminus \{0\}$, by the identity theorem for holomorphic functions.  However,
there are parameters $q_0$ and $q_\infty$ for which $a(q) \equiv v_0$ maps onto the two different fixed points $0$ and $\infty$, 
so this is impossible.
\end{proof}

The conditions that $q \in  \mathcal{P}_q^0(v_0)$ or that $q \in
\mathcal{P}_q^\infty(v_0)$ are easy to check numerically on the computer, so
Lemma \ref{LEM:BASIN_BOUNDARY} makes visualization of the active locus
${\mathcal A}_q(v_0)$ easy.  See, for example, Figures
\ref{bq_vm1_figure}-\ref{bq_v99_figure}.

\begin{lemma}\label{LEM:BQ_NOT_SUBSET_P}
Within $\mathbb{C} \setminus \{0\}$ we have that
\begin{align*}
{\mathcal B}_q(v_0) \subset \mathbb{C} \setminus (\mathcal{P}_q^0(v_0) \cup \mathcal{P}_q^\infty(v_0)).
\end{align*}
\end{lemma}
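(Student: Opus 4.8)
The plan is to show the complementary statement: every parameter $q_0 \in \mathcal{P}_q^0(v_0) \cup \mathcal{P}_q^\infty(v_0)$ has a neighborhood on which $Z(D_m,q,v_0)$ is zero-free for all sufficiently large $m$. Since $\mathcal{B}_q(v_0)$ is by definition the set of accumulation points of the zeros of $Z(D_m,q,v_0)$ as $m\to\infty$, this immediately yields $q_0 \notin \mathcal{B}_q(v_0)$, and since $q_0$ is arbitrary the inclusion follows. Throughout we work in $\mathbb{C}\setminus\{0\}$, where $F_q$ is a holomorphic family of rational maps and where, as recalled above, $Z(D_m,q,v_0)=0$ if and only if $F_q^m(v_0)=-q$. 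So it suffices to produce, for each such $q_0$, a compact neighborhood $\overline{U'}\ni q_0$ with $\overline{U'}\subset\mathbb{C}\setminus\{0\}$ and an index $m_0$ so that $F_q^m(v_0)\neq -q$ for every $q\in\overline{U'}$ and every $m\ge m_0$.

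The key input is a local uniform convergence statement for the orbit of the marked point $a(q)\equiv v_0$. Both $\mathcal{P}_q^0(v_0)$ and $\mathcal{P}_q^\infty(v_0)$ are open subsets of the passive locus of $a(q)$, so on a neighborhood of each of their points the sequence $q\mapsto F_q^m(v_0)$ is a normal family (in the spherical metric). On $\mathcal{P}_q^0(v_0)$ this sequence converges pointwise to the constant $0$, and on $\mathcal{P}_q^\infty(v_0)$ to the constant $\infty$; because a normal sequence that converges pointwise converges locally uniformly to the same limit, we obtain that $F_q^m(v_0)\to 0$ (respectively $\to\infty$) locally uniformly, in the spherical metric, on $\mathcal{P}_q^0(v_0)$ (respectively on $\mathcal{P}_q^\infty(v_0)$). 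One could also obtain this directly from the fact that $v=0$ and $v=\infty$ are superattracting fixed points of $F_q$: near $v=0$ one has $|F_q(v)|\le C|v|^2$ with $C$ locally uniform in $q$, giving a uniform contraction into $0$, and a symmetric estimate holds near $v=\infty$ in the coordinate $1/v$; I will use whichever phrasing reads more cleanly.

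To conclude, fix $q_0\in\mathcal{P}_q^0(v_0)$ and choose a compact neighborhood $\overline{U'}\subset\mathcal{P}_q^0(v_0)$ with $|q|\ge\delta>0$ on $\overline{U'}$, which is possible because $q_0\neq 0$. By local uniform convergence there is an $m_0$ such that, for all $q\in\overline{U'}$ and all $m\ge m_0$, the point $F_q^m(v_0)$ lies within Euclidean distance $\delta$ of $0$ (spherical closeness to $0$ gives Euclidean closeness there), hence $F_q^m(v_0)\neq -q$ on $\overline{U'}$. The case $q_0\in\mathcal{P}_q^\infty(v_0)$ is symmetric: take $\overline{U'}\subset\mathcal{P}_q^\infty(v_0)$ compact, note that $-q$ then ranges over the compact set $-\overline{U'}\subset\mathbb{C}$, which has positive spherical distance to $\infty$, while $F_q^m(v_0)\to\infty$ uniformly on $\overline{U'}$, so again $F_q^m(v_0)\neq -q$ for $m\ge m_0$. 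In either case $Z(D_m,q,v_0)$ has no zero in $\overline{U'}$ once $m\ge m_0$, so $q_0\notin\mathcal{B}_q(v_0)$, which completes the argument. The only delicate point is the middle step: upgrading the pointwise convergence that defines $\mathcal{P}_q^0(v_0)$ and $\mathcal{P}_q^\infty(v_0)$ to local uniform convergence, and tracking the spherical-versus-Euclidean distinction so that proximity of $F_q^m(v_0)$ to $0$ or $\infty$ genuinely excludes its equality with the nonzero finite value $-q$; the rest is a routine compactness argument.
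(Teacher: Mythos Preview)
Your proof is correct and follows essentially the same approach as the paper's: both argue that on any compact subset of $\mathcal{P}_q^0(v_0)$ (respectively $\mathcal{P}_q^\infty(v_0)$) the iterates $F_q^m(v_0)$ converge uniformly to $0$ (respectively $\infty$), so that eventually $F_q^m(v_0)\neq -q$ and hence $Z(D_m,q,v_0)\neq 0$. The paper's proof simply asserts this uniform convergence on compacta, whereas you supply the justification via normality (or the superattracting-fixed-point estimate), which is a welcome addition of detail but not a different route.
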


\begin{proof}
Over any compact $K \subset \mathcal{P}_q^0(v_0) \setminus \{0\}$ there is a uniform
$M$ such that for all $m \geq M$ we have that $F_q^m(v_0)$ is sufficiently close to $0$ that it is not equal to $b(q) = -q$.  (Here we used that $q \neq 0$.)
This implies that ${\mathcal B}_q(v_0)$ is disjoint from $\mathcal{P}_q^0(v_0)$.  The same
holds for $\mathcal{P}_q^\infty(v_0)$ using identical reasoning.
\end{proof}

The following proposition summarizes Lemmas \ref{LEM:ACTIVE_SUBSET_BQ} - \ref{LEM:BQ_NOT_SUBSET_P}:

\begin{proposition}\label{PROP_DESCRIBING_FIGURES}
Within $\mathbb{C} \setminus \{0\}$ 
the continuous accumulation locus ${\mathcal B}_q(v_0)$ of zeros for
$Z(D_m,q,v_0)$ as $m\rightarrow \infty$ contains all parameters in the boundary
of $\mathcal{P}_q^0(v_0)$ and all parameters in the boundary of $\mathcal{P}_q^\infty(v_0)$ (these
boundaries are equal).  Moreover, ${\mathcal B}_q(v_0)$ is disjoint from $\mathcal{P}_q^0(v_0) \ \cup \mathcal{P}_q^\infty(v_0)$.
\end{proposition}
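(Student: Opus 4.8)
The plan is simply to assemble the three preceding lemmas, since Proposition~\ref{PROP_DESCRIBING_FIGURES} is stated as a summary of Lemmas~\ref{LEM:ACTIVE_SUBSET_BQ}--\ref{LEM:BQ_NOT_SUBSET_P}. First I would recall that, within $\mathbb{C} \setminus \{0\}$, Lemma~\ref{LEM:BASIN_BOUNDARY} identifies the active locus ${\mathcal A}_q(v_0)$ with \emph{both} topological boundaries $\partial \mathcal{P}_q^0(v_0)$ and $\partial \mathcal{P}_q^\infty(v_0)$; in particular these two boundaries coincide, which is the parenthetical claim in the statement. Combining this identification with Lemma~\ref{LEM:ACTIVE_SUBSET_BQ}, which asserts ${\mathcal A}_q(v_0) \subset {\mathcal B}_q(v_0)$, immediately gives $\partial \mathcal{P}_q^0(v_0) = \partial \mathcal{P}_q^\infty(v_0) \subset {\mathcal B}_q(v_0)$, establishing the first assertion of the proposition.

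For the second assertion, I would invoke Lemma~\ref{LEM:BQ_NOT_SUBSET_P} verbatim: it states precisely that ${\mathcal B}_q(v_0) \subset \mathbb{C} \setminus (\mathcal{P}_q^0(v_0) \cup \mathcal{P}_q^\infty(v_0))$, i.e.\ that ${\mathcal B}_q(v_0)$ is disjoint from the union of the two basins. Together with the inclusion just obtained, this completes the proof.

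Since the proposition only repackages material already proven, there is no real obstacle here; all of the analytic content --- the Montel normal-family arguments, the fact that $b(q) = -q$ is not persistently exceptional for $F_q$, and the boundary identifications --- has been discharged in Lemmas~\ref{LEM:ACTIVE_SUBSET_BQ}--\ref{LEM:BQ_NOT_SUBSET_P}. The only point requiring minor care is to keep the ambient parameter space equal to $\mathbb{C} \setminus \{0\}$ throughout, because the drop in degree of $F_q$ at $q = 0$ means the marked-point machinery is only valid there; each of the three cited lemmas is already phrased with exactly this restriction, so no extra work is needed and the argument is a one-line synthesis.
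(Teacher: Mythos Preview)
Your proof is correct and is exactly the synthesis the paper intends: the proposition is stated there as a summary of Lemmas~\ref{LEM:ACTIVE_SUBSET_BQ}--\ref{LEM:BQ_NOT_SUBSET_P} with no separate proof given, and your assembly of Lemma~\ref{LEM:BASIN_BOUNDARY} (boundary equals active locus) with Lemma~\ref{LEM:ACTIVE_SUBSET_BQ} (active locus inside ${\mathcal B}_q(v_0)$) and Lemma~\ref{LEM:BQ_NOT_SUBSET_P} (disjointness from the basins) is precisely what is meant.
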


\subsection{On the possibility that ${\mathcal A}_q(v_0)
\subsetneq {\mathcal B}_q(v_0)$}

For a given choice of $v_0$ it may be possible to have ${\mathcal A}_q(v_0)
\subsetneq {\mathcal B}_q(v_0)$ as a result of there being components of the
passive locus for $a(q) \equiv v_0$ other than those in $\mathcal{P}_q^0(v_0)
\, \cup \, \mathcal{P}_q^\infty(v_0)$.  In our computer-generated pictures
(Figures \ref{bq_vm1_figure}-\ref{bq_v99_figure}) for many values of $v_0$ one
can see subsets of points colored black, corresponding to the condition that
$F_q^m(v_0) \not \rightarrow 0$ and $F_q^m(v_0) \not \rightarrow \infty$.  Any
connected component of the interior of this black subset will be a passive
component for the marked point $a(q)$.  We will see that some of these
components correspond to the marked point having orbit attracted to attracting
periodic orbits for $F_q$ other than $v=0$ and $v=\infty$.  For such
components, one can use similar reasoning to the proof of Proposition
\ref{PROP_DESCRIBING_FIGURES} to rule out points of ${\mathcal B}_q(v_0)$.
However, the other behaviors (i) and (iv) described in the Dujardin-Favre
classification of the passive locus could lead to points of ${\mathcal
  B}_q(v_0)$ that are within these black regions.  (Behavior (ii) is ruled out
by Lemma~\ref{LEM:BASINS_NON_EMPTY}.)

This is similar to the situation for the continuous accumulation locus of zeros
${\mathcal B}_v(q_0)$ in the $v$-plane for a fixed $q_0$.  So long as $v=-q_0$
is not an exceptional point of $F_{q_0}$ it follows from Montel's theorem the
Julia set of $F_{q_0}$ satisfies $J(q_0) \subset {\mathcal B}_v(q_0)$.
However, it can be possible to have $J(q_0) \subsetneq {\mathcal B}_v(q_0)$.
For example, this will happen if:
\begin{itemize}
\item[(i)]  $F_{q_0}$ has a Siegel Disc $D$ in its Fatou set (i.e.\ a component of the Fatou set on which $F_q$ is conjugate
to an irrational rotation),
\item[(ii)]  $v=-q_0 \in D$, and 
\item[(iii)] $v=-q_0$ is not equal to the unique fixed point of $F_q$ that is in $D$.
\end{itemize}
Then the closure of its forward orbit of $v$ will form a simple closed curve
$\gamma \subset D$, because the dynamics of $F_q$ is conjugate to an irrational
rotation on~$D$.  This leads to some of the iterated preimages of $v$ under
$F_{q_0}$ accumulating on every point of $\gamma$, implying that $\gamma
\subset {\mathcal B}_v(q_0) \setminus J(q_0)$.

For the $v$-plane zeros, this issue can be handled by considering only the
locus ${\mathcal B}'_v(q_0)$ where a positive proportion of the zeros of
$Z(D_m,q_0,v)$ accumulate.  So long as $-q_0$ is not an exceptional point for
$F_{q_0}(v)$ it is a consequence of the Lyubich and Friere-Lopes-Ma\~{n}e
Theorems \cite{Ly1,Ly2,FLM} that ${\mathcal B}'_v(q_0) = J(q_0)$.  In other words, any
possible zeros of $Z(D_m,q_0,v)$ occurring in the Fatou set of $F_{q_0}$ do so
with arbitrarily small proportion, in the limit $m \rightarrow \infty$.

This quantitative approach can also be taken in the $q$-planes.  Consider the
locus ${\mathcal B}'_q(v_0) \subset {\mathcal B}_q(v_0)$  where a positive
proportion of the zeros from $Z(D_m,q,v_0)$ accumulate.  For rational $v_0$
(and even any algebraic number $v_0$) it is a consequence of Theorem C' from
\cite{rr} that ${\mathcal A}_q(v_0) = {\mathcal B}'_q(v_0)$.  In other words,
any possible zeros of $Z(D_m,q,v_0)$ occurring in the passive locus of $a(q)
\equiv v_0$ for $F_q$ do so with arbitrarily small proportion, in the limit $m
\rightarrow \infty$.

\subsection{On thinking in $\mathbb{C}^2$}
\label{SUBSEC:C2}

It is very helpful to think about the continuous accumulation loci of zeros for
$Z(D_m,q,v)$, $m \rightarrow \infty$, as being a single object in
$\mathbb{C}^2\
$, with the loci
$B_q(v_0)$ being horizontal slices and $B_v(q_0)$ being
vertical slices of the same object.  This allows one to gain insight about
$B_q\
(v_0)$ by looking
at $B_{v_0}(q)$ near $q=q_0$ and vice-versa.

While this is very good intuition, additional care must be taken to ensure the
\
results are rigorous, as the following delicate example shows.
It is very natural to expect that
\begin{align}\label{BELIEF}
q_0 \in {\mathcal B}_q(v_0) \qquad \mbox{if and only if} \qquad v_0 \in
{\mathcal B}_v(q_0).
\end{align}
However, there are some delicate situations where this potentially might not
hold. Suppose that there is a component ${\mathcal P}$ of the passive locus for
the marked point $a(q) \equiv v_0$ under $F_q(v)$ for which $a(q) \in J(q)$
(Julia set for $F_q$) for all $q \in {\mathcal P}$.  (This would correspond to
Case (i) of the Dujardin-Favre classification, since we've ruled out Case (ii)
by Lemma \ref{LEM:BASINS_NON_EMPTY}.) Then, $v_0 \in {\mathcal B}_v(q)$ for all
$q \in {\mathcal P}$ because $J(q) \subset {\mathcal B}_v(q)$.

If there are some parameters $q \in {\mathcal P}$ for which $b(q) = -q$ is in
the basin of attraction of an attracting cycle of $F_q$, then such parameters
would form an open subset ${\mathcal P}_0$ of ${\mathcal P}$.  For any $q \in
{\mathcal P}_0$ we cannot have $F_q^m(a(q)) = b(q)$ because the Julia set and
Fatou set are complementary and invariant under $F_q$.  We would therefore have
$Z(D_m,q,v_0) \neq 0$ on the open set ${\mathcal P}_0$ for all $m \geq 0$.  In
particular, such $q$ are not in ${\mathcal B}_q(v_0)$ and Eq.\ (\ref{BELIEF})
would fail.

We do not have an explicit example of this problematic behavior for the RG
mapping $F_q(v)$.  However, to avoid such delicate issues, we will work with
the techniques from Section \ref{SUBSEC_USING_ACTIVE_PARAMS} involving the
active parameters ${\mathcal A}_q(v_0)$.  (See also Appendix \ref{q0_appendix}
for discussion of an additional problem that happens when $q=0$.)

% =======================================================================

\section{Zeros in the $q$ Plane for $v=-1$: Chromatic Zeros}
\label{chromaticzeros}

In this section we study the chromatic polynomial $P(D_m,q)=Z(D_m.q,-1)$ and
its zeros in the complex $q$ plane, i.e., the chromatic zeros of $D_m$ and
their continuous accumulation set ${\mathcal B}_q(-1)$ for $m \to \infty$. The
left side of Figure \ref{bq_vm1_figure} shows a computer-generated image of the
regions
\begin{align*}
{\mathcal P}_q^0(-1) &:= \{q \in \mathbb{C} \setminus \{0\} \, : \, F_q^m(-1) \rightarrow 0\} \qquad \mbox{colored white, and} \\
{\mathcal P}_q^\infty(-1) &:= \{q \in \mathbb{C} \setminus \{0\} \, : \, F_q^m(-1) \rightarrow \infty \} \qquad \mbox{colored blue.}
\end{align*}
Any point that is not in one of these two sets is colored black.  Because the
sets ${\mathcal P}_q^0(-1)$ and ${\mathcal P}_q^\infty(-1)$ are open, the set
of points colored black is closed.  According to Proposition
\ref{PROP_DESCRIBING_FIGURES}, ${\mathcal B}_q(-1)$ contains any point of the
boundary between the white, blue, and black sets.  It may contain additional
points, but they are necessarily in the interior of the black set.
The right side of Figure \ref{bq_vm1_figure} shows a plot of the 171 zeros of
$Z_r(D_m.q,-1)$ computed numerically in Mathematica.  (We have omitted the zero
at $q=0$, hence the subscript $r$ indicating that we consider the reduced
partition function.)
%
% Fig. 3
\begin{figure}
  \begin{center}
\scalebox{1.2}{
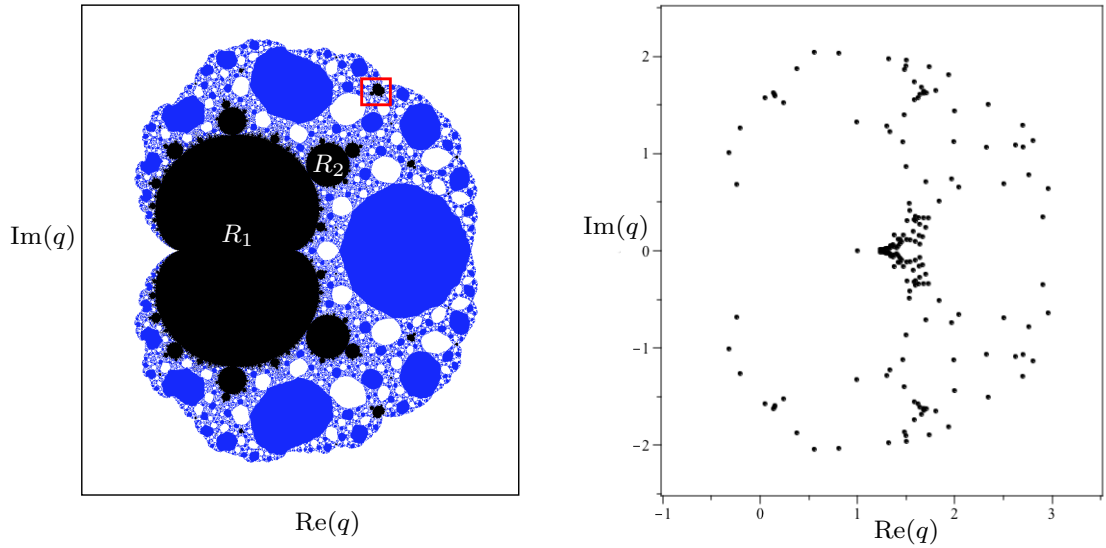
}
    \end{center}
    \caption{Left: Region diagram for $D_\infty$ in the complex $q$ plane for
      $v=-1$.  The locus ${\mathcal B}_q(-1)$ contains the boundaries between
      the white, blue, and black regions.  It may also contain additional
      points in the black regions.  (See Proposition
      \ref{PROP_DESCRIBING_FIGURES} and the paragraphs following it for further
      discussion.)  The red box is referred to in the caption to Figure
      \ref{bq_vm1_zoom_figure}.  Right: Zeros of the reduced chromatic
      polynomial $P_r(D_4,q)=Z_r(D_4,q,-1)$ (171 zeros).  Both left and right
      figures depict $-1 < {\rm Re}(q) < 3.5$ and $-2.5 < {\rm Im}(q) < 2.5$. }
  \label{bq_vm1_figure}
\end{figure}
%

% ============== region diagram for v=-1 ============================

% Fig. 4
\begin{figure}
  \begin{center}
\scalebox{1.2}{
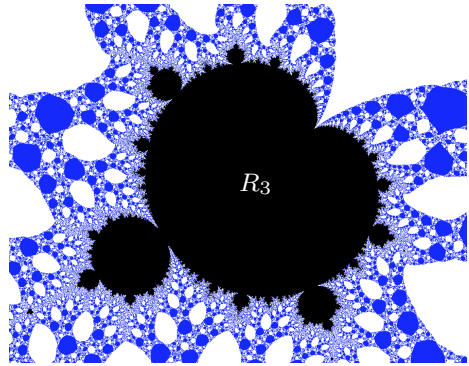
}
    \end{center}
    \caption{Magnified view of area surrounded by the small red box from the
      region diagram on Figure \ref{bq_vm1_figure}.  Here, $v=-1$ and $1.9 \leq
      {\rm Re}(q) \leq 2.15$ and $1.55 \leq {\rm Im}(q) \leq 1.75$.}
  \label{bq_vm1_zoom_figure}
\end{figure}
%

% ==================================================================

According to Lemma \ref{LEM:BASIN_BOUNDARY} the interior of the black set is
contained in the passive locus for the marked point $a(q) \equiv -1$ under
$F_q$.  Let us consider what happens on the connected components $R_1, R_2,$
and $R_3$ of the interior of the black set that are labeled on Figures
\ref{bq_vm1_figure} and \ref{bq_vm1_zoom_figure}.

We begin with the component labeled $R_1$, containing the point $q=0.5$.  It
intersects the real $q$ axis in the interval $(0,32/27)$.  For $q \in
(0,32/27)$, the fixed point $v_{-}(q)$ given in Eq. (\ref{v23}) is an
attracting fixed point for $F_q$ that differs from $0$ and $\infty$.
Furthermore, this fixed point $v_{-}(q)$ can be analytically continued for all
$q \in R_1$ and it remains an attracting fixed point of $F_q$ at these $q$
values.  For all $q \in R_1$ the orbit of initial condition $a(q) \equiv -1$
converges to $v_{-}(q)$, which is an example of passive behavior for this
marked point.

Now consider component $R_2$.  We have performed numerical computer studies
that indicate that for every $q \in R_2$, the RG mapping $F_q$ has a periodic
orbit of period 3 that is attracting.  For values of $q \in R_2$, the orbit of
$a(q) \equiv -1$ under $F_q$ converges to this attracting cycle.

Finally, consider component $R_3$, which is the cardioid of the ``baby
Mandelbrot set'' in the upper right corner of the left half of
Fig. \ref{bq_vm1_figure} that is also shown in a magnified view in
Fig. \ref{bq_vm1_zoom_figure}.  Numerical experiments show that for all $q \in
R_3$, the RG mapping $F_q$ has an attracting periodic cycle of period~2, and
that the marked point $a(q) \equiv -1$ has orbit converging to it.

The complexity of this region diagram is evident, even with the finite
resolution of Fig. \ref{bq_vm1_figure}.  One reason for the complexity is the
appearance of baby Mandelbrot sets.  We show one of them in
Fig.~\ref{bq_vm1_zoom_figure}, corresponding to a magnified view of the
region enclosed by a red box in Fig. \ref{bq_vm1_figure}.  However,
there are infinitely many baby Mandelbrot sets in Fig. \ref{bq_vm1_figure}.
Their existence can be explained using complex dynamics renormalization theory:

\medskip
\begin{theorem}[McMullen 1997 \cite{muniv}]
  Suppose $f_\lambda(z)$ is a holomorphic family of rational maps and
  $c(\lambda)$ is a marked critical point for $f_\lambda(z)$.  If there is at
  least one parameter $\lambda_0$ so that the marked critical point
  $c(\lambda)$ is active under $f_\lambda(z)$, then the active locus for the
  marked critical point $c(\lambda)$ contains quasiconformal copies of the the
  Mandelbrot set (or possibly of the degree $d > 2$ generalization thereof).
\end{theorem}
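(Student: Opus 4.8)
\emph{Overall strategy.} The plan is to follow McMullen's proof that the Mandelbrot set is universal: near any active parameter one manufactures a holomorphically varying family of polynomial‑like maps, straightens it, and identifies a quasiconformal copy of $M$ (or of its degree‑$d$ analogue $M_d$) inside the bifurcation locus of that family, which in turn lies inside the active locus of $c(\lambda)$ for $f_\lambda$.

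\emph{Step 1: reduction to a superattracting center.} Since $c(\lambda)$ is active at $\lambda_0$, the sequence $\lambda \mapsto f_\lambda^m(c(\lambda))$ is not normal on any neighborhood of $\lambda_0$, so by Montel's theorem (exactly as in the proof of Lemma~\ref{KEY_LEMMA} and Lemma~\ref{LEM:NO_ISOLATED_ACTIVE_PARAM}) one finds, arbitrarily close to $\lambda_0$, a parameter $\lambda_1$ at which the critical orbit lands on a repelling periodic cycle. Perturbing $\lambda_1$ one obtains a nearby parameter $\lambda_*$ at which $c(\lambda_*)$ is itself periodic, say $f_{\lambda_*}^p(c(\lambda_*)) = c(\lambda_*)$, so that $c(\lambda_*)$ lies on a superattracting cycle of $f_{\lambda_*}$; one arranges this center so that the localized dynamics genuinely varies in parameter (this is where the activity hypothesis is spent, see Step~3). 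It then suffices to exhibit a quasiconformal copy of $M_d$ inside the bifurcation locus of a polynomial‑like family localized near $\lambda_*$, since that bifurcation locus is contained in the active locus of $c(\lambda)$ for $f_\lambda$.

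\emph{Step 2: a holomorphic family of polynomial‑like maps.} Put $g_\lambda := f_\lambda^p$. At $\lambda = \lambda_*$ the point $c_* := c(\lambda_*)$ is a superattracting fixed point of $g_{\lambda_*}$ whose local degree $d$ equals the local degree of $f_{\lambda_*}$ at $c_*$ (generically $d = 2$). Using the B\"ottcher coordinate on the immediate basin of $c_*$, I would choose topological disks with $\overline{U} \subset U'$ around $c_*$ so that $g_{\lambda_*}\colon U \to U'$ is a polynomial‑like map of degree $d$, and then, after shrinking to a parameter disk $\Delta \ni \lambda_*$, extend this to a holomorphic family of polynomial‑like maps $g_\lambda\colon U_\lambda \to U'$ of degree $d$ with marked critical point $c(\lambda)$. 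The only delicate point is that the proper boundary matching on the annulus $U' \setminus \overline{U}$ must persist as $\lambda$ varies; the superattracting structure gives it at $\lambda_*$, and continuity gives it on a small enough $\Delta$.

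\emph{Step 3: straightening and the degree‑one argument.} Apply the Douady--Hubbard straightening theorem to each $g_\lambda$, obtaining a quasiconformal conjugacy between $g_\lambda$ near its filled Julia set and a polynomial $z \mapsto z^d + w$; this defines a straightening map $\chi\colon \Delta \to \mathbb{C}$ into the parameter plane of the family $z \mapsto z^d + w$. The crux is to show that $\chi$ is continuous and proper of topological degree $1$ onto a neighborhood of the connectedness locus, so that $\chi^{-1}(M)$ (respectively $\chi^{-1}(M_d)$ when $d > 2$) is a homeomorphic, in fact quasiconformal, copy of the Mandelbrot set (respectively of its degree‑$d$ generalization). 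Degree at least $1$ comes from the non‑triviality arranged in Step~1: the critical value $g_\lambda(c(\lambda))$ winds at least once around the escaping locus of the polynomial‑like family as $\lambda$ traverses $\partial \Delta$. Degree at most $1$ comes from the holomorphic motion of the Julia sets of the polynomial‑like family over the escaping parameters, which conjugates the maps and hence makes $\chi$ injective there. Finally, $\chi^{-1}(M_d)$ lies in the bifurcation locus of the polynomial‑like family, hence in the active locus of $c(\lambda)$ for $f_\lambda$, and the copies are quasiconformal because straightening conjugacies are $K$‑quasiconformal with $K$ bounded over $\Delta$.

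\emph{Main obstacle.} The two hard steps are (a) the construction of the polynomial‑like family over a full parameter disk with uniform control of the proper degree, and (b) the degree‑one property of the straightening map $\chi$, which already subsumes the statement that $\chi$ is non‑constant. Step~(b) is the heart of McMullen's argument: one must extract ``transversality for free'' from mere activity, with no hypothesis that the critical point moves non‑degenerately in the parameter, and this is achieved through the interplay of Montel's theorem with the holomorphic motion of the Julia sets of the polynomial‑like family. I expect (b) to be the principal difficulty; (a) is technical but routine once the superattracting center of Step~1 is in hand.
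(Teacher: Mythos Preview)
The paper does not prove this theorem at all: it is quoted as a result of McMullen (reference \cite{muniv}) and immediately applied, with no argument given. So there is no ``paper's own proof'' to compare your proposal against.

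That said, your sketch is a reasonable outline of McMullen's original argument via polynomial-like families and straightening. One point to be careful about in Step~1: from the fact that the critical orbit lands on a repelling cycle at $\lambda_1$ you do not get a superattracting parameter $\lambda_*$ by a mere perturbation; rather, one uses that at the Misiurewicz parameter $\lambda_1$ the marked critical point is active, and then argues (again via non-normality) that nearby there are parameters where $c(\lambda)$ is periodic. Also, in Step~3 the straightening map $\chi$ need not have topological degree exactly one in general; McMullen's argument shows it is a nonconstant holomorphic map onto a neighborhood of $M_d$, which already suffices to pull back a quasiconformal copy of $M_d$. These are refinements rather than fatal gaps, and your identification of (b) as the crux is correct.
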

\noindent
In particular, it follows from the work of Shishikura \cite{Shi1} that the active locus of $c(\lambda)$ has Haudorff dimension equal to 2.

With regard to the application of McMullen's Theorem to our $F_q(v)$ in Eq.\
(\ref{vp}) with marked point $a(q) \equiv -1$, one must note that $a(q) = -1$
is not a critical point for $F_q$. However, $c(q) = -1 + \sqrt{1-q}$ is a
marked critical point for $F_q$, and one can check that
\beq
F_q(c(q)) = F_q\left(-1 + \sqrt{1-q}\right) \equiv -1 \equiv a(q) \ ,
\label{rd}
\eeq
so that $a(q)$ and $c(q)$ have the same orbit under $F_q^m$, which, in turn,
implies that their active and passive loci are the same.  Therefore, McMullen's
theorem implies that the limiting locus of chromatic zeros for the DHL has
Hausdorff dimension equal to 2 and that it contains small copies of the
Mandelbrot set (as seen in the figures).  (This was first observed by Chio and
Roeder \cite[Theorem B]{rr}.)

An important remark here is that it is essential that the marked point be a
critical point for~$F_q$.  So, the McMullen theorem does not apply to any of
the other slices with constant $v_0 \ne -1$ that we are considering.  In
particular, it does not imply that the limiting locus of zeros, ${\mathcal
  B}_q(v_0)$, for these other slices have Hausdorff dimension 2. This also
explains the absence of small Mandelbrot sets in Figures \ref{bq_vm0p8_figure}
- \ref{bq_v99_figure} for $v \ne -1$.

In fact, one can prove that the activity locus ${\mathcal A}_q(-1)$ for the
marked point $a(q) \equiv -1$ under $F_q(v)$ coincides with the bifurcation
locus ${\mathcal M}$ of the mapping.  This is done in Proposition 7.2 from
\cite{rr} for the mapping $r_q(y)$ and marked point $y = a(q) = 0$, which
correspond under conjugacy to the situation here.    Therefore, the work of
Wang, Qui, Yin, Qiao, and Gao \cite[Theorem 1.1]{WQYQG} and Yang and Zeng
\cite[Theorem 1.2]{yangzeng} implies that the boundary between the white, blue, and
black regions shown in Figure \ref{bq_vm1_figure} is connected (see Lemma \ref{LEM:BASIN_BOUNDARY}).
Furthermore, because ${\mathcal A}_q(-1) = {\mathcal M}$, parameters where the dynamics of $F_q(v)$
bifurcates will be in ${\mathcal B}_q(-1)$.  Again, this is special to the
case $v=-1$.

We comment on several general properties. For this and the other region
diagrams studied here, with both $v_0 \in [-1,0)$ (antiferromagnetic range) and
$v_0 > 0$ (ferromagnetic range), the outer part of the diagram, extending
infinitely far from the origin, is characterized by the property that $\lim_{m
\to \infty} F_q^m(v_0)=0$, as indicated by the white color.  For real $v_0$,
this can be understood as follows.  In both the antiferromagnetic and
ferromagnetic Potts model, an increase in $q$ introduces more fluctuations in
the system, since the spin at each site can take on values in a larger set.
Hence, for a given temperature and hence a given value of $v_0$, the system
will be in the disordered phase with no long-range spin-spin ordering.  The RG
transformation will thus move the system toward the infinite-temperature fixed
point at $v=0$.
%Here we assume that $v$ is not zero initially; if the initial value of $v$ is
%zero, then the RG transformation maps $v=0$ to $v'=0$, i.e., $y=1$ to $y'=1$.

The inner part of the diagram is comprised of blue, black, and also white
regions (the white regions being separated from the outer white region by parts
of ${\mathcal B}_q(v_0)$. An outer boundary separates the outer white region
from this complex inner set of regions.  

Let us now discuss some properties of how ${\mathcal B}_q(-1)$ intersects the
real $q$ axis.  We refer the reader to Figure \ref{bq_vm1_figure} throughout
the discussion.  We claim that:

\begin{itemize}

\item[(i)] The left-most real point where ${\mathcal B}_q(-1)$ intersects the
  real $q$ axis is $q=0$,

\item[(ii)] The right-most real point where ${\mathcal B}_q(-1)$ intersects the
  real $q$ axis is $q=3$, and

\item[(iii)] There is an infinite sequence of points $q_n$ where ${\mathcal B}_q(-1)$ intersects the real $q$ axis
converging to $q_\infty = \frac{32}{27} \in {\mathcal B}_q(-1)$.

\end{itemize}
We will first give a physical description and interpretation of these
properties.  They will later be proved as part of Theorem \ref{THM:AFM}.

Let us denote by $q_c(D_\infty) = 3$ the right most place where ${\mathcal
B}_q(-1)$ intersects the real $q$ axis.  It can be explained by the fact that
there is a qualitative change in the locus ${\mathcal B}_v(q)$ at $q=3$, namely
the appearance of an antiferromagnetic transition at $T=0$ \cite{bambihu89}.
Therefore, the intuitive approach that $(q_0,v_0) \in {\mathcal B}_q(v_0)$ if
and only if $(q_0,v_0) \in {\mathcal B}_v(q_0)$ indicates that this should lead
to $q = 3 \in \mathcal{B}_q(-1)$.  This expresses the property that the $q=3$
Potts antiferromagnet has a zero-temperature critical point on $D_\infty$. 

Note that although ${\mathcal B}_q(-1)$ crosses the real $q$ axis at this
point, the chromatic polynomial $P(D_m,3)$ has the nonzero value (\ref{pdmq3})
at $q=3$.  We remark that $q_c(D_\infty)$ is equal to the value
$q_c(S_{\infty})=3$ that was inferred in a similar manner for the Sierpinski
gasket fractal $S_\infty$ in \cite{sg} and is also the same as the value
$q_c(sq)=3$ for the (infinite) square lattice \cite{lenard}.  Interestingly, it
is also the same as the value of $q_c$ that was derived for infinite-length
self-dual strips of the square lattice~\cite{sdg}.

The nature of the Julia set $J(q)$ and hence of ${\mathcal B}_v(q)$
changes qualitatively depending on whether the discriminant (\ref{disc3}) is
positive, negative or zero, and the demarcation point between positive and
negative values occurs at the special value $q=32/27$. Connected with this, we
infer that ${\mathcal B}_q$ crosses the real $q$ axis at $q=32/27$ and,
furthermore, that this is the minimal positive value of $q$ where such a
crossing occurs. Note that the point $q=32/27$ itself is not a chromatic zero
of $D_m$ for any $m$.  More generally, it has been proved that for an arbitrary
graph, the real interval $(1,32/27]$ is free of chromatic zeros \cite{jackson}
(see also \cite{thomassen}).

Let us now start at the right-most crossing at $q=3$ and move to the left
through the adjacent inner blue region. This blue region extends down to a 
point that is the unique real root of the cubic equation 
\beq
q^3-5q^2+11q-9=0 \ ,
\label{q1cubic}
\eeq
namely, 
\beqs
q_1 &=& -\frac{1}{3}\Big ( 1 + 3\sqrt{57} \, \Big )^{1/3} +
\frac{8}{3\Big ( 1 + 3\sqrt{57} \, \Big )^{1/3}} + \frac{5}{3} \cr\cr
&=& 1.638896919...
\label{q1}
\eeqs
where ${\mathcal B}_q(-1)$ crosses the real $q$ axis. Indeed, one finds that
\begin{align*}
F_q(-1) &= v_{c,{\rm PM-FM}}(q) \qquad \mbox{if $q=q_1$ or $q=3$, and} \\
F_q(-1) &> v_{c,{\rm PM-FM}}(q) \qquad \mbox{if $q_1 < q < 3$.}
\end{align*}
The former implies that the marked
point $v=-1$ is active for $q=q_1$ and for $q=3$ so that
Lemma~\ref{LEM:ACTIVE_SUBSET_BQ} gives that they are in $\mathcal{B}_q(-1)$.
The latter implies that $(q_1,3) \in \mathcal{P}_q^\infty(-1)$ (blue region);
see Lemma \ref{LEM:FM_DYNAMICS}(ii).
Eqn.\ (\ref{q1}) was obtained by solving
\begin{align}\label{EQN:PREFIXED}
F_q^2(-1) = F_q(-1),
\end{align}
whose solutions correspond to all values of $q$ for which $v=-1$ is either a
fixed point or mapped to a fixed point.  Eqn.\ (\ref{EQN:PREFIXED}) has four
real solutions $q=1, q=\frac{3}{2}, q=q_1 \simeq 1.6388,$ and $q=3$.  (The
first two solutions correspond to passive behaviors for the marked point
$v=-1$, with it being a superattracting fixed point when $q=1$ and it being
mapped by $\ F_q$ to the superattracting fixed point at $v=0$ when
$q=\frac{3}{2}$.)

Fig. \ref{bq_vm1_figure} also shows a succession of regions and associated
crossings $q_n$ of ${\mathcal B}_q(-1)$ with the real $q$ axis, extending to
the left of $q_1$ and converging on $q_\infty=32/27$ from above.  Between these
crossings are an infinite set of regions, alternating between blue and white,
also decreasing to $q_\infty=32/27$ from above.  However, the figure,
calculated and presented to finite resolution, can only show a finite subset of
these.

We have numerically computed the ten largest crossing points of $B_q(-1)$ with
the real $q$ axis and we present them in Table \ref{crossings}.  They
were computed using a method similar to the computation of $q_1$, described
above.  More specifically, for $1 \leq k \leq 5$ we numerically solved the
equation
\begin{align*}
F_q^k(-1) = v_{c,{\rm PM-FM}}(q)
\end{align*}
for $q$ within suitably chosen intervals that were deduced from Figure
\ref{bq_vm1_figure}.

\begin{table}
\begin{center}
\begin{tabular}{|cl|}
\hline
$q_0 =$ &  $3$  \\
$q_1 =$ &  $1.6388969195$  \\
$q_2 =$ &  $1.4097005138$  \\
$q_3 =$ &  $1.3232009243$  \\
$q_4 =$ &  $1.2798668287$  \\
$q_5 =$ &  $1.2546493642$  \\
$q_6 =$ &  $1.2385319865$  \\
$q_7 =$ &  $1.2275429153$  \\
$q_8 =$ &  $1.2196860382$  \\
$q_9 =$ &  $1.2138598416$  \\
\hline
\end{tabular}
\vspace{0.1in}
\caption{\label{crossings} Approximate values for the $10$ largest
  intersections between $B_q(-1)$ and the real $q$ axis.}
\end{center}
\end{table}

It is of interest to compare the exact results discussed above and depicted on
the left hand side of in Fig.  \ref{bq_vm1_figure} with the chromatic zeros of
$D_m$ calculated for finite $m$ shown on the right hand side of
Fig.~\ref{bq_vm1_figure}.  Extensive experience with chromatic zeros of
sections of regular lattices has shown that a subset of these approach the
locus ${\mathcal B}_q(-1)$ as the number of vertices gets large (e.g.,
\cite{w}, \cite{pg}-\cite{k}).  Here we observe a similar behavior, although
for the diamond hierarchical lattice ${\mathcal B}_q(-1)$ is obviously much
more complicated than the real algebraic curves and possible line segments
comprising the loci ${\mathcal B}_q(-1)$ for the $n \to \infty$ limits of
sections of regular lattices and related chain graphs. In particular, one can
see (complex-conjugate pairs of) zeros near $q=0$ and $q=3$, as well as a
clustering of chromatic zeros forming a wedge-shaped pattern, with the apex of
the wedge facing left and located on the real $q$ axis at $q \simeq 1.2$, close
to $q_\infty$.  The zeros that we have calculated for $D_m$ graphs show
considerable scatter, and in this respect they differ from the chromatic zeros
that were calculated in \cite{sg} for Sierpinski graphs. From a comparison of
the zeros for $1 \le m \le 4$, we find that the left-most complex-conjugate
pair of zeros move toward the point $q=0$ as $m$ increases, in agreement with
the property deduced from the analysis leading to Fig. \ref{bq_vm1_figure},
that $q=0$ is a crossing of ${\mathcal B}_q(-1)$.  This is consistent with the
fact that in other cases (e.g. \cite{wcyl,nec,a,s3a,ta} where this behavior (of
complex-conjugate pairs of zeros in the vicinity of $q=0$ moving toward the
latter point as $n$ increases) is observed, and one has exact results for
${\mathcal B}_q(-1)$, it is associated with the property that for $n \to
\infty$, the locus ${\mathcal B}_q(-1)$ passes through $q=0$ and separates the
complex $q$ plane into different regions.

% ======================================================

In analyzing these chromatic zeros and their limiting behavior for $m \to
\infty$, it is useful to recall some rigorous results on zero-free regions on
the real $q$ axis. Since the signs of descending powers of $q$ in $P(G,q)$
alternate, an elementary property is that $P(G,q)$ has no zeros in the interval
$(-\infty,0)$.  For an arbitrary graph $G$, there are also no chromatic zeros
in the interval (0,1) and, as mentioned above, in the interval $(1,32/27]$; see
\cite{jackson}, \cite{thomassen}, \cite{dong}.  Thus, although ${\mathcal
  B}_q(-1)$ crosses the real $q$ axis at the point $q=32/27$, this point itself
is not a chromatic zero.  Since $P(G,q)$ always has a factor of $q$, it always
vanishes at $q=0$, and if, as is the case here, $G$ has at least one edge, then
$P(G,q)$ also vanishes at $q=1$.  For $1 \le m \le 4$, we find that the only
real zeros of $P(D_m,q)$ are $q=0, \ 1$. It is interesting to note that
although $P(D_m,2)=2$, this polynomial can be quite small for part of the
interval $1 \le q \le 2$. For example, as $q$ increases from 1 to 2, $P(D_2,q)$
reaches a local maximum of 0.041 at $q=1.1$, then decreases to a local minimum
of 0.0080 at $q=1.36$, and finally increases to 2 as $q \to 2$. In the same
interval, $P(D_3,q)$ reaches a maximum of 0.0090 at $q=1.02$, decreases to a
minimum of approximately $1.7 \times 10^{-9}$ at $q=1.37$, and then increases
to 2 at $q=2$.

Let us remark on another property of $P(D_m,q)$.  Because $D_m$ is
bipartite, $P(D_m,2)=2$. By explicit iterative calculation, we obtain
\beq
P(D_m,3) = 2 \cdot 3^{n(D_m)/2} \ . 
\label{pdmq3}
\eeq
Consequently, 
\beq
W(D_\infty,3) = \sqrt{3} \ ,
\label{wdhlq3}
\eeq
so that the 3-state Potts AFM has ground-state entropy
on the $D_\infty$ fractal given by 
\beq
S_0(D_\infty) = \frac{k_B}{2}\ln 3 \ .
\label{dhlentropy}
\eeq
Interestingly, these results are the same as for the infinite-length
square-lattice ladder graph (with any longitudinal BC) \cite{w}), for which
\beq
W(sq,2 \times \infty,q)=\sqrt{q^2-3q+3} \ , 
\label{wlad}
\eeq
and hence $W(sq,2 \times \infty,3)=\sqrt{3}$. The cyclic and M\"obius strips
of the square lattice with width $L_y=2$ vertices are $\Delta$-regular graphs
with vertex degree $\Delta=3$, and the free $L_y=2$ square-lattice strip also
has $\Delta_{eff}=3$ in the $m \to \infty$ limit.  These values of $\Delta$ and
$\Delta_{eff}$ are the same as the value~(\ref{delta_dhl}). 

% ========================================================================
  
\section{Zeros in the $q$ Plane for the Potts Antiferromagnet at Nonzero
  Temperature }
\label{qzerosafm}

We next consider the zeros of $Z(D_m,q,v_0)$ for the Potts antiferromagnet with
temperature $T_0 > 0$ which corresponds to the range $-1 < v_0 \le 0$.  On the
left sides of Figures \ref{bq_vm0p8_figure} - \ref{bq_vm0p2_figure} we present
computer-generated images of the regions
\begin{align*}
{\mathcal P}_q^0(v_0) &:= \{q \in \mathbb{C} \setminus \{0\} \, : \, F_q^m(v_0) \rightarrow 0\} \qquad \mbox{colored white, and} \\
{\mathcal P}_q^\infty(v_0) &:= \{q \in \mathbb{C} \setminus \{0\} \, : \, F_q^m(v_0) \rightarrow \infty \} \qquad \mbox{colored blue}
\end{align*}
for $v_0 = -0.8, -0.5$, and $-0.2$.  Any point that is not in ${\mathcal
  P}_q^0(v_0)$ or ${\mathcal P}_q^\infty(v_0)$ is colored black.  According to
Proposition \ref{PROP_DESCRIBING_FIGURES}, ${\mathcal B}_q(v_0)$ contains any
point of the boundary between the white, blue, and black sets.  It may contain
additional points, but they are necessarily in the interior of the set of black
points.

For comparison, on the right sides of Figures \ref{bq_vm0p8_figure}
- \ref{bq_vm0p2_figure} we present the numerically computed zeros of the
reduced partition function $Z_r(D_4,q,v_0)$ at these values of $v_0$.  (As
usual, the zero at $q=0$ is omitted.)

% ========== B_q figures for v=-0.8, -0.5, and -0.2 ===============

% Fig. 5 
\begin{figure}
\begin{center}
\scalebox{1.2}{
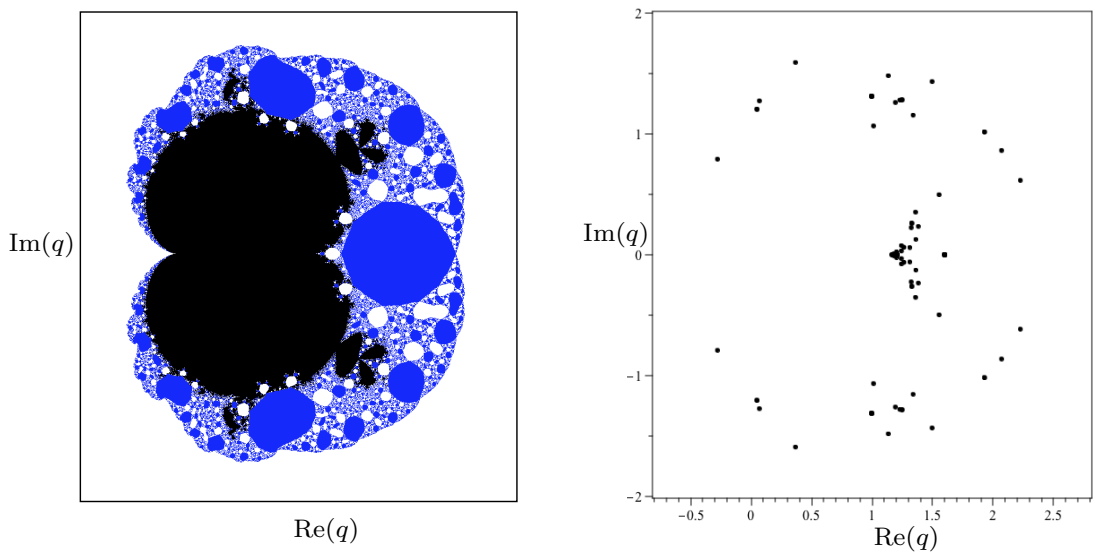
}
\end{center}
\caption{ Left: Region diagram for $D_\infty$ in the complex $q$ plane for
  $v=-0.8$.  Right: Zeros of the reduced partition function $Z_r(D_4,q,-0.8)$
  (171 zeros).  Both left and right figures depict $-0.8 < {\rm Re}(q) < 2.8$
  and $-2 < {\rm Im}(q) < 2$.  }
\label{bq_vm0p8_figure}
\end{figure}

% --------------------------------------------------------------------

% Fig. 6
\begin{figure}
\begin{center}
\scalebox{1.2}{
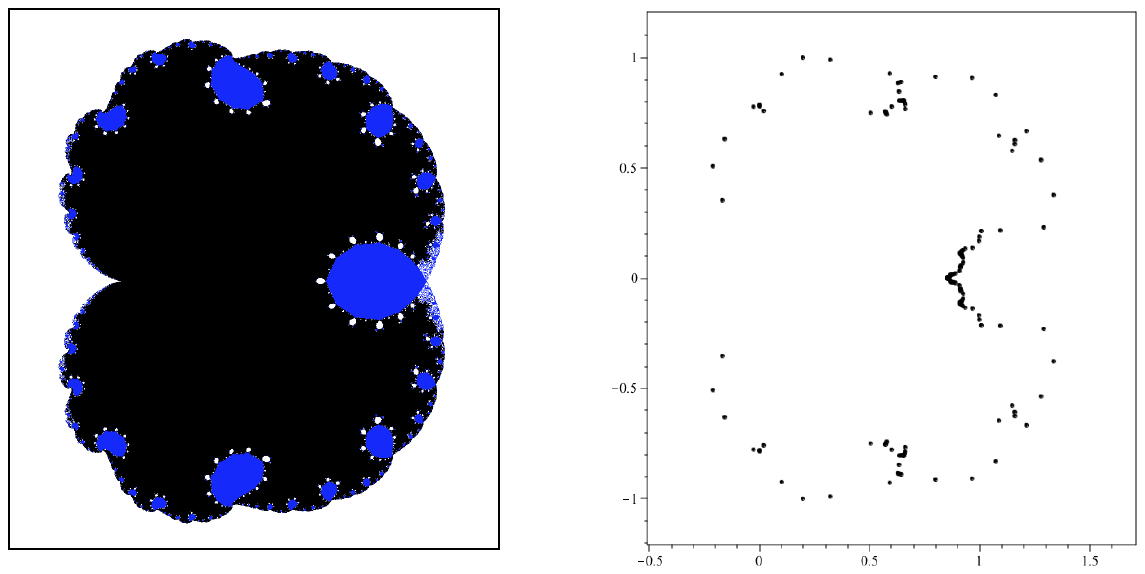
}
\end{center}
\caption{
Left: Region diagram for $D_\infty$ in the complex $q$ plane for $v=-0.5$.  
Right: Zeros of the reduced partition function $Z_r(D_4,q,-0.5)$ (171 zeros).
    Both left and right figures depict 
    $-0.5 < {\rm Re}(q) < 1.7$ and $-1.2 < {\rm Im}(q) < 1.2$.
}

\label{bq_vm0p5_figure}
\end{figure}

% -----------------------------------------------------------------

% Fig. 7 
\begin{figure}
\begin{center}
\scalebox{1.2}{
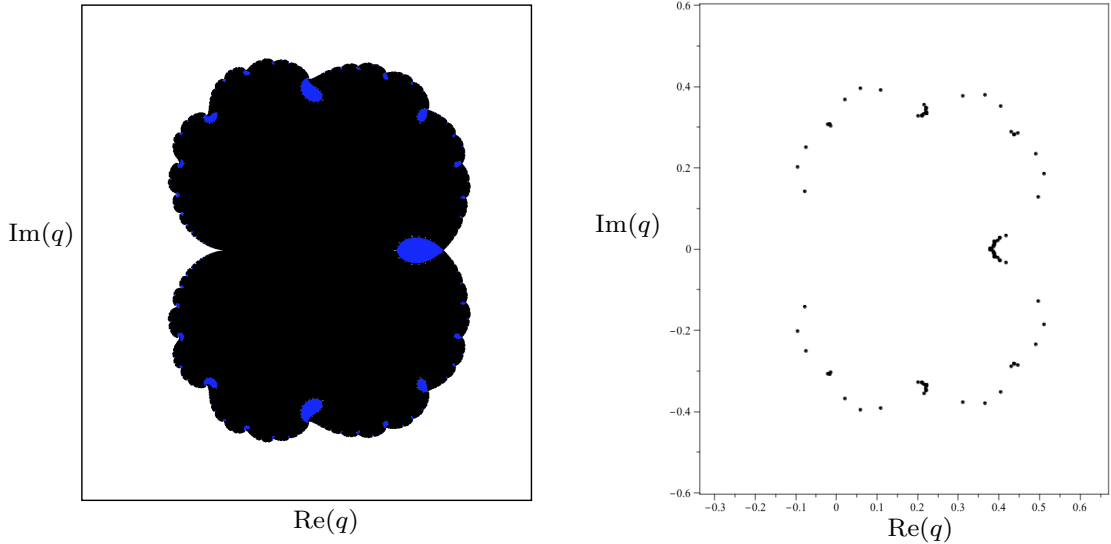
}
\end{center}
\caption{
Left: Region diagram for $D_\infty$ in the complex $q$ plane for $v=-0.2$.  
Right: Zeros of the reduced partition function $Z_r(D_4,q,-0.2)$ (171 zeros).
    Both left and right figures depict 
    $-1/3 < {\rm Re}(q) < 2/3$ and $-0.6 < {\rm Im}(q) < 0.6$.
}
\label{bq_vm0p2_figure}
\end{figure}
%

% ====================================================================

First, we observe that in the case of infinite temperature, or equivalently,
zero spin-spin coupling, $v_0=0$ we have that $Z(G,q,0)=q^{n(G)}$ for any graph
$G$, so that all of the zeros of $Z(G,q,0)$ occur at $q=0$.  Previous studies
(e.g., \cite{a,ta}) showed that generically, as $v_0$ approaches 0, the zeros
of $Z(G,q,v_0)$ in the $q$ plane progressively move in toward the origin,
$q=0$.  We see that behavior here for the diamond hierarchical graphs $D_m$, as
evident from the decreasing scale of Figures \ref{bq_vm1_figure} and Figures
\ref{bq_vm0p8_figure}-\ref{bq_vm0p2_figure} as $v_0$ increases from $-1$ to
$-0.2$.

Note also that for $v_0 = -0.8, -0.5$ and $-0.2$ the zeros of $Z(D_4,q,v_0)$
show somewhat less scatter than at $v_0=-1$.  Moreover, one sees that the
wedge-like formation of zeros moves to the left as $v_0$ increases,
intersecting the real real axis at $q \simeq 1.2$ for $v_0 = -0.8$, $q \simeq
0.85$ for $v_0=-0.5$, and at $q \simeq 0.38$ for $v_0=-0.2$.

As $v$ increases in the interval $v_0 \in (-1,0]$, there is an evident
simplification in the different regions on the left side of Figures
\ref{bq_vm0p8_figure}-\ref{bq_vm0p2_figure}, as compared with the $v_0=-1$
case.  As was found for regular lattice graphs (e.g., \cite{a}), as $v_0$
increases in the interval $(-1,0]$, the maximal point, $q_c(v_0)$, at which
${\mathcal B}_q(v_0)$ crosses the real axis decreases. This behavior is
expected, since as $v_0 \to 0$, all of the zeros move in toward $q=0$. The
crossing at $q=0$ remains present for all $v_0 \in (-1,0]$.

Although it is not completely evident from Figures \ref{bq_vm0p8_figure} -
\ref{bq_vm0p2_figure}, for any $-1 \leq v_0 < 0$ the locus ${\mathcal
  B}_q(v_0)$ continues to intersect the real $q$ axis in infinitely many
points, just like in the case $v_0=-1$.  One can observe this by using the
computer to zoom in when investigating the region diagrams.  However, we will
prove it rigorously in the theorem below.

\vspace{0.1in}

The following theorem summarizes the antiferromagnetic case:

\vbox{
\begin{theorem}\label{THM:AFM}
For any $-1 \leq v_0 < 0$ 
the locus ${\mathcal B}_q(v_0)$ intersects the axis at $0$ and at
\begin{align*}
q_c(v_0)=\left( -2-\sqrt {-v_0} \right) v_0 > 0.
\end{align*}
The locus ${\mathcal B}_q(v_0)$ does not intersect the real $q$ axis at any point outside of the interval $[0,q_c(v_0)]$.

Furthermore, there is an infinite sequence of points $q_k(v_0)$ where 
${\mathcal B}_q(v_0)$ intersects the real $q$ axis converging to 
\begin{align}\label{EQN:QINFINITY}
q_\infty(v_0) =  
\begin{cases}
\frac{32}{27} \qquad & \mbox{if $-1 \leq v_0 \leq -\frac{8}{9}$} \\
\left( -1-\sqrt {1+v_0} \right) v_0 \qquad  & \mbox{if $-\frac{8}{9} \leq v_0 < 0$}.
\end{cases}
\end{align}
which is also in ${\mathcal B}_q(v_0)$.
\end{theorem}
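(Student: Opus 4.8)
The plan is to reduce everything to the real one-dimensional dynamics of $F_q$ on $\widehat{\mathbb{R}}=\mathbb{R}\cup\{\infty\}$ for real parameters $q$, using Proposition~\ref{PROP_DESCRIBING_FIGURES}: every real point of $\mathcal{B}_q(v_0)$ lies outside the open set $\mathcal{P}_q^0(v_0)\cup\mathcal{P}_q^\infty(v_0)$, while the common boundary $\partial\mathcal{P}_q^0(v_0)=\partial\mathcal{P}_q^\infty(v_0)=\mathcal{A}_q(v_0)$ is contained in $\mathcal{B}_q(v_0)$. So I would first record a lemma on the real dynamics of $F_q$. From the identity $F_q(v)-v=-v(q^2+2qv-v^3)/(q+2v)^2$ one reads off that, besides the superattracting fixed points $0$ and $\infty$, the real fixed points of $F_q$ are exactly the roots of the cubic $q^2+2qv-v^3=0$ of Section~\ref{transformation}; and from the sign of $F_q(v)-v$ one obtains, for every real $q>0$, that $0<F_q(v)<v$ on $(0,v_{c,PM-FM}(q))$ and $F_q(v)>v$ on $(v_{c,PM-FM}(q),\infty)$, so that the first interval lies in the basin of $0$ and the second in the basin of $\infty$ (this is essentially the content of Lemma~\ref{LEM:FM_DYNAMICS}). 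For $q<0$ the cubic has a single, positive real root, and a similar sign computation shows $F_q(v)>v$ on all of $[v_0,0)$ (here we use $v_0\ge -1$).

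With these in hand the outer part of the statement is bookkeeping with the single orbit point $F_q(v_0)$. For $q<0$ one checks $F_{q,2}(v_0)=v_0^2+4v_0+2q<0$, hence $F_q(v_0)\in(v_0,0)$, and the orbit stays in $[v_0,0)$ and increases monotonically to $0$, so $(-\infty,0)\subset\mathcal{P}_q^0(v_0)$. For $q>q_c(v_0)$ one checks $q_c(v_0)=(-2-\sqrt{-v_0})v_0>-2v_0$, so that $F_q(v_0)>0$, and --- by a direct computation, already carried out for $v_0=-1$ in Section~\ref{chromaticzeros} --- that $q=q_c(v_0)$ is precisely the parameter at which $F_q(v_0)$ equals the repelling fixed point $v_{c,PM-FM}(q)$, and that $0<F_q(v_0)<v_{c,PM-FM}(q)$ for all $q>q_c(v_0)$ (it suffices to note this at one large $q$ and that $q\mapsto F_q(v_0)-v_{c,PM-FM}(q)$ changes sign only once on $(-2v_0,\infty)$). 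Hence $(q_c(v_0),\infty)\subset\mathcal{P}_q^0(v_0)$, and by Lemma~\ref{LEM:BQ_NOT_SUBSET_P} the locus $\mathcal{B}_q(v_0)$ misses $(-\infty,0)\cup(q_c(v_0),\infty)$. At $q=q_c(v_0)$ itself, $F_q$ maps the constant marked point $a(q)\equiv v_0$ onto the repelling fixed point $v_{c,PM-FM}(q)$ non-persistently (the fixed point varies with $q$, the marked point does not), so $q_c(v_0)\in\mathcal{A}_q(v_0)\subset\mathcal{B}_q(v_0)$; and since $F_{q_c}(v_0)$ is then a nonzero fixed point, $Z(D_m,q_c(v_0),v_0)\neq0$ for every $m$, so this is a genuine crossing rather than a true zero.

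For $q_\infty(v_0)$ I would argue similarly. When $-8/9<v_0<0$, the value $q_\infty(v_0)=(-1-\sqrt{1+v_0})v_0$ is the larger root of the quadratic (in $q$) $q^2+2qv_0-v_0^3=0$, i.e.\ the parameter for which $v_0$ itself is the repelling fixed point $v_+(q)$; as above this is non-persistent, so $q_\infty(v_0)\in\mathcal{A}_q(v_0)\subset\mathcal{B}_q(v_0)$. When $-1\le v_0\le -8/9$ one has $q_\infty(v_0)=32/27$, at which $v=-8/9$ is a parabolic fixed point of $F_{32/27}$ (the double root of the cubic), and from $F_{32/27}(v)-v=v(v-16/9)(v+8/9)^2/(32/27+2v)^2\ge 0$ for $v$ just below $-8/9$ one sees the orbit of $v_0$ is drawn to $-8/9$ from the left. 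To get the infinite sequence of crossings $q_k(v_0)\to q_\infty(v_0)$ I would show that for $q$ on the appropriate side of $q_\infty(v_0)$ the orbit of $v_0$ makes an excursion near the repelling (resp.\ parabolic) fixed point whose escape time tends to $\infty$ as $q\to q_\infty(v_0)$, and that, because of the self-similar fractal structure of the real Julia set near that fixed point --- where both $\mathcal{P}_q^0(v_0)$ and $\mathcal{P}_q^\infty(v_0)$ accumulate --- the eventual fate of the orbit alternates between the two basins infinitely often; concretely, for each large $k$ the equation $F_q^k(v_0)=v_{c,PM-FM}(q)$ has a real solution $q_k(v_0)$ near $q_\infty(v_0)$ (the method used for Table~\ref{crossings}), these lie in $\partial\mathcal{P}_q^0(v_0)\subset\mathcal{B}_q(v_0)$, and an estimate on the escape time gives $q_k(v_0)\to q_\infty(v_0)$; in particular $q_\infty(v_0)\in\overline{\mathcal{A}_q(v_0)}\subset\mathcal{B}_q(v_0)$ as a limit of crossings.

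The crossing at $q=0$ must be handled apart, since $q=0$ is excluded from the parameter space $\mathbb{C}\setminus\{0\}$ and the degree of $F_q$ drops there from $4$ to $2$, so that $v=0$ is a parabolic (not superattracting) fixed point of the limit map $F_0(v)=v+v^2/4$. For small $q\neq0$ the $F_q$-orbit of $v_0$ first shadows the monotone $F_0$-orbit increasing to $0$ and then enters the shrinking region (of radius $\sim q/2$) around the pole $v=-q/2$; whether it is captured by the tiny basin of $0$ or ejected toward $\infty$ oscillates infinitely often as $q\to 0$, so both $\mathcal{P}_q^0(v_0)$ and $\mathcal{P}_q^\infty(v_0)$ accumulate at $0$ and hence $0\in\overline{\mathcal{A}_q(v_0)}\subset\mathcal{B}_q(v_0)$ --- alternatively one reduces to the corresponding statement for $v_0=-1$ from \cite{rr} together with the discussion in Appendix~\ref{q0_appendix}. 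The hard part of the whole argument is exactly this pair of accumulation statements: the cascade $q_k(v_0)\to q_\infty(v_0)$ and the crossing at $q=0$ both rest on quantitatively controlling a (semi)parabolic or near-repelling excursion of the orbit of $v_0$ and proving that its eventual fate oscillates, whereas the crossing at $q_c(v_0)$ and the outer non-crossing are routine once the real-dynamics lemma is established.
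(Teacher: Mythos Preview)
Your treatment of the outer statements --- that $(-\infty,0)\cup(q_c(v_0),\infty)\subset\mathcal{P}_q^0(v_0)$ and that $q_c(v_0)\in\mathcal{A}_q(v_0)$ via a non-persistent hit on the repelling fixed point $v_{c,PM-FM}(q)$ --- matches the paper's proof essentially line for line. The differences are in the two places you yourself flag as hard.

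For the crossing at $q=0$, the paper does \emph{not} argue that $\mathcal{P}_q^\infty(v_0)$ accumulates at $0$ or that the orbit's fate oscillates. Instead it exhibits two \emph{different} passive behaviors on any punctured neighborhood of $0$: for $q<0$ the orbit of $v_0$ tends to $0$ (which you have), while for all sufficiently small $q>0$ one has $-1\le v_0<v_-(q)$ and the orbit tends monotonically to the attracting fixed point $v_-(q)\neq 0$ (this is Lemma~\ref{LEM:PART_OF_AFM_DYNAMICS}(ii)). Two incompatible normal limits on a connected annulus force an active parameter there, and letting the annulus shrink gives $0\in\overline{\mathcal{A}_q(v_0)}\subset\mathcal{B}_q(v_0)$. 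This is both simpler and more robust than tracking a parabolic implosion near the degenerate map $F_0$; your oscillation picture may well be true, but it is not needed.

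For the infinite sequence $q_k(v_0)\to q_\infty(v_0)$, the paper's argument is again more elementary than the fractal self-similarity you invoke. In Case~1 ($-1\le v_0\le -8/9$) one checks directly that at $q=\tfrac{32}{27}$ the orbit of $v_0$ stays in $[-1,-\tfrac{8}{9}]$, while at $q=3$ one has $F_3(v)\ge 0$ for all real $v$. Since the pole of $F_q$ sits at $v=-q/2\in(-\tfrac{3}{2},-\tfrac{16}{27})$, the Intermediate Value Theorem produces, for each $k\ge 2$, a smallest pole $q_k^\infty(v_0)$ of $q\mapsto F_q^k(v_0)$ in $(\tfrac{32}{27},3]$; then a second application of the IVT on $(\tfrac{32}{27},q_k^\infty(v_0))$ gives a parameter $q_k(v_0)$ with $F_{q_k}^k(v_0)=v_{c,PM-FM}(q_k)$, hence $q_k(v_0)\in\mathcal{A}_q(v_0)$. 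The convergence $q_k(v_0)\to\tfrac{32}{27}$ follows from a uniform lower bound $F_q(v)\ge v+C(q_0)$ for $q\ge q_0>\tfrac{32}{27}$ and $v\in[-1,-q/2)$, which forces the orbit to reach $[0,\infty)$ in boundedly many steps once $q$ is bounded away from $\tfrac{32}{27}$. Case~2 runs identically with $\tfrac{32}{27}$ replaced by the parameter at which $v_0$ itself is the (boundary) fixed point. Your idea of solving $F_q^k(v_0)=v_{c,PM-FM}(q)$ is exactly right, but the existence of such solutions and their convergence need these concrete two-sided bounds rather than an appeal to self-similarity of the real Julia set.
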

}
\noindent
We refer the reader to Figure \ref{FIG_Q_C} for a plot of $q_c(v_0)$.  The
values of $q_c$ for these values of $v_0$ shown in Figures \ref{bq_vm1_figure},  \ref{bq_vm0p8_figure}, \ref{bq_vm0p5_figure}, and \ref{bq_vm0p2_figure}
are $q_c(-1) = 3, \, q_c(-0.8) \simeq 2.316, \, q_c(-0.5) \simeq
1.354$, and $q_c(-0.2) \simeq 0.4894$, respectively.  Theorem \ref{THM:AFM}
will be proved in Section \ref{SEC:PROOFS}.

% Fig. 8
\begin{figure}
\includegraphics[scale=0.4]{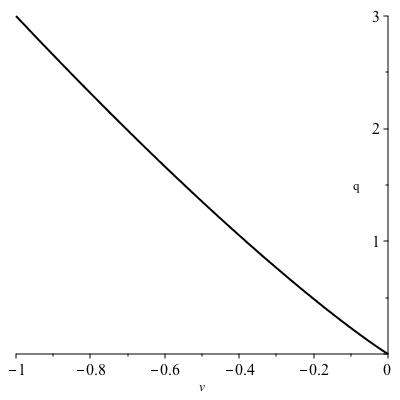}
\caption{Plot of the upper intersection point $q_c(v_0)$ between ${\mathcal B}_q(v_0)$ and the real $q$-axis
for $-1 \leq v_0 < 0$.  See Theorem \ref{THM:AFM}.
\label{FIG_Q_C}}
\end{figure}

% ============= B_q(v) for ferromagnetic v values, v > 0 =============

\section{Zeros in the $q$ Plane for the Ferromagnetic Potts Model}
\label{qzerosfm}

We now present some results on the zeros of $Z(D_m,q,v)$ and the locus
${\mathcal B}_q(v_0)$ in the complex $q$ plane for the Potts ferromagnet,
corresponding to $v_0 \ge 0$.  On the left sides of Figures \ref{bq_v1_figure}
- \ref{bq_v99_figure} we present computer-generated images of the regions
\begin{align*}
{\mathcal P}_q^0(v_0) &:= \{q \in \mathbb{C} \setminus \{0\} \, : \, F_q^m(v_0) \rightarrow 0\} \qquad \mbox{colored white, and} \\
{\mathcal P}_q^\infty(v_0) &:= \{q \in \mathbb{C} \setminus \{0\} \, : \, F_q^m(v_0) \rightarrow \infty \} \qquad \mbox{colored blue}
\end{align*}
for $v_0 = 1, 2, 4$, and $99$.  As in previous figures, any point that is not
in ${\mathcal P}_q^0(v_0)$ or ${\mathcal P}_q^\infty(v_0)$ is colored black.
According to Proposition \ref{PROP_DESCRIBING_FIGURES}, ${\mathcal B}_q(v_0)$
contains any point of the boundary between the white, blue, and black sets.  It
may contain additional points, but they are necessarily in the interior of the
set of black points.

For comparison, on the right sides of Figures \ref{bq_v1_figure} -
\ref{bq_v99_figure} we present the numerically computed zeros of the 
partition function $Z(D_4,q,v_0)$ at these values of $v_0$.  (As usual, the
zero at $q=0$ is omitted.)

% ------------------------------------------------------------------

% Fig. 9 
\begin{figure}
\begin{center}
\scalebox{1.2}{
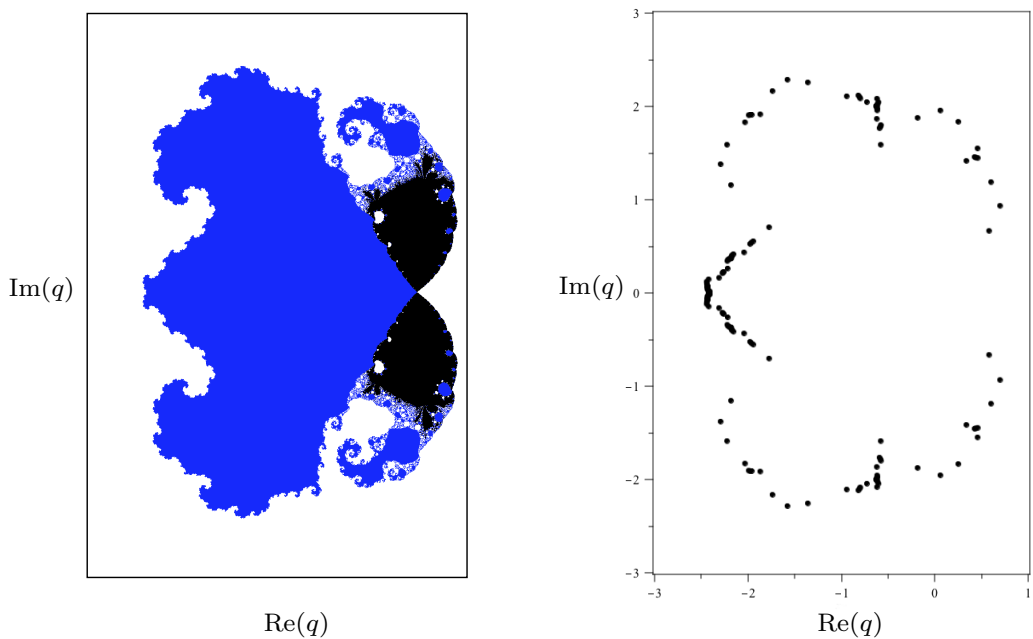
}
\end{center}
\caption{ Left: Region diagram for $D_\infty$ in the complex $q$ plane for
  $v_0=1$.  Right: Zeros of the reduced partition function $Z_r(D_4,q,1)$ (171
  zeros).  Both left and right figures depict $-3 < {\rm Re}(q) < 1$ and $-3 <
  {\rm Im}(q) < 3$.  }
\label{bq_v1_figure}
\end{figure}

% Fig. 10 
\begin{figure}
\begin{center}
\scalebox{1.2}{
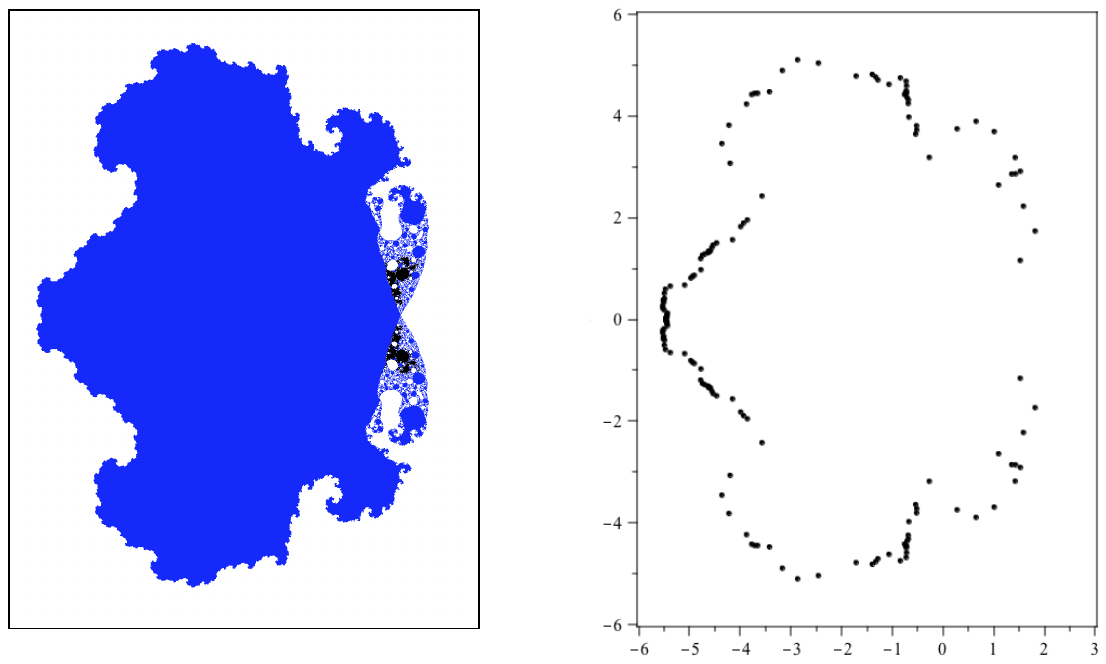
}
\end{center}
\caption{ Left: Region diagram for $D_\infty$ in the complex $q$ plane for
  $v_0=2$.  Right: Zeros of the reduced partition function $Z_r(D_4,q,2)$ (171
  zeros).  Both left and right figures depict $-6 < {\rm Re}(q) < 3$ and $-6 <
  {\rm Im}(q) < 6$.  }
\label{bq_v2_figure}
\end{figure}

% -------------------------------------------------------------------

% Fig. 11 
\begin{figure}
\begin{center}
\scalebox{1.2}{
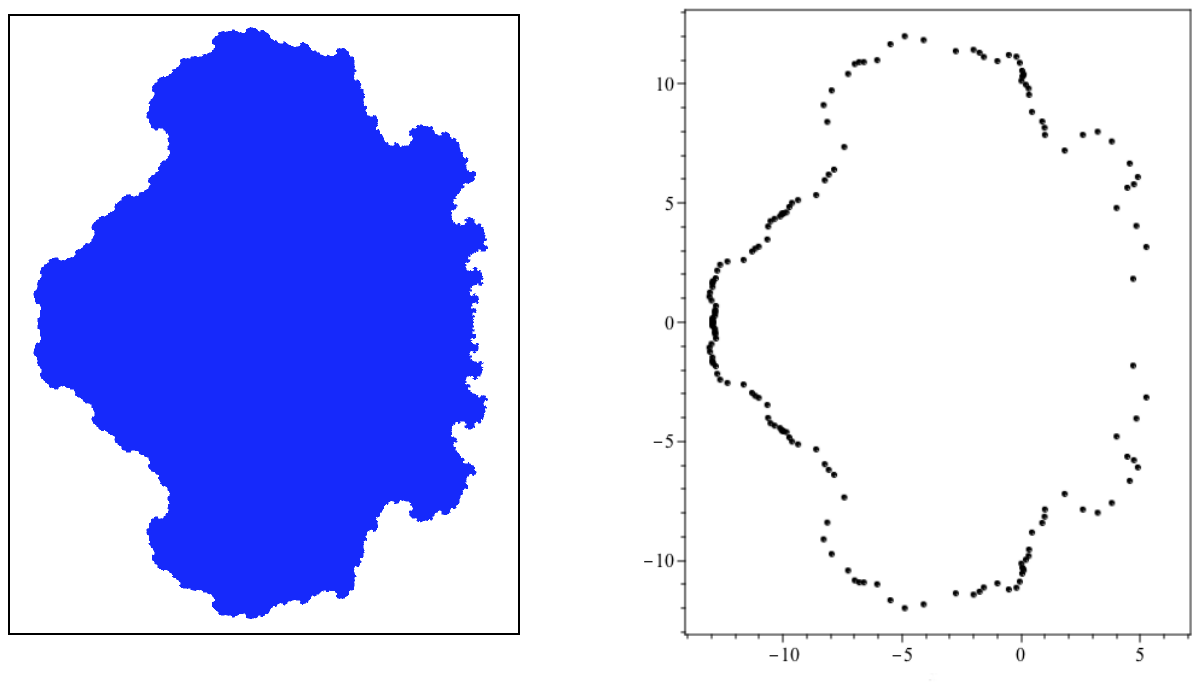
}
\end{center}
\caption{
Left: Region diagram for $D_\infty$ in the complex $q$ plane for $v_0=4$.
Right: Zeros of the reduced partition function $Z_r(D_4,q,4)$ (171 zeros).
    Both left and right figures depict
    $-14 < {\rm Re}(q) < 7$ and $-13 < {\rm Im}(q) < 13$.
}
\label{bq_v4_figure}
\end{figure}

% ---------------------------------------------------------------------

% Fig. 12 
\begin{figure}
\begin{center}
\scalebox{1.2}{
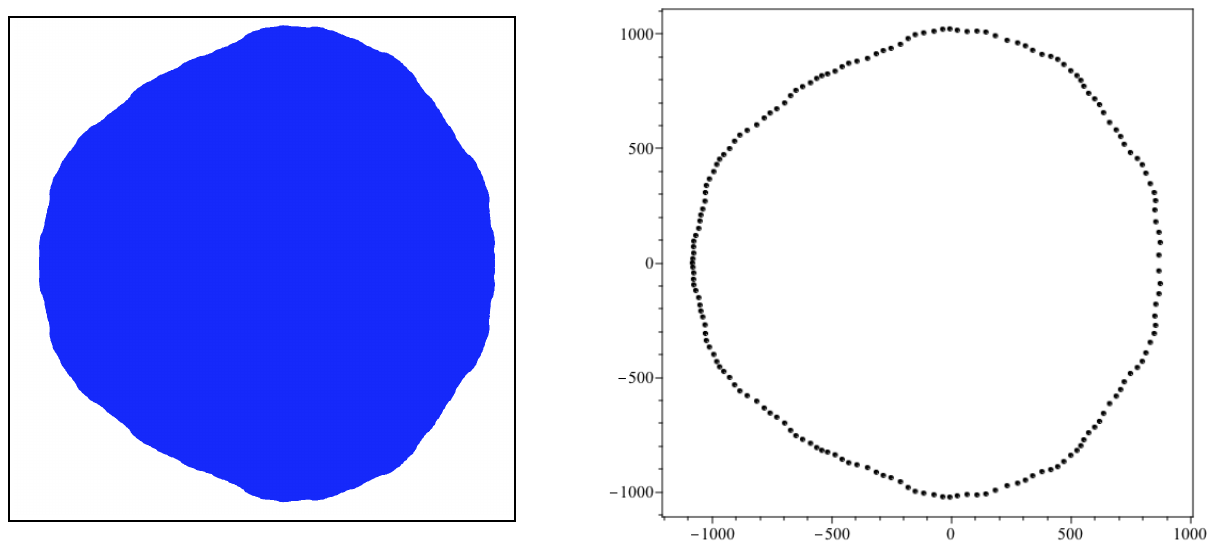
}
\end{center}
\caption{
Left: Region diagram for $D_\infty$ in the complex $q$ plane for $v_0=99$.
Right: Zeros of the reduced partition function $Z_r(D_4,q,99)$ (171 zeros).
    Both left and right figures depict
    $-1200 < {\rm Re}(q) < 1000$ and $-1100 < {\rm Im}(q) < 1100$.
}
\label{bq_v99_figure}
\end{figure}

% ---------------------------------------------------------------------

In general, for all four of these ferromagnetic values of $v$, especially for
the largest two values, the region diagram appears considerably simpler than
for the antiferromagnetic values. As in the antiferromagnetic case, the region
diagrams have an outer white area extending infinitely far from the origin,
separated from an inner blue and black portion by part of ${\mathcal
  B}_q(v_0)$.  For $v_0=1$ and $v_0 = 2$ there are still some black regions,
but they have become rather small, and for $v_0=4$ and $v_0=99$, at the
resolution of the figures, one sees only an inner blue region and an outer
white region, separated by ${\mathcal B}_q(v_0)$.  Furthermore, they appears to
become smoother, approaching a nearly circular form for large $v_0$, as will be
discussed further below. The computer images also indicate that the Hausdorff
dimension of ${\mathcal B}_q(v_0)$ decreases as $v_0$ increases sufficiently.

Another difference from the antiferromagnetic values of $-1 \leq v_0 < 0$ is
that for the ferromagnetic values $v_0 > 0$ the locus $\mathcal{B}_q(v_0)$
intersects the real $q$ axis in only two points:

\begin{theorem}\label{THM:FM}
For any $v_0 > 0$ the locus $\mathcal{B}_q(v_0)$ intersects the real $q$-axis only at the two points $q_-(v_0) < q_+(v_0)$ given
by
\begin{align*}
q_\pm(v_0) = \left( -1 \pm \sqrt {1+v_0} \right) v_0.
\end{align*}
\end{theorem}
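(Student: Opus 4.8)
The plan is to reduce the theorem to a complete description of the real one-dimensional dynamics of $F_q$ with initial condition $v_0$, treating one real parameter $q\neq 0$ at a time, and then to feed this into the complex-dynamics dichotomy of Section~\ref{complex_dynamics}. By Lemma~\ref{LEM:BQ_NOT_SUBSET_P} a real $q_0 \neq 0$ can belong to ${\mathcal B}_q(v_0)$ only if $q_0 \notin {\mathcal P}_q^0(v_0) \cup {\mathcal P}_q^\infty(v_0)$, i.e.\ only if $F_{q_0}^m(v_0)$ tends neither to $0$ nor to $\infty$; conversely, by Lemma~\ref{LEM:ACTIVE_SUBSET_BQ} every active parameter for the marked point $a(q)\equiv v_0$ lies in ${\mathcal B}_q(v_0)$. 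So it suffices to show (a) that $q_\pm(v_0)$ are active, and (b) that every other real $q\neq 0$ lies in ${\mathcal P}_q^0(v_0)\cup{\mathcal P}_q^\infty(v_0)$.

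For step (a), note first that $q_\pm(v_0)=v_0(-1\pm\sqrt{1+v_0})$ are exactly the two roots of $q^2+2v_0 q - v_0^3=0$, which by Eqn.~(\ref{dvpdvzero}) is precisely the condition that $v_0$ be a finite nonzero fixed point of $F_q$; since $\sqrt{1+v_0}>1$ for $v_0>0$ we have $q_+(v_0)>0>q_-(v_0)$, both nonzero. Using $F_q(v)-v = v(v^3-2qv-q^2)/(q+2v)^2$ one computes that at any such fixed point $F_q'(v)=\big(2(q+v)/(q+2v)\big)^2$, which equals $1$ only when $q(3q+4v)=0$; a short check gives $3q_+(v_0)+4v_0>0$ and $3q_-(v_0)+4v_0<0$, so $F_{q_\pm(v_0)}'(v_0)>1$. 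Thus at $q=q_\pm(v_0)$ the marked point $a(q)\equiv v_0$ coincides with the repelling fixed point $v_{c,PM-FM}(q)$ of $F_q$, which is a nonconstant function of $q$; hence $a(q)$ is mapped non-persistently (with $n_0=0$) onto this repelling fixed point, and the criterion recalled in Section~\ref{complex_dynamics} shows that $q_\pm(v_0)$ is active.

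For step (b) the bookkeeping device is the identity $g(v_0):=v_0^3-2qv_0-q^2 = -(q-q_+(v_0))(q-q_-(v_0))$, together with the fact that $v\mapsto v^3-2qv-q^2$ has a unique positive root $v_{c,PM-FM}(q)$ for every real $q\neq 0$; the sign of $g(v_0)$ then tells us whether $v_0$ lies below or above $v_{c,PM-FM}(q)$. If $q>q_+(v_0)$ then $v_0<v_{c,PM-FM}(q)$, $F_q$ has no pole on $(0,\infty)$, and $0<F_q(v)<v$ on $(0,v_{c,PM-FM}(q))$, so the orbit of $v_0$ decreases to the only fixed point in $[0,v_{c,PM-FM}(q))$, namely $0$. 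If $0<q<q_+(v_0)$ then $v_0>v_{c,PM-FM}(q)$, $F_q(v)>v$ on $(v_{c,PM-FM}(q),\infty)$, and the orbit increases to $\infty$. For $q<0$ the map has a pole at $-q/2>0$, and one checks $0<v_{c,PM-FM}(q)<-q/2$. If $q\in[-2v_0,0)$ then $v_0\ge -q/2$, so either $F_q(v_0)=\infty$ or $v_0\in(-q/2,\infty)$, an interval which $F_q$ maps into itself with $F_q(v)>v$, and again $F_q^m(v_0)\to\infty$. If $q_-(v_0)<q<-2v_0$ then $v_{c,PM-FM}(q)<v_0<-q/2$; on $(v_{c,PM-FM}(q),-q/2)$ one has $F_q(v)>v$, so the increasing orbit cannot stay there (no fixed point exists and it cannot converge to the pole), hence it enters $(-q/2,\infty)$ and escapes to $\infty$. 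Finally, if $q<q_-(v_0)$ then $0<v_0<v_{c,PM-FM}(q)<-q/2$; here I would exploit the perfect-square form $F_q(v)+1 = r_q(v+1) = \big((v+1)^2+q-1\big)^2/(2v+q)^2 \ge 0$ to see that $[-1,0)$ is forward invariant, that $-1<F_q(v)<0$ and $F_q(v)>v$ there, so any orbit entering $[-1,0)$ increases to $0$. Since $F_q(v)<v$ on $(0,v_{c,PM-FM}(q))$ and the only zero of $F_q$ on $(0,-q/2)$ is the positive root $v_*=-2+\sqrt{4-2q}$ of the factor $F_{q,2}(v)=2q+4v+v^2$, a decreasing orbit of $v_0$ cannot stabilize in $(0,v_{c,PM-FM}(q))$ and must eventually reach $0$ or drop below $v_*$, so that its next iterate lies in $[-1,0)$; in all cases $F_q^m(v_0)\to 0$.

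Combining (a), (b), Lemma~\ref{LEM:ACTIVE_SUBSET_BQ}, and Lemma~\ref{LEM:BQ_NOT_SUBSET_P} yields ${\mathcal B}_q(v_0)\cap\mathbb{R}=\{q_-(v_0),q_+(v_0)\}$; the origin is excluded from the whole framework, and in fact a punctured neighborhood of $0$ lies in ${\mathcal P}_q^\infty(v_0)$, so no crossing occurs there. I expect the main obstacle to be the analysis for $q<0$: because of the pole at $-q/2$ the orbit can swing between large positive values and negative values, and the short clean statement conceals several sign estimates — locating $v_*$, $v_{c,PM-FM}(q)$ and $-q/2$ relative to one another and to $v_0$, and verifying the forward-invariance of $[-1,0)$ together with $F_q(v)>v$ on it. By contrast, the $q>0$ cases are just the textbook ferromagnetic renormalization picture and follow immediately from the sign of $F_q(v)-v$.
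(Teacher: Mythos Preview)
Your proposal is correct and follows essentially the same route as the paper: identify $q_\pm(v_0)$ as the parameters at which $v_0$ coincides with the repelling fixed point $v_{c,PM-FM}(q)$ (hence active, hence in ${\mathcal B}_q(v_0)$), and show that for every other real $q\neq 0$ the orbit $F_q^m(v_0)$ tends to $0$ or $\infty$ according to whether $v_0<v_{c,PM-FM}(q)$ or $v_0>v_{c,PM-FM}(q)$, placing $q$ in ${\mathcal P}_q^0(v_0)\cup{\mathcal P}_q^\infty(v_0)$. The paper packages the real-dynamics analysis into Lemmas~\ref{LEM:REPELLING_FP} and~\ref{LEM:FM_DYNAMICS} (with the $q<0$ case handled via the same decomposition $[-1,0]\cup[0,v_*]\cup(v_*,v_{c,PM-FM}(q))$ you sketch), whereas you reprove these inline; the one genuinely different step is that you establish repellingness by the clean explicit multiplier formula $F_q'(v_0)=\big(2(q+v_0)/(q+2v_0)\big)^2$ at a nonzero fixed point, while the paper's Lemma~\ref{LEM:REPELLING_FP} argues qualitatively via the intermediate value theorem and uniqueness of the positive fixed point.
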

\noindent
We refer the reader to Figure \ref{FIG_Q_PLUS_MINUS} for a plot of
$q_\pm(v_0)$.  The values of $q_-$ and $q_+$ for these values of $v_0$ shown in
Figures \ref{bq_v1_figure} - \ref{bq_v99_figure} are $(-2.414,0.4142)$,
$(-5.464,1.464)$, $(-12.944,4.944)$, and $(-1089,891)$, respectively.  Theorem
\ref{THM:FM} will be proved in Section \ref{SEC:PROOFS}.

% Fig. 13 
\begin{figure}
\includegraphics[scale=0.4]{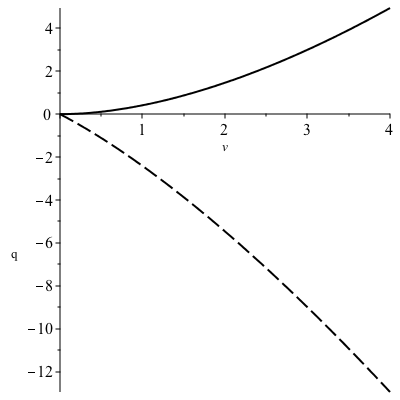}
\caption{Plot of the lower $q_-(v_0)$ (dashed) and upper $q_+(v_0)$ (solid) intersection points between ${\mathcal B}_q(v_0)$ and the real $q$-axis
for $v_0 > 0$.  See Theorem \ref{THM:FM}.
\label{FIG_Q_PLUS_MINUS}}
\end{figure}

We can give a statistical physics explanation of the crossing of ${\mathcal B}_q(v_0)$
on the positive real $q$ axis; the Potts ferromagnet has a phase transition at
this value of $q=q_c$ for the given value of the temperature variable $v_0$,
namely $T_{c,PM-FM}$ as given in Eq.\ (\ref{tc_pmfm}).  For larger $q$, the
system is more disordered; for integral $q$, this can be understood from the
fact that each spin on the lattice can take values in a larger set,
$\{1,...,q\}$.  In this case, the infinite iteration of the RG transformation
maps the temperature to $T=\infty$, or equivalently, decreases $v$ to 0,
hence the white color.  For $0 < q < q_c$ the system is more ordered,
so the infinite iteration of the RG transformation maps the temperature
variable to $T=0$, i.e., $v=\infty$, hence the blue color. 

In contrast with the scattered pattern of zeros in the $q$ plane for the $T=0$
AFM case $v_0=-1$, the zeros in the $q$ plane for the FM case illustrated by
these four values of $v_0$ tend to cluster along a curve.  This curve encircles
the origin. As $v_0$ gets large, this curve assumes an oval-like shape.  In
\cite{lq} (see also the related \cite{hungwu}), we showed that, in the $n \to
\infty$ limit of a recursive family of graphs, as $|v_0|$ increases to values
$\gg 1$, the accumulation set of zeros in the $q$ plane, ${\mathcal B}_q(v_0)$,
forms a closed oval curve that encircles the origin and crosses the positive
$q$ axis at
\beq
q \simeq |v|^{\Delta_{eff}/2} \ , 
\label{qc_largey}
\eeq
where $\Delta_{eff}$ was defined in Eq.\ (\ref{delta_eff}). 
In this $m \to \infty$ limit, ${\mathcal B}_q(v_0)$ forms a closed oval curve
approaching a circle, crossing the positive and negative real $q$ axes at a
value of $|q|$ that behaves asymptotically as
\beq
|q| \sim |v_0|^{3/2} \ . 
\label{qcasymptotic}
\eeq

Our calculations of zeros of $Z(D_m,q,v_0)$ for finite $m$ are in agreement
with this result, as is illustrated in Fig. \ref{bq_v99_figure}.  Note that for
$m=4$ we have
\beq
\Delta_{eff} = \frac{3}{1+2^{-7}} = 2.98 \quad {\rm for} \ D_4 \ , 
\label{delta_dm4}
\eeq
which is quite close to the limiting value $\lim_{m \to \infty} \Delta_{eff}=3$
in Eq.\ (\ref{delta_dhl}).

% ========================================================================

\section{Proofs of Theorems \ref{THM:AFM} and \ref{THM:FM}.}
\label{SEC:PROOFS}

We will consider the dynamics of the RG mapping $F_q(v)$ for $v \in \mathbb{R}$ and $q \in \mathbb{R} \setminus \{0\}$.

\subsection{Lemmas and setup.}
Let us briefly summarize some properties of $F_q(v)$ that will be used throughout this section.

For any  $q \in \mathbb{R} \setminus \{0\}$ the mapping $F_q(v)$ has a superattracting fixed point at $v=0$
and another fixed point $v_{c,{\rm PM-FM}}(q) > 0$.  Moreover, 
$v_{c,{\rm PM-FM}}(q)$ is the unique positive fixed point of $F_q(v)$ for any $q \in \mathbb{R} \setminus \{0\}$, a property that will
play an important role in several of the proofs.

For $0 < q < \frac{32}{27}$ the mapping $F_q(v)$ has two additional fixed points $v_-(q)$ and $v_+(q)$ with
\begin{align*}
v_-(q) < v_+(q) < 0 < v_{c,{\rm PM-FM}}(q).
\end{align*}
When $q =  \frac{32}{27}$ these two fixed points collide: $v_-(q) = v_+(q)= -\frac{8}{9}$.
Figure \ref{FIG_FIXED_POINTS} shows a plot of how $v_-(q)$, $v_+(q)$, and $v_{c,{\rm PM-FM}}(q)$ depend on $q$.

The mapping $F_q(v)$  has a pole at 
\begin{align*}
v = -q/2 \qquad \mbox{where} \qquad \lim_{v \rightarrow -q/2} F_q(v) = + \infty.
\end{align*}
The critical points of $F_q(v)$ are:
\begin{align*}
v = 0, \quad -q, \quad -1 \pm \sqrt{1-q}.
\end{align*}
Note that
\begin{align*}
F_q( -1 \pm \sqrt{1-q}) = -1
\end{align*}
for any $q$.

\begin{lemma}\label{LEM:INVARIANT}
The extended real interval $[-1,\infty]$ is invariant under $F_q(v)$.
\end{lemma}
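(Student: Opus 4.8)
The plan is to reduce everything to the single algebraic identity
\begin{align*}
F_q(v) + 1 = \left(\frac{v^2 + 2v + q}{2v+q}\right)^2 .
\end{align*}
There are two equally short routes to it: either clear denominators and verify the polynomial identity $v^2(2q+4v+v^2) + (q+2v)^2 = (v^2+2v+q)^2$ directly, or simply note that this is precisely the perfect-square form of $r_q(y)$ in Eq.~(\ref{yp}) transported back through the conjugacy $y=v+1$ (so that $F_q(v)+1 = r_q(v+1)$, with numerator $q+(v+1)^2-1 = v^2+2v+q$ and denominator $q+2v$).

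Granting the identity, the conclusion is immediate. First I would observe that for every real $v$ with $2v+q\neq 0$ the right-hand side is the square of a real number, hence nonnegative, so $F_q(v)\ge -1$; thus $F_q$ carries $[-1,\infty)\setminus\{-q/2\}$ into $[-1,\infty)$. Next I would treat the pole: at $v=-q/2$ the numerator equals $(q/2)^2-q+q = q^2/4$, which is nonzero because $q\neq 0$, so $F_q$ has a genuine order-two pole there (in particular the numerator and denominator share no common factor) and $F_q(-q/2)=+\infty\in[-1,\infty]$. Finally, $F_q$ has a pole of order two at $v=\infty$, so $F_q(\infty)=\infty\in[-1,\infty]$. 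Combining these, $F_q$ sends every point of the extended real line --- and in particular every point of $[-1,\infty]$ --- into $[-1,\infty]$, which is the assertion.

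I do not expect any genuine obstacle here; the only care needed is bookkeeping at the two boundary-type points. One must read ``invariant'' with $[-1,\infty]$ regarded as a subset of $\widehat{\mathbb{R}}\subset\widehat{\mathbb{C}}$, so that the value $+\infty$ is admissible, and one must keep track of the hypothesis $q\neq 0$, which is exactly what guarantees the pole $v=-q/2$ maps to $+\infty$ rather than producing an indeterminate $0/0$.
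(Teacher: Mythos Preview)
Your proposal is correct and follows exactly the approach the paper takes: the paper's one-line proof is ``One can check that the expression for $F_q(v)+1$ is a perfect square,'' and you have supplied that check explicitly (including the nice observation that it is just the perfect-square form of $r_q(y)$ transported through $y=v+1$). Your additional bookkeeping at the pole $v=-q/2$ and at $v=\infty$ is a welcome elaboration that the paper leaves implicit.
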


\begin{proof}
One can check that the expression for $F_q(v) + 1$ is a perfect square.
\end{proof}

\begin{lemma}\label{LEM:REPELLING_FP}
  For any real $q \neq 0$ we have that $v_{c,{\rm PM-FM}}(q)$ is a repelling
  fixed point for~$F_q(v)$ satisfying $F'_q(v_{c,{\rm PM-FM}}(q)) > 1$.
\end{lemma}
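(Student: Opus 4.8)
The plan is to compute $F_q'$ at the positive fixed point $v_* := v_{c,{\rm PM-FM}}(q)$ in closed form and then reduce the claimed bound to an elementary estimate coming from the cubic relation (\ref{dvpdvzero}).

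First I would write $F_q = N/D$ with $N(v) = v^2(v^2+4v+2q)$ and $D(v) = (q+2v)^2$, so that $F_q' = (N'D - ND')/D^2$. Since $v_*$ is a fixed point we have $N(v_*) = v_* D(v_*)$; substituting this into the quotient‑rule expression and using the identity $N'(v) - v D'(v) = 4v^2(v+1)$ (a one‑line polynomial simplification) yields the compact formula
\[
F_q'(v_*) = \frac{4\,v_*^2\,(v_*+1)}{(q+2v_*)^2}.
\]
This is well defined because $v = -q/2$ is the pole of $F_q$ while $v_*$ is finite, so $q + 2v_* \neq 0$.

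Next I would clear the denominator using $q^2 + 2qv_* - v_*^3 = 0$ from (\ref{dvpdvzero}): expanding $(q+2v_*)^2 = q^2 + 4qv_* + 4v_*^2$ and substituting $q^2 = v_*^3 - 2qv_*$ gives $(q+2v_*)^2 = v_*(v_*^2 + 4v_* + 2q) = v_*\,F_{q,2}(v_*)$, which is strictly positive since $v_* > 0$ and $F_{q,2}(v_*) = (q+2v_*)^2/v_* > 0$. Hence $F_q'(v_*) = 4v_*(v_*+1)/(v_*^2 + 4v_* + 2q)$, and because the denominator is positive, the desired inequality $F_q'(v_*) > 1$ is equivalent to $3v_*^2 > 2q$.

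Finally I would verify $q < \frac{3}{2}v_*^2$. Here one must be mildly careful, since $v_*$ is the positive root of (\ref{dvpdvzero}) whose dependence on $q$ has two branches according to the sign of $q$; rather than splitting into cases I would argue by contradiction. If $q \ge \frac{3}{2}v_*^2$ then, as $v_* > 0$, we get $q > 0$, hence $q^2 \ge \frac{9}{4}v_*^4$ and $2qv_* \ge 3v_*^3$, so $q^2 + 2qv_* \ge \frac{9}{4}v_*^4 + 3v_*^3 > v_*^3$, contradicting $q^2 + 2qv_* = v_*^3$. Therefore $F_q'(v_*) > 1$, so $v_*$ is a repelling fixed point, as claimed. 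There is no serious obstacle: the real content is the algebraic simplification in the first two steps, after which the conclusion is immediate; the only subtle point is dealing with the two branches of $v_*(q)$, which the contradiction argument sidesteps cleanly.
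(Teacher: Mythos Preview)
Your proof is correct and takes a genuinely different route from the paper's. The paper argues qualitatively and splits into two cases. For $q>0$ it observes that all critical points and the pole lie in $(-\infty,0]$, that $F_q'(v)>0$ for $v>0$, and that $F_q(v)-v$ changes sign across $v_*$; since $v_*$ is a simple root of $F_q(v)=v$ (so $F_q'(v_*)\neq 1$), this forces $F_q'(v_*)>1$. For $q<0$ the paper locates $v_*$ between the critical point $-1+\sqrt{1-q}$ and the pole $-q/2$ via the Intermediate Value Theorem, checks $F_q'>0$ on that interval, and then rules out $0<F_q'(v_*)<1$ by a contradiction that would produce a second positive fixed point.

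By contrast, you compute $F_q'(v_*)$ in closed form, using the fixed-point relation to collapse the quotient-rule expression to $4v_*^2(v_*+1)/(q+2v_*)^2$, and then the cubic $q^2+2qv_*=v_*^3$ to reduce $F_q'(v_*)>1$ to the elementary inequality $3v_*^2>2q$, which you dispatch by a short contradiction. This is more computational but has the advantage of treating all real $q\neq 0$ uniformly, with no case analysis on the sign of $q$; it also yields an explicit formula for the multiplier that the paper's argument does not. The paper's approach, on the other hand, extracts more structural information along the way (ordering of critical points, pole, and fixed point; monotonicity of $F_q$ on the relevant intervals), which it reuses in the proofs of the subsequent lemmas. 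Your proof does implicitly rely on $v_*>0$, but that is stated in the setup preceding the lemma, so this is not a gap.
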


\begin{proof}
  We remark that for $q \neq 0$ the fixed point $v_{c,{\rm PM-FM}}(q)$ is a
  solution to the equation $F_q(v) = v$ that occurs with multiplicity one so
  that we cannot have $F_q'(v_{c,{\rm PM-FM}}(q)) = 1$.

  {\bf Case 1:} $\bm{q > 0}$.  
  The pole and all of the critical points of
  $F_q(v)$ lie in $(-\infty,0]$.  A simple calculation shows that $F_q'(v) > 0$
  for $v > 0$.  Since $F_q'(0) = 0$ and $v_{c,{\rm PM-FM}}(q)$ is the unique
  fixed point of $F_q(v)$ occurring at positive $v$ we conclude that $F_q(v) <
  v$ for $0 < v < v_{c,{\rm PM-FM}}(q)$ and that $F_q(v) > v$ for $v >
  v_{c,{\rm PM-FM}}(q)$.  Since $F_q'(v_{c,{\rm PM-FM}}(q)) \neq 1$ we must
  have $F_q'(v_{c,{\rm PM-FM}}(q)) > 1$, as desired.

{\bf Case 2:} $\bm{q < 0}$. The pole at $-q/2$ and two of the critical points $-q$ and $-1 + \sqrt{1-q}$ occur at positive $v$ and
they do so with the following order:
\begin{align*}
-1 + \sqrt{1-q} < -q/2 < -q.
\end{align*}
Since $F_q(-1 + \sqrt{1-q}) = -1 < -1 + \sqrt{1-q}$ and $\lim_{v \rightarrow -q/2} F_q(v) = \infty$ the Intermediate Value Theorem
implies that $F_q$ has a fixed point between $-1 + \sqrt{1-q}$ and $-q/2$.  Since $v_{c,{\rm PM-FM}}(q)$ is the unique positive
fixed point of $F_q$ we conclude that 
\begin{align*}
-1 + \sqrt{1-q} < v_{c,{\rm PM-FM}}(q) < -q/2.
\end{align*}
Moreover, one can check that $F_q'(v)$ is positive on $(-1 + \sqrt{1-q},-q/2)$.

Suppose for contradiction that $0 < F_q'(v_{c,{\rm PM-FM}}(q)) < 1$.  Then, for
\begin{align*}
v_{c,{\rm PM-FM}}(q) < v < -q/2,
\end{align*}
with $v$ chosen sufficiently close to $v_{c,{\rm PM-FM}}(q)$, one has $F_q(v) < v$. Since $\lim_{v \rightarrow -q/2} F_q(v) = + \infty$
the Intermediate Value Theorem would imply that there is an additional fixed point $v_\bullet$ of $F_q$ with $v_{c,{\rm PM-FM}}(q) < v_\bullet < 
-q/2$.  This contradicts that $v_{c,{\rm PM-FM}}(q)$ is the unique positive fixed point of~$F_q(v)$.
Since $F_q'(v_{c,{\rm PM-FM}}(q)) \neq 1$ we must therefore have $F_q'(v_{c,{\rm PM-FM}}(q)) > 1$.
\end{proof}

\begin{lemma}\label{LEM:PART_OF_AFM_DYNAMICS}
We have:
\begin{itemize}
\item[(i)] If $q < 0$ then for any $-1 \leq v_0 \leq 0$ we have $F_q^m(v_0) \nearrow 0$ as $m\rightarrow \infty$.
\item[(ii)] If $q > 0$ is sufficiently small then for any $-1 \leq v_0 < v_-(q)$ we have $F_q^m(v_0) \nearrow v_-(q)$ as $m\rightarrow \infty$.
\end{itemize}
\end{lemma}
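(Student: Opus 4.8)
The plan is to analyze the real one-dimensional dynamics of $F_q$ on suitable forward-invariant intervals, using two elementary identities.  The first is the perfect-square formula behind Lemma~\ref{LEM:INVARIANT},
\begin{align*}
F_q(v)+1=\left(\frac{v^2+2v+q}{q+2v}\right)^2\geq 0,
\end{align*}
so that $F_q(v)\geq-1$ wherever it is defined.  The second is
\begin{align*}
F_q(v)-v=\frac{v\,g(v)}{(q+2v)^2},\qquad g(v):=v^3-2qv-q^2,
\end{align*}
whose real roots are exactly the fixed points of $F_q$ other than $v=0$ and $v=\infty$; for $0<q<\tfrac{32}{27}$ these are $v_-(q)<v_+(q)<0<v_{c,{\rm PM-FM}}(q)$, while for $q<0$ the unique real root is $v_{c,{\rm PM-FM}}(q)>0$.

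For part~(i) I would take $q<0$ and first show that $[-1,0]$ is forward invariant.  The pole $-q/2$ is positive, so $F_q$ is continuous on $[-1,0]$; writing $F_q=F_{q,1}\,F_{q,2}$ with $F_{q,1}\geq 0$ and $F_{q,2}(v)=v^2+4v+2q$, the quadratic $F_{q,2}$ is increasing on $[-1,0]$ with maximum value $F_{q,2}(0)=2q<0$, and combined with $F_q\geq-1$ this gives $F_q([-1,0])\subseteq[-1,0]$.  Next I would note that $g$ has its only real root at $v_{c,{\rm PM-FM}}(q)>0$, so $g(v)<0$ on $[-1,0]$, whence for $v\in[-1,0)$ the second identity has sign $(-)(-)/(+)>0$, i.e.\ $F_q(v)>v$.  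Then for every $v_0\in[-1,0]$ the orbit $\{F_q^m(v_0)\}$ is non-decreasing and bounded above by $0$, so it converges to a fixed point of $F_q$ in $[-1,0]$; since $0$ is the only such point, this proves~(i).

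For part~(ii) I would take $q>0$ small.  Because $g(-1)=-(1-q)^2<0$, the point $-1$ lies to the left of the smallest root of $g$, so $-1<v_-(q)<0$ and $[-1,v_-(q))$ is a nonempty subinterval of $(-1,0)$; for $v$ in it one has $v<0$ and $g(v)<0$, hence $F_q(v)>v$, while $F_q(v_-(q))=v_-(q)$.  The key structural claim is that $[-1,v_-(q)]$ contains no pole and no critical point of $F_q$: the critical points are $0,\,-q,\,-1\pm\sqrt{1-q}$ and the pole is $-q/2$, and a direct computation gives $g(-q)=q^2(1-q)>0$ and $g\bigl(-1+\sqrt{1-q}\bigr)=\sqrt{1-q}\,\bigl(1-\sqrt{1-q}\bigr)^3>0$ for $0<q<1$, which places $-q$ and $-1+\sqrt{1-q}$ strictly between $v_-(q)$ and $v_+(q)$; moreover $-q/2>-q>v_-(q)$, $0>v_+(q)>v_-(q)$, and $-1-\sqrt{1-q}<-1$.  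So $F_q$ is strictly monotone on $[-1,v_-(q)]$.  Taking $q$ sufficiently small makes $F_q(-1)=\dfrac{2q-3}{(q-2)^2}$ close to $-\tfrac34$ while $v_-(q)\to0$, so $F_q(-1)<v_-(q)$; this excludes the decreasing case (which would force $F_q(-1)\geq F_q(v_-(q))=v_-(q)$), hence $F_q$ is increasing, and with $F_q(-1)\geq-1$ we get $F_q([-1,v_-(q)])=[F_q(-1),v_-(q)]\subseteq[-1,v_-(q)]$.  Finally, for $v_0\in[-1,v_-(q))$ monotonicity together with $F_q(v)>v$ shows that $\{F_q^m(v_0)\}$ stays in $[-1,v_-(q))$ and is strictly increasing; being bounded above by $v_-(q)$ it converges to a fixed point of $F_q$ in $[-1,v_-(q)]$, which can only be $v_-(q)$, proving~(ii).

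The step I expect to be the main obstacle is the bookkeeping in part~(ii): verifying that $[-1,v_-(q)]$ is free of critical points and of the pole---for which the identities $g(-q)=q^2(1-q)$ and $g(-1+\sqrt{1-q})=\sqrt{1-q}(1-\sqrt{1-q})^3$ are the crux---and then pinning down that the monotone map $F_q$ is increasing rather than decreasing on that interval, which is exactly where the hypothesis ``$q$ sufficiently small'' enters (to guarantee $F_q(-1)<v_-(q)$).  Everything else reduces to sign analysis and the standard monotone-convergence argument for real one-dimensional dynamics.
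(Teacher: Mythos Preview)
Your proof is correct and follows essentially the same strategy as the paper: establish a forward-invariant interval on which $F_q(v)>v$, then invoke monotone convergence to the unique fixed point at the right endpoint. The organizing identity $F_q(v)-v=\dfrac{v\,g(v)}{(q+2v)^2}$ with $g(v)=v^3-2qv-q^2$ is a nice way to package the sign analysis that the paper does more implicitly.

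There is one minor methodological difference worth noting in part~(ii). The paper establishes that $F_q$ is increasing on $[-1,v_-(q)]$ by computing $F_q'(-1)=-4(q-1)^2/(q-2)^3>0$ directly and then observing (via $F_q(-q)=q^2-2q<-q$) that $v_-(q)<-q$, so no critical point intervenes; this works for all $0<q<1$. You instead show $F_q$ is monotone (no critical points or poles in the interval, using your computations $g(-q)=q^2(1-q)>0$ and $g(-1+\sqrt{1-q})=\sqrt{1-q}\,(1-\sqrt{1-q})^3>0$), and then rule out the decreasing branch by arranging $F_q(-1)<v_-(q)$ for small $q$. Both are valid; the paper's route avoids spending the ``sufficiently small'' hypothesis on that step, but since the lemma only asserts the result for small $q$ anyway, your version is entirely adequate.
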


\begin{proof}
Claim (i):
When $q < 0$ the pole and the critical points $-q$ and $-1 + \sqrt{1-q}$ occur at positive~$v$.  The critical point $-1 - \sqrt{1-q}$ occurs
for $v < -1$.  The only real fixed points for $F_q(v)$ are $0$ and $v_{c,{\rm PM-FM}}(q) > 0$.  A calculation
shows that $F_q(-1) > -1$  and hence that $F_q(v) > v$ for all $-1 \leq v < 0$.  Another calculation shows
that $F_q(v) < 0$ for all $-1 \leq v < 0$.  Therefore, the sequence $F_q^m(v_0)$ is increasing sequence that is bounded above by $0$.
It must converge to some limit, which will be a fixed point for $F_q(v)$.  Therefore, $F_q^m(v_0) \rightarrow 0$.

\vspace{0.1in}
Claim (ii):
Suppose $0 < q < 1$ and that $q$ is sufficiently small that $-1 \leq v_0 < v_-(q)$.  In this case, we have 
\begin{align*}
F_q(-1) = \frac{(q-1)^2}{(q-2)^2} -1 > -1 \qquad \mbox{and} \qquad F_q'(-1) = -4\,{\frac { \left( q-1 \right) ^{2}}{ \left( q-2 \right) ^{3}}} > 0. 
\end{align*}
The critical point $-1 - \sqrt{1-q}$ occurs at $v < -1$ and the other two critical points and the pole of $F_q$ are ordered as
\begin{align*}
-1 < -q < -1 + \sqrt{1-q} < -\frac{q}{2}.
\end{align*}
We have that
\begin{align*}
F_q(-q) = q^2 - 2q < -q
\end{align*}
so we conclude that $-1 < v_-(q) < -q$ and hence that $F'_q(v) > 0$ for all $-1 \leq v \leq v_-(q)$.  In particular, we have
\begin{align*}
v < F_q(v) < v_-(q) \qquad \mbox{for all} \qquad  -1 \leq v < v_-(q).
\end{align*}
This implies that for any $-1 \leq v_0 < v_-(q)$ we have $F_q^m(v_0) \nearrow v_-(q).$ 
\end{proof}

\begin{lemma}\label{LEM:FM_DYNAMICS}
For any real $q \neq 0$ we have
\begin{itemize}
\item[(i)] If $0 \leq v_0 < v_{c,{\rm PM-FM}}(q)$ then $F_q^m(v_0) \rightarrow 0$ as $m \rightarrow \infty$,
\item[(ii)] If $v_0 > v_{c,{\rm PM-FM}}(q)$ then $F_q^m(v_0) \rightarrow \infty$ as $m \rightarrow \infty$.
\end{itemize}
\end{lemma}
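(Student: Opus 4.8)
The plan is to reduce both parts to the monotonicity of $F_q$ on suitable real intervals. The starting point is the factorization
\begin{align*}
F_q(v) - v = \frac{v\,\bigl(v^3 - 2qv - q^2\bigr)}{(q+2v)^2},
\end{align*}
verified by a one-line computation; note that $v^3 - 2qv - q^2 = 0$ is exactly the fixed-point equation (\ref{dvpdvzero}). From the discriminant (\ref{disc3}) and the discussion in Section \ref{SUBSEC:RG_FP}, $v_{c,{\rm PM-FM}}(q)$ is the \emph{largest} real root of $v^3 - 2qv - q^2$ for every $q \neq 0$, and any other real roots (which occur only for $0 < q < \tfrac{32}{27}$, cf.\ Figure \ref{FIG_FIXED_POINTS}) are $\leq 0$. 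Since this cubic is monic and equals $-q^2 < 0$ at $v = 0$, it is negative on $(0, v_{c,{\rm PM-FM}}(q))$ and positive on $(v_{c,{\rm PM-FM}}(q), \infty)$. Feeding this into the identity (and using $(q+2v)^2 \geq 0$) gives the two inequalities that drive everything:
\begin{align*}
F_q(v) < v \ \ \text{for}\ \ 0 < v < v_{c,{\rm PM-FM}}(q), \qquad F_q(v) > v \ \ \text{for}\ \ v > v_{c,{\rm PM-FM}}(q),\ v \neq -\tfrac{q}{2},
\end{align*}
where at the pole $\lim_{v \to -q/2} F_q(v) = +\infty$. I would also invoke three facts recorded earlier: $F_q$ is continuous on $[0, v_{c,{\rm PM-FM}}(q)]$ (for $q<0$ the pole $-q/2$ lies strictly to the right of $v_{c,{\rm PM-FM}}(q)$, by the proof of Lemma \ref{LEM:REPELLING_FP}); $F_q(v) \geq -1$ whenever $v \geq -1$ (Lemma \ref{LEM:INVARIANT}); and $F_q(v) > 0$ for $v > 0$ when $q > 0$ (immediate from (\ref{fproduct})--(\ref{f2pos})).

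For part (i), take $0 \leq v_0 < v_{c,{\rm PM-FM}}(q)$, the case $v_0 = 0$ being trivial. If $q > 0$, then $0 < F_q(v_0) < v_0 < v_{c,{\rm PM-FM}}(q)$, so the entire orbit remains in $(0, v_{c,{\rm PM-FM}}(q))$, is strictly decreasing, and converges to a fixed point of $F_q$ in $[0, v_{c,{\rm PM-FM}}(q))$; the only such fixed point is $0$. If $q < 0$, then $-1 \leq F_q(v_0) < v_0 < v_{c,{\rm PM-FM}}(q)$, so either some iterate falls into $[-1,0]$ — whereupon Lemma \ref{LEM:PART_OF_AFM_DYNAMICS}(i) gives $F_q^m(v_0) \nearrow 0$ — or the whole orbit stays in $(0, v_{c,{\rm PM-FM}}(q))$, where it is strictly decreasing, bounded below, and again converges to the fixed point $0$.

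For part (ii), take $v_0 > v_{c,{\rm PM-FM}}(q)$. The inequalities show that $(v_{c,{\rm PM-FM}}(q), \infty]$ is forward invariant (a point landing on the pole is sent to $\infty$ and stays there) and that the orbit of $v_0$ is strictly increasing as long as it avoids the pole. Were the orbit bounded, it would converge to a finite limit $L$; $L$ cannot equal $-q/2$, since an orbit tending to the pole from below would have its next iterate tending to $+\infty$, so $F_q$ is continuous at $L$ and $L$ would be a fixed point of $F_q$ with $L > v_{c,{\rm PM-FM}}(q)$ — impossible, since $v_{c,{\rm PM-FM}}(q)$ is the largest real fixed point. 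Hence $F_q^m(v_0) \to \infty$.

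The delicate case — and the step I expect to be the main obstacle — is $q < 0$. There the pole $-q/2 > 0$ sits inside the interval $(v_{c,{\rm PM-FM}}(q), \infty)$ relevant to part (ii), so one must rule out the orbit accumulating on the pole; and $F_q$ is not monotone on $[0, v_{c,{\rm PM-FM}}(q))$ — it dips to the value $-1$ at the critical point $-1 + \sqrt{1-q} > 0$ — so in part (i) the orbit can drop into $[-1, 0)$, and closing the argument there requires Lemma \ref{LEM:PART_OF_AFM_DYNAMICS}(i) together with the location of the pole relative to $v_{c,{\rm PM-FM}}(q)$ rather than pure monotonicity. For $q > 0$ both parts reduce to the graphical analysis already present in the proof of Lemma \ref{LEM:REPELLING_FP}.
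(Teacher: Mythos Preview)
Your proof is correct and follows essentially the same strategy as the paper's: both derive the key inequalities $F_q(v)<v$ on $(0,v_{c,{\rm PM-FM}}(q))$ and $F_q(v)>v$ on $(v_{c,{\rm PM-FM}}(q),\infty)$, then combine monotone convergence with Lemma~\ref{LEM:INVARIANT}, Lemma~\ref{LEM:PART_OF_AFM_DYNAMICS}(i), and the location of the pole. Your explicit factorization of $F_q(v)-v$ is a clean way to obtain those inequalities, and your dichotomy for part~(i) with $q<0$ (orbit either stays positive and decreases to $0$, or drops into $[-1,0]$) is slightly more streamlined than the paper's three-interval decomposition $I_1\cup I_2\cup I_3$, which tracks the zero $-2+\sqrt{-2q+4}$ of $F_q$ explicitly; but the underlying logic is the same.
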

\noindent
We remark that in Claim (i) when $q < 0$ the orbit of $v_0$ may pass into the
interval $[-1,0)$.

\begin{proof}
We split the proof of Claim (i) into two cases:

{\bf Case 1:} $\bm{q > 0}$.
As in the first paragraph of the proof of Lemma \ref{LEM:REPELLING_FP}, we have 
\begin{align*}
0 < F_q(v) < v \quad \mbox{for} \quad 0 < v < v_{c,{\rm PM-FM}}(q).% \qquad  \mbox{and} \qquad F_q(v) > v \quad \mbox{for} \quad v > v_{c,{\rm PM-FM}}(q).
\end{align*} 
 If $0 \leq v_0 < v_{c,{\rm PM-FM}}(q)$ then the orbit $F_q^m(v_0)$ forms a
decreasing sequence, which is bounded below by $0$, and hence converges.  The limit must be a fixed point of $F_q(v)$
and, since $v_{c,{\rm PM-FM}}(q)$ is the only positive fixed point of
$F_q(v)$, we see that $F_q^m(v_0) \rightarrow 0$.  

%If $v_0 > v_{c,{\rm
%PM-FM}}(q)$ then the orbit $F_q^m(v_0)$ forms an increasing sequence.  If we
%did not have $F_q^m(v_0) \rightarrow \infty$ then $F_q^m(v_0)$ must 
%converge to a real number, which would then be a fixed point of $F_q(v)$ that
%is larger than $v_{c,{\rm PM-FM}}(q)$.  This is impossible because $v_{c,{\rm
%PM-FM}}(q)$ is the only positive fixed point of $F_q(v)$.
%This proves Claim 2.

{\bf Case 2:} {$\bm{q < 0}$}.
As in the second paragraph of the proof of Lemma \ref{LEM:REPELLING_FP}, the critical points, pole, and fixed points
of $F_q(v)$ occur with the following order:
\begin{align}\label{EQN:ORDER}
0 < -1 + \sqrt{1-q} < v_{c,{\rm PM-FM}}(q) < -q/2 < -q.
\end{align}
We will show that $[-1,v_{c,{\rm PM-FM}}(q))$ is invariant
under $F_q(v)$ and that for any $v_0 \in [-1,v_{c,{\rm PM-FM}}(q))$ we have
$F_q^m(v_0) \rightarrow 0$.  

Let us write $[-1,v_{c,{\rm PM-FM}}(q)) = I_1 \cup I_2 \cup I_3$ with
\begin{align*}
I_1 = [-1,0], \qquad I_2 = \left[0,-2 + \sqrt {-2\,q+4}\right], \qquad \mbox{and} \qquad I_3 = \left(-2 + \sqrt {-2\,q+4},v_{c,{\rm PM-FM}}(q)\right).
\end{align*}
For $v_0 \in I_1$ it follows from Lemma
\ref{LEM:PART_OF_AFM_DYNAMICS}(i) that $F_q^m(v_0) \rightarrow 0$.

The zeros of $F_q(v)$ are
\begin{align*}
0, -2 \pm \sqrt {-2\,q+4}.
\end{align*}
Since $0 < -1 + \sqrt{1-q} < -2 + \sqrt {-2\,q+4}$ and $F_q(-1 + \sqrt{1-q}) = -1$ we have $F(I_2) \subset I_1$.
Therefore, for any $v_0 \in I_2$ we have $F_q^m(v_0) \rightarrow 0$.  

Now consider $v_0 \in I_3$.  On this interval, we have $0 \leq F_q(v) < v$.
The orbit $F_q^m(v_0)$ cannot remain in $I_3$ because, if it did, the
orbit would form a decreasing sequence that is bounded below by $-2 + \sqrt
{-2\,q+4}$.  It would therefore converge to some fixed point $v_\bullet$ of
$F_q$ satisfying $0 < -2 + \sqrt {-2\,q+4} \leq v_\bullet < v_{c,{\rm
PM-FM}}(q)$, which is impossible because $v_{c,{\rm
PM-FM}}(q)$ is the unique positive fixed point of $F_q(v)$.  Therefore, for any $v_0 \in I_3$ there is
some iterate $m_0$ for which $F_q^{m_0}(v_0) \in I_2$, at which point the
reasoning in the previous paragraph implies that $F_q^m(v_0) \rightarrow 0$.

\vspace{0.1in}
We will now prove Claim (ii).  Note that pole $v = -q/2$ cannot occur on $[0,v_{c,{\rm PM-FM}}(q)]$; see Eqn.\ (\ref{EQN:ORDER}) for
the case that $q < 0$.  Therefore, since $F_q'(0) = 0$ and $F_q'(v_{c,{\rm PM-FM}}(q)) > 1$, by Lemma~\ref{LEM:REPELLING_FP}, we have that $F_q(v) > v$ for all $v > v_{c,{\rm PM-FM}}(q)$.
(When $q < 0$ we allow for the possibility that $v=-q/2$ is the pole of $F_q(v)$.)   This implies that if $v_0 > v_{c,{\rm PM-FM}}(q)$ the sequence
of iterates $F_q^m(v_0)$ is increasing.  It must converge to infinity as there is no fixed point of $F_q(v)$ that
is larger than $v_{c,{\rm PM-FM}}(q)$.
\end{proof}

We are now ready to prove Theorems \ref{THM:AFM} and \ref{THM:FM}.  We will begin with Theorem \ref{THM:FM} as the proof is somewhat easier.

\subsection{Proof of Theorem \ref{THM:FM}.}
Recall from Section \ref{SUBSEC:RG_FP} that the fixed points of $F_q(v)$ other
than $0$ and $\infty$ are roots of the equation
\begin{align*}
-v^3+q^2+2qv = 0.
\end{align*}
Solving for the values of $q$ for which $F_q(v)$ has a fixed point at $v_0$
yields the two solutions
\begin{align*}
q_\pm(v_0) = \left( -1 \pm \sqrt {1+v_0} \right) v_0.
\end{align*}
Since we consider $v_0 > 0$, the resulting fixed point will necessarily be
$v_{c,{\rm PM-FM}}(q)$, as it is the only positive fixed point of $F_q(v)$.

If $q < q_-(v_0)$ or $q > q_+(v_0)$ then $0 < v_0 < v_{c,{\rm PM-FM}}(q)$ and
$F_q^m(v_0) \rightarrow 0$ by Lemma \ref{LEM:FM_DYNAMICS}.  Therefore,
\begin{align*}
\left(-\infty, q_-(v_0)\right) \cup \left(q_+(v_0),\infty\right) \in \mathcal{P}^0_q(v_0)
\end{align*}
and Proposition \ref{PROP_DESCRIBING_FIGURES} implies that $B_q(v_0)$ does not
intersect such points.  (Such points are colored white in Figures
\ref{bq_v1_figure} - \ref{bq_v99_figure}.)

When $q = q_\pm(v_0)$ the marked point $a(q) = v_0$ hits the fixed point
$v_{c,{\rm PM-FM}}(q)$, which is repelling by Lemma \ref{LEM:REPELLING_FP}.
Since there are values of $q$ for which $a(q) \neq v_{c,{\rm PM-FM}}(q)$ this
corresponds to the marked point $a(v_0)$ being mapped by $F_q^0(v)$
non-persistently to the repelling fixed point $v_{c,{\rm PM-FM}}(q)$.
Therefore, the parameters $q_\pm(v_0)$ are active parameters, and hence
$q_\pm(v_0) \in B_q(v_0)$, by Lemma \ref{LEM:ACTIVE_SUBSET_BQ}.

Finally, if $q_-(v_0) < q < q_+(v_0)$ then $v_0 > v_{c,{\rm PM-FM}}(q)$ and
$F_q^m(v_0) \rightarrow \infty$ by Lemma \ref{LEM:FM_DYNAMICS}.  Therefore,
\begin{align*}
\left(q_-(v_0),q_+(v_0)\right) \in \mathcal{P}^\infty_q(v_0)
\end{align*}
and Proposition \ref{PROP_DESCRIBING_FIGURES} implies that $B_q(v_0)$ does not intersect such points.
(Such points are colored blue in Figures \ref{bq_v1_figure} - \ref{bq_v99_figure}.)
\qed

\subsection{Proof of Theorem \ref{THM:AFM}.}

Let $-1 \leq v_0 < 0$. Lemma \ref{LEM:PART_OF_AFM_DYNAMICS}(i) gives for any $q
< 0$ that $F_q^m(v_0) \rightarrow 0$ and hence $(-\infty,0) \in
\mathcal{P}_q^0(v_0)$.  Proposition \ref{PROP_DESCRIBING_FIGURES} implies that
$B_q(v_0)$ does not intersect such points.

We will now show that $q = 0$ is in $B_q(v_0)$.  This requires some care
because the degree of $F_q(v)$ drops from $4$ to $2$ at $q=0$.  (It is why we
omitted $q=0$ from the parameter space in the discussion from Section
\ref{complex_dynamics}.)  For this reason, Lemma \ref{LEM:ACTIVE_SUBSET_BQ}
does not immediately apply to $q=0$.  Instead, we will show for any $\epsilon >
0$ that there is a parameter $q \neq 0$ with $|q| < \epsilon$ that is active
for $a(q) \equiv v_0$ under~$F_q^m$.  Such a point is in $B_q(v_0)$ by Lemma
\ref{LEM:ACTIVE_SUBSET_BQ}, and, since $\epsilon > 0$ is arbitrary, this will
prove that $0 \in B_q(v_0)$.

As seen in the previous paragraph, any real point $q < 0$ is in
$\mathcal{P}_q^0(v_0)$.  Note that as $q \searrow 0$ the fixed point
$v_-(q)$ increases to $0$; see Figure \ref{FIG_FIXED_POINTS}.  Therefore, we
can choose $q > 0$ sufficiently small so that $-1 \leq v_0 < v_-(q)$.  It then
follows from Lemma \ref{LEM:PART_OF_AFM_DYNAMICS}(ii) that $F_q^m(v_0)
\rightarrow v_-(q)$.  We conclude that on any arbitrarily small annulus $0 <
|q| < \epsilon$ there are two different passive behaviors for the marked point
$a(q) \equiv v_0$:  (i) convergence to $0$ and (ii) convergence to $v_-(q) \neq
0$.  Therefore, there must be active parameters in this annulus.

We will now prove that $q_c(v_0)=\left( -2-\sqrt {-v_0} \right) v_0$ is the largest point where $B_q(v_0)$ hits the real $q$-axis.
One finds that
\begin{align*}
\frac{\partial F_q(v_0)}{\partial q} = -2\,{\frac {{v_0}^{2} \left( {v_0}^{2}+q+2\,v_0 \right) }{ \left( q+2\,v_0 \right) ^{3}}},
\end{align*}
which is negative for $q > -2v_0$.  We have
\begin{align*}
\lim_{q \rightarrow -2v_0} F_q(v_0) = \infty \qquad \mbox{and} \qquad \lim_{q \rightarrow \infty} F_q(v_0) = 0.
\end{align*}
Since $v_{c,{\rm PM-FM}}(q)$ is positive and increasing for $q > 0$ the Intermediate Value Theorem gives that
for any $-1 \leq v_0 < 0$ there is a unique $q_c(v_0) > -2v_0$ for which 
\begin{align*}
F_{q_c(v_0)}(v_0) = v_{c,{\rm PM-FM}}(q).
\end{align*}
In particular, when $q=q_c(v_0)$ the marked point $a(q) \equiv v_0$ lands
non-persistently on the repelling fixed point $v_{c,{\rm PM-FM}}(q)$ when
mapped by $F_q(v)$.  This implies that $q_c(v_0)$ is an active parameter for
$a(q) \equiv v_0$ and hence that $q_c(v_0) \in B_q(v_0)$, by Lemma
\ref{LEM:ACTIVE_SUBSET_BQ}.

Moreover, $0 < F_{q_c(v_0)}(v_0) < v_{c,{\rm PM-FM}}(q)$ for $q > q_c(v_0)$.
Lemma \ref{LEM:FM_DYNAMICS}(i) then implies that such $q$ are in
$\mathcal{P}^0_q(v_0)$ and Proposition \ref{LEM:BQ_NOT_SUBSET_P} then implies that they are not in
$B_q(v_0)$.

The formula $q_c(v_0) = \left( -2-\sqrt {-v_0} \right) v_0$ is obtained by solving $F_q^2(v_0) = F_q(v_0)$ for $q$ and selecting
the branch of the solutions that is larger than the pole $q=-2v_0$.

\vspace{0.1in}

It remains to show that there is a sequence of real parameters
$q_k(v_0) \in B_q(v_0)$ that converge to $q_\infty(v_0)$; See Eq.
(\ref{EQN:QINFINITY}).  

\vspace{0.1in}
\noindent
{\bf Case 1:} $-1 \leq v_0 \leq -\frac{8}{9}$:

\vspace{0.1in}
\noindent
When $q = \frac{32}{27}$ a direct calculation shows that $F_q^m(v_0) \leq -\frac{8}{9} = v_-(q) = v_+(q)$ for
all $m \geq 0$ and $F_q^m(v_0) \rightarrow -\frac{8}{9}$ as $m \rightarrow
\infty$. When $q=3$ we have
\begin{align*}
F_3(v) = {\frac {{v}^{2} \left( {v}^{2}+4\,v+6 \right) }{ \left( 3+2\,v
 \right) ^{2}}} \geq 0 \qquad \mbox{for all} \qquad v \in \mathbb{R}.
\end{align*}
We claim that for any $k \geq 2$ the function $q \mapsto F^k_q(v_0)$ has at
least one pole in the interval $\left(\frac{32}{27},3\right]$.  To see this,
note that if $F^{k-1}_q(v_0)$ has a pole $q_\bullet \in
\left(\frac{32}{27},3\right]$, then $q_\bullet$ is also a pole of
$F^{k}_q(v_0)$ since $F_q(\infty) = \infty$ for any $q$.  Otherwise, if
$F^{k-1}_q(v_0)$ has no pole in the interval $\left(\frac{32}{27},3\right]$
then
\begin{align*}
F_{\frac{32}{27}}^{k-1}(v_0) \leq -\frac{8}{9} < -\frac{q}{2}  \qquad \mbox{and} \qquad F_3^{k-1}(v_0) \geq 0 > -\frac{q}{2}
\end{align*}
so that the Intermediate Value Theorem implies that there is some $q_\bullet$ with $F^{k-1}_{q_\bullet}(v_0) = -\frac{q}{2}$.
Since $-\frac{q}{2}$ is a pole of $F_q(v_0)$ this implies that $F_{q_\bullet}^{k}(v_0) = \infty$.

For any $k \geq 2$ 
let $q_{k}^\infty(v_0)$ be the smallest pole of $F_q^k(v_0)$ in the interval
$\left(\frac{32}{27},3\right]$.  
We have
\begin{align*}
F_{32/27}^k(v_0) \leq  -\frac{8}{9} \qquad \mbox{and}  \qquad \lim_{q \rightarrow q_{k}^\infty(v_0)}F_q^k(v_0) \rightarrow +\infty,
\end{align*}
with the fact that the limit is $+\infty$ (not $-\infty$) a consequence of Lemma \ref{LEM:INVARIANT}.
For any $q \in
\left[\frac{32}{27},3\right]$ the repelling fixed point $v_{c,{\rm PM-FM}}(q)$
remains in $[0,3]$. 
Therefore, the Intermediate Value Theorem implies that there exists $q_k(v_0) \in \left(\frac{32}{27},q_{k}^\infty(v_0)\right)$ with 
\begin{align*}
F_{q_k(v_0)}^k(v_0) = v_{c,{\rm PM-FM}}(q_k(v_0)).
\end{align*}
As this does not hold identically for all $q \in
\left(\frac{32}{27},q_{k}^\infty(v_0)\right)$, we conclude that the marked point
$a(q)~\equiv~v_0$ is mapped by $F_q^k$ non-persistently at $q_k(v_0)$.  It
follows from  Lemma \ref{LEM:ACTIVE_SUBSET_BQ} that $q_k(v_0)~\in~{\mathcal B}_q(v_0)$.

For any $q_0 > \frac{32}{27}$ there is a definite constant $C(q_0)$ such that 
\begin{align*}
F_q(v) > v + C(q_0) \qquad \mbox{for all $q > q_0$ and all $-1 \leq v < -\frac{q}{2}$.}
\end{align*}
Meanwhile, $F_q(v) > 0$ for any $v > -\frac{q}{2}$.  Therefore,
there is a definite $K(q_0)$ such that for any $-1 \leq v_0 < 0$ we have $F_q^{K(q)}(v_0) \geq 0$.  Since $[0,\infty)$ is 
invariant under $F_q$ for $q > 0$ and the pole $-\frac{q}{2}$ is negative, this
implies that for any $k > K(q)$ we have
\begin{align*}
\frac{32}{27} < q_{k}^\infty(v_0) < q_0.
\end{align*}
Since $q_0 > \frac{32}{27}$ was arbitrary, we conclude that $q_k(v_0)
\rightarrow \frac{32}{27}$ as $k \rightarrow \infty$.

\vspace{0.1in}
\noindent
{\bf Case 2:} $-\frac{8}{9} < v_0 < 0$:

\vspace{0.1in}
\noindent
A straightforward modification of the proof from Case 1 applies, after replacing $q = \frac{32}{27}$ with the unique value of $q = \left( -1-\sqrt {1+v_0} \right) v_0$ for which
$v_-(q) = v_0$.

\qed

% ========================================================================

\section{Zeros in the $v$ Plane }
\label{yzeros}

In addition to the zeros of $Z(D_m,q,v)$ in the $q$ plane for fixed $v=v_0$ and
their accumulation locus ${\mathcal B}_q(v_0)$ in the limit $m \to \infty$, it is
also of interest to investigate the zeros of $Z(D_m,q,v)$ in the $v$ plane for
fixed $q=q_0$ and their accumulation locus ${\mathcal B}_v(q_0)$ in this limit $m
\to \infty$.  We present some new results on these in this section.  

As was noted in
\cite{ddi}, these zeros form the the Julia set of the transformation $F_q(v)$
\cite{juliaset}.  This can be understood as follows.  Assume that, for a fixed
$q$, $v_0'$ is a zero of $Z(D_m,q,v)$, i.e., $Z(G_m,q,v_0')=0$. Then from
Eqs.\ (\ref{zdhl})-(\ref{vp}), it follows that
\beq
Z(D_{m+1},q,v_0) = Z(D_m,q,v_0')(q+2v_0)^{2 \cdot 4^m} \ . 
\label{zdhlv0}
\eeq
Hence, $Z(D_{m+1},q,v_0)=0$ if $v_0'=v_0$. This connects the set of points
$v_0'$ for which $Z(D_m,q,v_0)=0$ to the set of values of $v$ that are left
invariant by the transformation $F_q(v)$.  Since a deviation of a given $v$
from this invariant set will generically be multiplied by successive iterations
of $F_q(v)$, this yields the identification of ${\mathcal B}_v$ with the Julia set
of $F_q(v)$. Previous studies have presented zeros in the $y$ plane for low
values of $q$.

Here we extend this study of zeros of $Z(D_m,q,v)$ in $v$ with new
results for large positive and negative values of $q$. Although
negative values of $q$ do not have direct significance for the
physical $q$-state Potts model, they are of interest in investigating
the mathematical properties of the full $Z(D_m,q,v)$ polynomial.  
In \cite{lq} we showed that ${\mathcal B}_v$ crosses the positive $v$
axis at
\beq
v \simeq |q|^{2/\Delta_{eff}} \ .
\label{yc_largeq}
\eeq
Since, as noted above, $\Delta_{eff}$ is close to 3 for $D_4$, it follows that
the zeros to cross the positive $v$ axis at $v \simeq q^{2/3}$ for $|q| \gg 1$.
In agreement with this, we show our calculations of zeros of $Z(D_4,q,v)$ in
the $y$ plane for the large positive values $q=10^2$, $10^3$, and the large
negative value $q=-10^2$ in Figs. 
\ref{dhlyplotq100}-\ref{dhlyplotq1000}. (The number of zeros displayed in
each of these figures, namely 256, is sufficiently great that the plotting
program yields what appears to be a curve, although it is really a discrete
set of zeros.)

There is a striking change in the structure of the zeros as $q$ grows
to values that are $\gg 1$ \cite{ddi,bambihu89,yangzeng}. In particular, the 
Hausdorff dimensionality of the Julia set approaches 1 from above 
asymptotically as $q \to \infty$ \cite{bambihu89,gao11}. 
Related to this, the global pattern of zeros becomes a single Jordan curve 
rather than the more complicated patterns observed for
small $q$.  The fact that this Jordan curve is not too different from a circle
is similar to the pattern of complex-temperature zeros for the Potts
model on regular lattices in the limit of large $q$ \cite{hungwu,lq}.

% ================== plots of zeros in the v plane ======================

% Fig. 14
\begin{figure}
\begin{center}
\scalebox{1.2}{
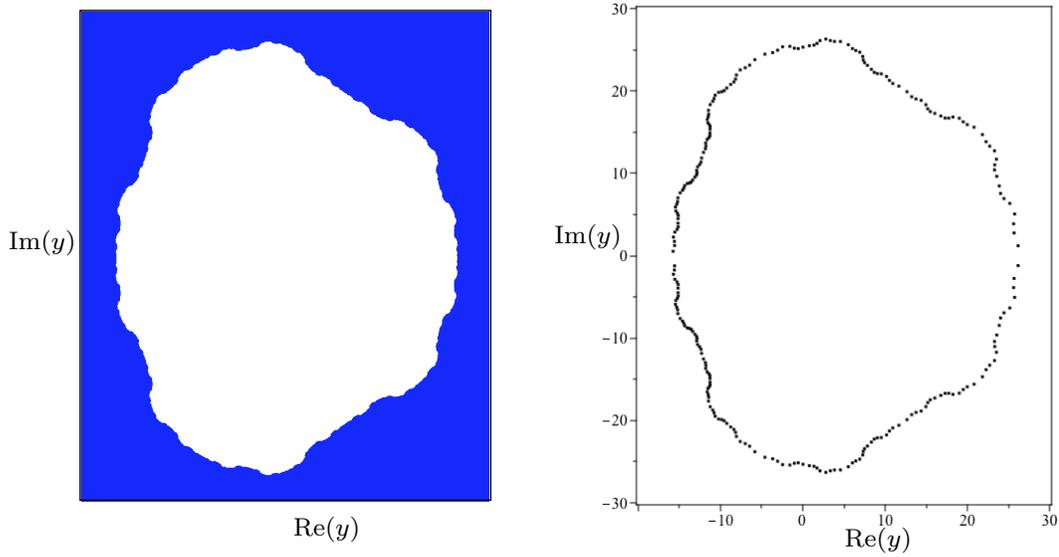
}
\end{center}
\caption{Left: Region diagram for $D_\infty$ in the complex $y=v+1$ plane for $q=100$.  
    The locus ${\mathcal B}_v(100)$ is comprised of the
    boundary between the white and blue.  (See text for further discussion.)
Right: Zeros of the partition function $Z(D_4,100,y)$ (256 zeros).
    Both left and right figures depict
    $-20 < {\rm Re}(y) < 30$ and $-30~<~{\rm Im}(q)~<~30$. }
\label{dhlyplotq100}
\end{figure}
%
% ------------------------------------------------------------------------

% Fig. 15
\begin{figure}
\begin{center}
\scalebox{1.2}{
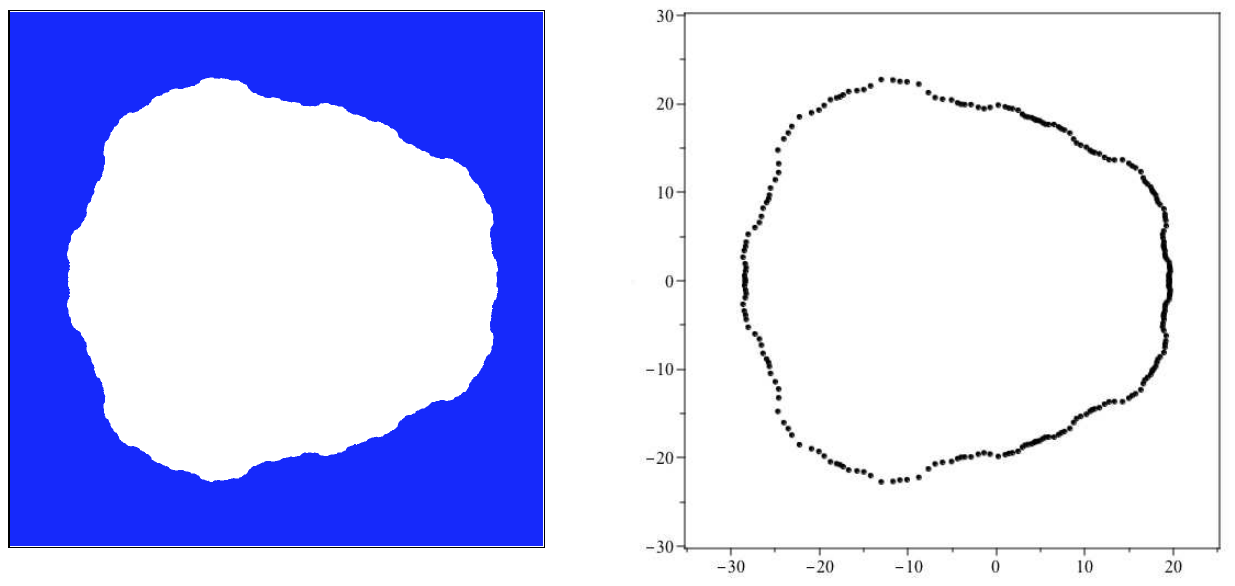
}
\end{center}
\caption{Left: Region diagram for $D_\infty$ in the complex $y=v+1$ plane for $q=-100$.  
    The locus ${\mathcal B}_v(-100)$ is comprised of the
    boundary between the white and blue.  (See text for further discussion.)
Right: Zeros of the partition function $Z(D_4,-100,y)$ (256 zeros).
    Both left and right figures depict
    $-35 < {\rm Re}(y) < 25$ and $-30~<~{\rm Im}(q)~<~30$. }
\label{dhlyplotqminus100}
\end{figure}
%

% Fig. 16 
\begin{figure}
\begin{center}
\scalebox{1.2}{
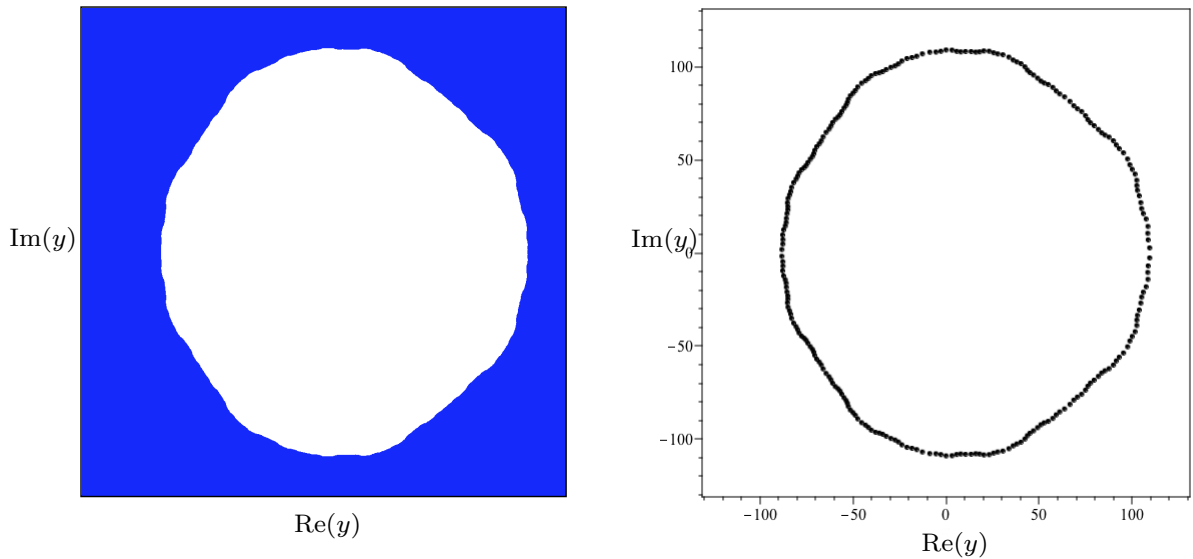
}
\end{center}
\caption{Left: Region diagram for $D_\infty$ in the complex $y=v+1$ plane for $q=1000$.
    The locus ${\mathcal B}_v(1000)$ is comprised of the
    boundary between the white and blue.  (See text for further discussion.)
Right: Zeros of the partition function $Z(D_4,1000,y)$ (256 zeros).
    Both left and right figures depict
    $-130 < {\rm Re}(y) < 130$ and $-130 < {\rm Im}(q) < 130$. }
\label{dhlyplotq1000}
\end{figure}
%

% ========================================================================

\section{Conclusions }
\label{conclusions}

In this paper we have derived several exact results on the continuous
accumulation sets of zeros ${\mathcal B}_q(v_0)$ of the Potts model on the
Diamond Hierarchical Lattice, $D_\infty$, in the complex $q$-plane at various
fixed values of a temperature-like Boltzmann variable $v = v_0$.  We have
applied methods from complex dynamics to determine the region diagram in which
the infinite iteration of the renormalization group transformation exhibits
different behavior in the various regions and related them to the locus
${\mathcal B}_q(v_0)$.  We have also used these techniques to prove rigorous
results (Theorems~\ref{THM:AFM} and \ref{THM:FM}) about the intersection of
${\mathcal B}_q(v_0)$ with the real $q$-axis.  For the chromatic zeros (i.e.,
partition function zeros of the zero-temperature Potts antiferromagnet), we
have shown that the locus ${\mathcal B}_q(v_0)$ crosses the real $q$ axis at
$q=0$, $q=3$, $q=32/27$, and an infinite set of points between 32/27 and the
value $q=q_1 \simeq 1.639$ given analytically in Eq. (\ref{q1}). 
A similar behavior occurs for any finite-temperature Potts antiferromagnet on
the DHL.  For the finite-temperature Potts ferromagnet on the DHL, the
locus ${\mathcal B}_q(v_0)$ crosses the real $q$ axis at only two points. We
have also studied region diagrams and the structure of ${\mathcal B}_q(v_0)$
for the finite-temperature Potts antiferromagnet and for the Potts ferromagnet
on the DHL and compared them with the patterns of zeros calculated for $D_m$
with $m=4$.  Another result of our present work is that as $|v| \to \infty$,
${\mathcal B}_q$ approaches a circular form with $|q| \sim |v|^{3/2}$.
Finally, we determine properties of the locus ${\mathcal B}_v$ for $q \to \pm
\infty$.

% ========================================================================

\noindent
{\bf Acknowledgments:}
This research was partly supported by the Taiwan Ministry of Science
and Technology grant MOST 103-2918-I-006-016 (S.-C.C.), by
U.S. National Science Foundation 
grant No. NSF-DMS-1348589 (R.R.), and by the U.S. National Science Foundation
grants No. NSF-PHY-1620628 and NSF-PHY-1915093 (R.S.).

% ========================================================================

\begin{appendix}

% ========================================================================

\section{On the Case $q=0$}
\label{q0_appendix}

Besides the very delicate potential issues about relating ${\mathcal B}_q(v_0)$ with ${\mathcal B}_v(q_0)$
that we discussed in Section \ref{SUBSEC:C2}, there is an additional problem that occurs when $q=0$, which we explore it in this appendix.
It is related to the more general fact
that in defining the free energy from the partition function, 
it is necessary to take into account a
noncommutativity in the limits $n \to \infty$ and $q \to q_s$ at special values
$q=q_s$ (which may include $q_s=0,1...,\chi(G)$, where $\chi(G_m)$ denotes the
chromatic number of $G_m$).  This was discussed in \cite{w} for the chromatic
polynomial and in \cite{a} for the full free energy.  The noncommutativity is  
\beq
\lim_{n \to \infty} \lim_{q \to q_s} Z(G_m,q,v)^{1/n} 
\ne \lim_{q \to q_s} \lim_{n \to \infty} Z(G_m,q,v)^{1/n} \ .
\label{fnoncomm}
\eeq
As in the text, we denote $G_\infty$ as the formal limit of an iterative 
family of $n$-vertex graphs $G_m$ as $m \to \infty$ and hence $n \to \infty$. 
Because of this noncommutativity, the definitions of both the free energy 
in (\ref{fz}) and ${\mathcal B}_v(q)$ require that the order of limits be specified. 

A simple example will illustrate this.  Consider the Potts model on the 
$n$-vertex circuit graph,~$C_n$.  An elementary calculation yields, for
the partition function, the result
\beq
Z(C_n,q,v) = (q+v)^n + (q-1)v^n.
\label{zcn}
\eeq
As a special case of Eq.\ (\ref{cluster}), $Z(C_n,q,v)$ has $q$ as a factor, so
$Z(C_n,0,v)=0$. As an explicit example of the
noncommutativity (\ref{fnoncomm}) with $q_s=0$, 
\beq
\lim_{n \to \infty} \lim_{q \to 0} Z(G_m,q,v)^{1/n} = 0
\label{fnoncomm_q0}
\eeq
while, in contrast, if one takes the limit $n \to \infty$ first, before taking
$q \to 0$, then 
\beq
\lim_{n \to \infty} Z(G_m,q,v)^{1/n} = 
\begin{cases}
q+v & \mbox{if $|q+v| \ge |v|$} \\ 
v   & \mbox{if $|v|  \ge |q+v|$  \ .}
\end{cases}
\label{lim_ninf}
\eeq
Then finally taking $q \to 0$, one obtains
\beq
\lim_{q \to 0} \lim_{n \to \infty} Z(G_m,q,v)^{1/n} = v \ . 
\label{fnoncomm_ninf}
\eeq
  The loci ${\mathcal B}_v(q)$ and ${\mathcal
  B}_q(v)$ are given by the solution of the condition of equality of dominant
terms in $Z$, namely
\beq
|q+v|=|v| \ .
\label{qvv}
\eeq
In the $q$ plane, the solution locus, ${\mathcal B}_q(v)$, is a circle with radius
$|v|$. If $v$ is real, this circle is centered at $q=-v$ and crosses the real
$q$ axis at the two points $q=-2v$ and $q=0$. Thus, the maximum (finite) point
at which this locus ${\mathcal B}_q(v)$ crosses the real $q$ axis is 
\beq
q_c(C_\infty) =
\begin{cases}
-2v & \mbox{if $v \le 0$} \\ 
0  & \mbox{if $v\ge 0$  \ .}
\end{cases}
\label{qcc}
\eeq
For example, in the case $v=-1$ where $Z(C_n,q,-1)=P(C_n,q)$ is the chromatic
polynomial and ${\mathcal B}_q(-1)$ is the continuous accumulation set of the
chromatic zeros, the locus ${\mathcal B}_q(-1)$ is the unit circle $|q-1|=1$,
which crosses the real $q$ axis at $q=2$ and $q=0$. 

To show the connection with ${\mathcal B}_v(q)$ or equivalently, ${\mathcal B}_y(q)$,
with $v=y-1$, we first note that the solution of Eq.\ (\ref{qvv}) in the complex
$v$ plane is the infinite vertical line crossing the real $v$ axis at $v=-q/2$,
i.e., 
\beq
{\mathcal B}_v(q): \quad v = -\frac{q}{2} + i \lambda, \quad \lambda \in {\mathbb R}.
\label{bvcn}
\eeq
Equivalently, in the complex $y$ plane, ${\mathcal B}_y(q)$ is the infinite
vertical line crossing the real $y$ axis at $y=1-(q/2)$:
\beq
{\mathcal B}_y(q): \ y = 1-\frac{q}{2} + i \lambda, \quad \lambda \in {\mathbb R}.
\label{bycn}
\eeq
It is convenient to change variables to a variable in terms of which the locus
is compact. We choose $\eta=y^{-1}=e^{-K}$; in the complex plane of the variable $\eta$,
if $q \ne 2$, then the locus ${\mathcal B}_\eta (q)$ is a circle:
\beq
{\mathcal B}_\eta(q): \ \eta = \frac{1}{q-2}(-1+e^{i\omega}), \quad 0 \le \omega <
2\pi.
\label{byinv}
\eeq
As is evident from Eq.\ (\ref{byinv}), this circle crosses the real $\eta$
axis at $\eta=0$ and $\eta=-2/(q-2)$.  If $q=2$, then ${\mathcal B}_y(2)$ is the
line $y=i\lambda$ with $\lambda \in {\mathbb R}$, i.e., the imaginary $y$
axis. This locus is invariant under $y \to 1/y$, so the locus 
${\mathcal B}_\eta(q)$ is also the imaginary axis in the $\eta$ plane. 

Here we observe that the point $q=0$ is in the locus ${\mathcal B}_q(v)$
for all $v$, in particular, for $v=-1$, i.e., $y=0$, but $y=0$ is not
in the locus ${\mathcal B}_y(q)$ for $q=0$. Explicitly, if $q=0$, then
${\mathcal B}_y(q=0)$ is the vertical line in the $y$ plane crossing the
real $y$ axis at $y=1$; equivalently, ${\mathcal B}_\eta(q=0)$ is the
circle $|\eta-(1/2)|=1/2$ crossing the real $\eta$ axis at $\eta=0$
and $\eta=1$.

\end{appendix}

% ===========================================================================

\end{document}